\theoremstyle{plain}
\newtheorem{thm}{\protect\theoremname}
\newenvironment{proof}[1][\protect\proofname]{\par
	\normalfont\topsep6\p@\@plus6\p@\relax
	\trivlist
	\itemindent\parindent
	\item[\hskip\labelsep\scshape #1]\ignorespaces
}{%
	\endtrivlist\@endpefalse
}
\providecommand{\proofname}{Proof}
\theoremstyle{plain}
\newtheorem{lem}[thm]{\protect\lemmaname}
\theoremstyle{definition}
\newtheorem{defn}[thm]{\protect\definitionname}
\definecolor{ForestGreen}{rgb}{0.1333,0.5451,0.1333}
\definecolor{DarkRed}{rgb}{0.8,0,0}
\definecolor{Red}{rgb}{1,0,0}
\let\oldlem\lem
\renewcommand{\lem}{
\crefalias{thm}{lemma}
\oldlem
}
\let\olddefn\defn
\renewcommand{\defn}{%
\crefalias{thm}{definition}
\olddefn
}
\let\ref\Cref
\providecommand{\definitionname}{Definition}
\providecommand{\lemmaname}{Lemma}
\providecommand{\theoremname}{Theorem}
\begin{document}
\global\long\def\id{\openone}%

\global\long\def\ket#1{\left|#1\right\rangle }%

\global\long\def\bra#1{\left\langle #1\right|}%

\global\long\def\braket#1#2{\left\langle #1|#2\right\rangle }%

\global\long\def\per#1#2{P_{#2}^{#1}}%

\global\long\def\ketbra#1#2{\ket{#1}\bra{#2}}%

\global\long\def\tr#1#2{\text{Tr}_{#1}\left[#2\right]}%

\global\long\def\norm#1{\left\Vert #1\right\Vert }%

\global\long\def\vcon{c}%

\global\long\def\vrots{n}%

\global\long\def\U#1{U_{#1}}%

\global\long\def\Uc#1{\mathcal{U}_{#1}}%

\global\long\def\eV#1{\ensuremath{#1}}%

\global\long\def\rdm{D}%

\global\long\def\match#1#2{\left|#1\cap#2\right|}%

\global\long\def\wg#1{\mathrm{Wg}^{\mathrm{\mathcal{U}}_{n}}\left(#1\right)}%

\title{Classical shadows of fermions with particle number symmetry}
\author{Guang Hao Low}
\affiliation{Microsoft Quantum, Redmond, WA 98052, USA}
\begin{abstract}
We consider classical shadows of fermion wavefunctions with $\eta$
particles occupying $n$ modes. We prove that all $k$-Reduced Density
Matrices (RDMs) may be simultaneously estimated to an average variance of
$\epsilon^{2}$ using at most $\binom{\eta}{k}\big(1-\frac{\eta-k}{n}\big)^{k}\frac{1+n}{1+n-k}/\epsilon^{2}$
measurements in random single-particle bases that conserve particle
number, and provide an estimator for any $k$-RDM with $\mathcal{O}(k^2\eta)$ classical complexity.
Our sample complexity is a super-exponential improvement over the $\mathcal{O}(\binom{n}{k}\frac{\sqrt{k}}{\epsilon^{2}})$
scaling of prior approaches as $n$ can be arbitrarily larger than
$\eta$, which is common in natural problems. Our method, in the worst-case
of half-filling, still provides a factor of $4^{k}$ advantage in
sample complexity, and also estimates all $\eta$-reduced density
matrices, applicable to estimating overlaps with all single Slater
determinants, with at most $\mathcal{O}(\frac{1}{\epsilon^{2}})$
samples, which is additionally independent of $\eta$. 
\end{abstract}
\maketitle
\tableofcontents{}

\section{Introduction}

Fermion wavefunctions are notoriously complex as their dimension scales
exponentially with particle number. Overcoming the curse of dimensionality
motivates simulating fermions as a most relevant application of quantum
computing \citep{cao19,vonBurg2020carbon}. A primary task in this
application is the characterization, or state tomography of the simulated
wavefunction. It often suffices to estimate $k$-reduced density matrices
($k$-RDMs) as arbitrary observables such as the Hamiltonian of interacting
electrons, dipole moment, or other multi-point correlation functions
\citep{Huggins2022FermionicQMC} are supported by this basis.

State tomography protocols based on measuring quantum states in random
bases \citep{Jeongwan2017tomography,Elben2020topologicalinvariants}
have recently advanced dramatically. The celebrated concept of classical
shadows \citep{Huang2020shadowestimation,Chen2021robustshadow} tailors
these random bases to estimate a targeted family of observables to
some variance $\epsilon^{2}$ with optimal sample complexity. For
instance, random single-qubit bases on $n$ qubits simultaneously
estimate \emph{all} $k$-local Pauli operators from the same set of
$\mathcal{O}(3^{k}/\epsilon^{2})$ samples. Even though there are
$3^{k}\binom{n}{k}$ such Paulis, the cost of estimation incredibly
depends only on the size of their support and is independent of $n$.

By representing fermions with qubits through, say, the Jordan-Wigner
or Bravyi-Kitaev encoding \citep{Bravyi2002Fermion}, state tomography
on fermions is reduced to one of the many optimal schemes for qubit
tomography. The most compact encoding \citep{Jiang2020fermionmapternery}
then estimates all $k$-RDMs on $n$ fermion modes using only $\mathcal{O}((2n)^{k}/\epsilon^{2})$
samples. As bootstrapping to a qubit-based scheme incurs substantial
overhead, directly randomizing the algebra of fermions, such as with
fermionic gaussians, further reduces the samples required to only
$\binom{n}{k}\sqrt{\pi k}/\epsilon^{2}$ \citep{Zhao2020fermionshadows}.
As there are $\mathcal{O}(n^{2k})$ independent $k$-RDMs and $\mathcal{O}(n^{k})$
mutually commuting observables, this result is optimal, but does not
realize the super-exponential improvement seen in the qubit setting.

Discovering an analogous super-exponential improvement would unlock
for fermions many applications found in the toolbox of randomized
qubit measurements \citep{Elben2022RandomziedToolbox}. A missing
ingredient is the particle number symmetry present in many systems
of interest, ranging from electronic structure to the Hubbard model.
The design of random bases targeting number-conserving $k$-RDMs should
account for this crucial prior that the fermion wavefunction has a
definite number of $\eta$ particles occupying these $n$ modes. As
the condition that $n\gg\eta$ also occurs naturally, such as in modeling
dynamical correlation \citep{Reiher2016Reaction} or in plane-wave
simulations \citep{Babbush2018encoding,Low2018IntPicSim}, a scheme
that scales with $\eta$ instead of $n$ is highly desirable. However,
prior approaches target all $k$-RDMs, both number-conserving and
not. One solution is to choose random bases that are also number-conserving,
but this appears challenging. For instance, prior attempts \citep{Zhao2020fermionshadows} required bootstrapping to qubit protocols
to achieve tomographic completeness and ultimately achieved the same
$\mathcal{O}(n{}^{k}/\epsilon^{2})$ sample complexity, or was able to prove scaling with $\eta$ for only for the $k=1$ case~\cite{Naldesi2023Fermionic}.

In this work, we successfully exploit particle number symmetry. We
find random bases that simultaneously estimate all $\binom{n}{k}^{2}$
independent number-conserving $k$-RDMs 
\begin{equation}
\langle\rdm_{\vec{q}}^{\vec{p}}\rangle=\tr{}{\rdm_{\vec{q}}^{\vec{p}}\rho},\quad\rdm_{\vec{q}}^{\vec{p}}\doteq a_{p_{1}}^{\dagger}\cdots a_{p_{k}}^{\dagger}a_{q_{k}}\cdots a_{q_{1}},
\end{equation}
of any quantum state $\rho$ with an average variance of $\epsilon^{2}$
using only $\mathcal{O}(\eta^{k}/\epsilon^{2})$ samples, where the
fermion operators satisfy the usual anti-commutation relations $\{a_{j},a_{k}^{\dagger}\}=\delta_{jk}$,
$\{a_{j},a_{k}\}=\{a_{j}^{\dagger},a_{k}^{\dagger}\}=0$. As the number
of fermion modes can be arbitrarily larger than the number of particles,
this is a super-exponential improvement in complete analogy to the
qubit setting. 

\begin{thm}
\label{thm:main_theorem}For any $\eta$-particle $n$-mode fermion
state $\rho$, let the unitary single-particle basis rotation be 
\begin{equation}
U_{\eta}(u)\doteq e^{\sum_{p,q=1}^{n}\left[\ln(u)\right]_{pq}a_{p}^{\dagger}a_{q}}\in\mathbb{C}^{\binom{n}{\eta}\times\binom{n}{\eta}},\label{eq:fermion_randomization}
\end{equation}
where $u\in\mathbb{C}^{n\times n}$ is a Haar random unitary. Measure
$U_{\eta}(u)\rho U_{\eta}^{\dagger}(u)$ to obtain the occupation
$\vec{z}\in\mathcal{S}_{n,\eta}\doteq\left\{ (z_{1},\cdots,z_{\eta}):\forall j\in[\eta],1\le z_{j}<z_{j+1}\le n\right\} $
with probability $\bra{\vec{z}}U_{\eta}(u)\rho U_{\eta}^{\dagger}(u)\ket{\vec{z}}$
and let $v_{\vec{z}}\in\mathbb{Z}^{n\times n}$ be any permutation
matrix that maps elements of $[\eta]$ to $\vec{z}$. Then the single-shot
estimator of any $k$-RDM is
\begin{align}
\langle\hat{\rdm}_{\vec{q}}^{\vec{p}}\rangle & =\bra{\vec{q}}U_{k}^{\dagger}(v_{\vec{z}}^{\dagger}u)E_{\eta,k}U_{k}(v_{\vec{z}}^{\dagger}u)\ket{\vec{p}},\label{eq:k-RDM_estimator}\\
E_{\eta,k} & \doteq\sum_{\vec{r}\in\mathcal{S}_{n,k}}\ketbra{\vec{r}}{\vec{r}}\frac{\binom{\eta-s'}{k-s'}\binom{n-\eta+s'}{s'}}{(-1)^{k+s'}\binom{k}{s'}},
\end{align}
where $s'=\left|\vec{r}\cap[\eta]\right|$ is the number of elements
$\vec{r}$ and $(1,\cdots,\eta)$ share in common, with average variance
\begin{align}
\mathcal{V}=\mathbb{E}_{\vec{p},\vec{q}}\big[\mathrm{Var}[\langle\hat{\rdm}_{\vec{q}}^{\vec{p}}\rangle]\big]\le\frac{\tr{}{E_{\eta,k}^{2}}}{\binom{n}{k}^{2}}-\frac{\binom{n-k}{\eta-k}^{2}}{\binom{n}{\eta}^{2}\binom{n}{k}}.\label{eq:estimate_one_observable}
\end{align}
\end{thm}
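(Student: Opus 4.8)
The plan is to run the standard classical-shadows argument, with the Haar average taken over $\mathrm{U}(\vrots)$ acting on the $\eta$-particle sector through the exterior power $U_{\eta}(u)=\wedge^{\eta}(u)$, and to exploit that the reference occupation $\ket{[\eta]}\doteq\ket{1,\dots,\eta}$ breaks $\mathrm{U}(\vrots)$ down to $\mathrm{U}(\eta)\times\mathrm{U}(\vrots-\eta)$. First I would collapse the double average over $(u,\vec{z})$ into a single Haar integral. Since $v_{\vec{z}}$ is a permutation matrix, $w\doteq v_{\vec{z}}^{\dagger}u$ is again Haar on $\mathrm{U}(\vrots)$ for each fixed $\vec{z}$, one has $U_{\eta}(v_{\vec{z}})\ket{[\eta]}=\pm\ket{\vec{z}}$, and hence $\bra{\vec{z}}U_{\eta}(u)\rho U_{\eta}^{\dagger}(u)\ket{\vec{z}}=\bra{[\eta]}U_{\eta}(w)\rho U_{\eta}^{\dagger}(w)\ket{[\eta]}$, while the single-shot estimator \eqref{eq:k-RDM_estimator} depends on $u,\vec{z}$ only through $w$ (one also checks in passing that it is independent of the choice of $v_{\vec{z}}$, since $E_{\eta,k}$ commutes with $U_{k}$ of the stabilizer $S_{\eta}\times S_{\vrots-\eta}$). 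Summing over the $\binom{\vrots}{\eta}$ outcomes and using invariance of the Haar measure gives, for unbiasedness,
\begin{equation}
\mathbb{E}_{u,\vec{z}}\big[\langle\hat{\rdm}_{\vec{q}}^{\vec{p}}\rangle\big]=\binom{\vrots}{\eta}\,\mathbb{E}_{w}\Big[\tr{}{U_{\eta}^{\dagger}(w)\ketbra{[\eta]}{[\eta]}U_{\eta}(w)\,\rho}\;\bra{\vec{q}}U_{k}^{\dagger}(w)E_{\eta,k}U_{k}(w)\ket{\vec{p}}\Big].
\end{equation}
Reading the right-hand side as a linear function of the rank-one matrix $\ketbra{\vec{p}}{\vec{q}}$ on $\wedge^{k}\mathbb{C}^{\vrots}$, unbiasedness for every $\rho$ is equivalent to the assertion that this $\mathrm{U}(\vrots)$-equivariant map $\mathrm{End}(\wedge^{k}\mathbb{C}^{\vrots})\to\mathrm{End}(\wedge^{\eta}\mathbb{C}^{\vrots})$ coincides with the $k$-RDM lift $\ketbra{\vec{p}}{\vec{q}}\mapsto\rdm_{\vec{q}}^{\vec{p}}$ — i.e., that $E_{\eta,k}$ inverts the measurement channel on $k$-RDMs.

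The representation theory then does the work. Both spaces decompose multiplicity-freely, $\mathrm{End}(\wedge^{m}\mathbb{C}^{\vrots})=\bigoplus_{j=0}^{\min(m,\,\vrots-m)}V_{j}$, and the $j$-th summand is in every case the irreducible of highest weight $(1^{j},0^{\vrots-2j},(-1)^{j})$ — the \emph{same} $\mathrm{U}(\vrots)$-irrep wherever it occurs, independent of $m$. Any $\mathrm{U}(\vrots)$-equivariant map between such spaces is therefore diagonal, $\sum_{j}\alpha_{j}\mathcal{P}_{V_{j}}$, on the isotypic components they share. I would pin down the scalars $\alpha_{j}$ of the $k$-RDM lift directly — its trace part from the fact that $\sum_{\vec{p}\in\mathcal{S}_{\vrots,k}}\rdm_{\vec{p}}^{\vec{p}}$ acts as the scalar $\binom{\eta}{k}$ on the $\eta$-particle sector, and the others by evaluating on highest-weight tensors — and the scalars of the averaged channel by the $\mathrm{U}(\vrots)$ Weingarten formula, in which the antisymmetry of $\wedge^{\eta}$ and $\wedge^{k}$ collapses the permutation sum to a manageable combinatorial expression. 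Because $\ketbra{[\eta]}{[\eta]}$ is only $\mathrm{U}(\eta)\times\mathrm{U}(\vrots-\eta)$-invariant, $E_{\eta,k}$ must lie in the corresponding commutant, i.e.\ $E_{\eta,k}=\sum_{s}c_{s}\Pi_{s}$ with $\Pi_{s}$ the projector onto $k$-subsets that meet $[\eta]$ in exactly $s$ modes — precisely the form in the statement. The change of basis between $\{\Pi_{s}\}$ and the $\mathrm{U}(\vrots)$-isotypic projectors of $\mathrm{End}(\wedge^{k}\mathbb{C}^{\vrots})$ is governed by a Hahn-type orthogonal-polynomial transform; imposing that the composite channel realize the target scalars $\alpha_{j}$ yields a linear system for the $c_{s}$ whose solution is exactly $c_{s}=(-1)^{k+s}\binom{\eta-s}{k-s}\binom{\vrots-\eta+s}{s}/\binom{k}{s}$, with the alternating sign the fingerprint of the inclusion--exclusion expansion of a Slater-determinant projector into number operators. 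Carrying out this matching — equivalently, evaluating the collapsed Weingarten sum and recognizing the resulting hypergeometric identity — is where essentially all of the effort lies, and is the step I expect to be the main obstacle.

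Given unbiasedness, the variance bound \eqref{eq:estimate_one_observable} is short. It gives $\mathrm{Var}[\langle\hat{\rdm}_{\vec{q}}^{\vec{p}}\rangle]=\mathbb{E}_{w}\big[\,|\bra{\vec{q}}U_{k}^{\dagger}(w)E_{\eta,k}U_{k}(w)\ket{\vec{p}}|^{2}\big]-|\langle\rdm_{\vec{q}}^{\vec{p}}\rangle|^{2}$, and averaging the first term over $\vec{p},\vec{q}\in\mathcal{S}_{\vrots,k}$ turns it into $\binom{\vrots}{k}^{-2}\norm{U_{k}^{\dagger}(w)E_{\eta,k}U_{k}(w)}_{F}^{2}=\binom{\vrots}{k}^{-2}\tr{}{E_{\eta,k}^{2}}$, which is independent of $w$, $\vec{z}$ and $\rho$; this is the first term of \eqref{eq:estimate_one_observable}. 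For the subtracted term, restrict the average of $|\langle\rdm_{\vec{q}}^{\vec{p}}\rangle|^{2}$ to $\vec{p}=\vec{q}$, where $\rdm_{\vec{p}}^{\vec{p}}$ is a product of commuting number operators so $\langle\rdm_{\vec{p}}^{\vec{p}}\rangle\ge0$, while $\sum_{\vec{p}}\langle\rdm_{\vec{p}}^{\vec{p}}\rangle=\binom{\eta}{k}$; Cauchy--Schwarz over the $\binom{\vrots}{k}$ diagonal terms gives $\mathbb{E}_{\vec{p},\vec{q}}|\langle\rdm_{\vec{q}}^{\vec{p}}\rangle|^{2}\ge\binom{\eta}{k}^{2}/\binom{\vrots}{k}^{3}$, and the identity $\binom{\eta}{k}/\binom{\vrots}{k}=\binom{\vrots-k}{\eta-k}/\binom{\vrots}{\eta}$ rewrites this as $\binom{\vrots-k}{\eta-k}^{2}/(\binom{\vrots}{\eta}^{2}\binom{\vrots}{k})$, which yields the stated inequality.
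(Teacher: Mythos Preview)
Your outline is sound and represents a genuinely different route from the paper's. Where you invoke the multiplicity-free decomposition $\mathrm{End}(\wedge^{m}\mathbb{C}^{n})=\bigoplus_{j}V_{j}$ and Schur's lemma to reduce both the measurement channel and the $k$-RDM lift to scalars on shared irreducibles, the paper instead works concretely: it identifies explicit eigenoperators $\tilde{n}_{\vec{x},\vec{y}}=\prod_{j}(\hat{n}_{x_{j}}-\hat{n}_{y_{j}})$ of the measurement channel (with eigenvalue $\binom{n+1}{d}^{-1}$ when $|\vec{x}|=|\vec{y}|=d$), expands $\Pi_{[\eta]}$ in this basis, and then computes $\tr{}{\hat{n}_{\vec{r}}\mathcal{M}^{-1}[\Pi_{[\eta]}]}$ directly. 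Your isotypic components $V_{j}$ and the paper's span of the $\tilde{n}_{\vec{x},\vec{y}}$ with $d=j$ are the same objects seen from different angles. The hard step you flag --- matching the $c_{s}$ to the target scalars via a Hahn-type transform --- is exactly where the paper's effort concentrates too, but it attacks it as a quadruple hypergeometric sum dispatched by Zeilberger's creative telescoping rather than by orthogonal-polynomial identities; your route, if carried through, would likely be cleaner but is not yet done. For the variance bound your argument is correct and slightly more direct than the paper's: the paper introduces the shadow norm (second moment of the \emph{traceless} part of the estimator), shows the average shadow norm equals the right-hand side exactly, and then uses $\mathrm{Var}\le\|\cdot\|_{s,\rho}^{2}$; you instead bound $\mathbb{E}_{\vec{p},\vec{q}}|\langle D_{\vec{q}}^{\vec{p}}\rangle|^{2}$ from below by Cauchy--Schwarz on the diagonal, arriving at the same inequality. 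One notational caution: your line $\mathrm{Var}=\mathbb{E}_{w}[\,|\cdot|^{2}]-|\langle D\rangle|^{2}$ is only literally correct if $\mathbb{E}_{w}$ denotes the induced $(u,\vec{z})$-law on $w$ (which is not Haar, since it carries the measurement weight); the argument survives because after the $\vec{p},\vec{q}$ average the integrand becomes the constant $\binom{n}{k}^{-2}\tr{}{E_{\eta,k}^{2}}$, but it is worth saying so explicitly.
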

Importantly, the symmetries of our estimator allow it to be evaluated
efficiently in $\mathcal{O}(k^{2}\eta)$ time~\footnote{An estimator with $\mathcal{O}(N^{5})$ classical runtime by polynomial interpolation was proposed in the original preprint of this manuscript.}, also independent of
$n$, even through the naive approach of multiplying dimension $\binom{n}{k}\times\binom{n}{k}$
matrices is efficient only for constant $k$. This is through a reduction
to evaluating Pfaffians corresponding to traces of products of fermionic
gaussians \citep{Terhal2002fermions,Bravyi2005}, which is similar
to recent work \citep{Wan2022MatchgateShadows} but significantly
faster due to hidden structure in our number-conserving case.

We obtain our claim on sample complexity by averaging over $N$ independent
sampled pairs $\left(u,\vec{z}\right)$. An upper bound on the variance
in \ref{eq:estimate_one_observable} is
\begin{align}
\mathcal{V} & \le\binom{\eta}{k}\left(1-\frac{\eta-k}{n}\right)^{k}\frac{1+n}{1+n-k}.\label{eq:variance_upper_bound}
\end{align}
In the worst-case of half-filling $\eta=n/2$, with large $n$ and
fixed $k$, this bound also implies a sample complexity $\mathcal{V}=\frac{1}{2^{k}}\binom{n/2}{k}\left(1+\mathcal{O}\left(\frac{k^{2}}{n}\right)\right)$
which is an exponential factor of $4^{k}$ smaller than the prior
approaches in the common case $k^{2}\ll n$. Even for small $k$,
this reduction is highly relevant to practical implementations of
quantum algorithms such as the variational quantum eigensolver \citep{McClean2016eigensolver}.
The case of very large $k=\eta$ is also of interest to applications such as quantum-classical auxiliary-field quantum Monte Carlo \citep{Huggins2022FermionicQMC}, and we prove in that case $\mathcal{V}\le\frac{4}{3\epsilon^2}$, compared to prior art of $\tilde{\mathcal{O}}(\sqrt{n}/\epsilon^2)$ \citep{Wan2022MatchgateShadows}.
Moreover, our scheme is practical as the quantum circuits
implementing each basis rotation can have depth as little as $\mathcal{O}(n)$
\citep{Kivlichan2018GivensLinearDepth}. Notably, measurements in
random single-particle bases simultaneously reveals information on
both local and non-local observables where $k$ and $\eta-k$ is small
respectively.

The simple and exact expression for our estimator belies a complicated
analysis. Prior qubit analyses are greatly simplified by how the easily
implementable group of random Clifford bases are a unitary $3$-design
\citep{Webb2016Clifford3Design,Zhu2017Clifford3Design} on the entire
state space. Unfortunately for fermions, even though random single-particle
rotations are generated by Haar random unitaries, they fail to be
$t$-designs on the entire state space except when $\eta=1$ \citep{Wan2022MatchgateShadows}
or $k=1$ \citep{Naldesi2022FermionicCorrelation}. We prove \ref{thm:main_theorem}
in four key steps.

In \ref{sec:rdm_basis}, the basic observation in the classical shadows
framework is that averaging over all classical shadows in the random
basis $U\in\mathcal{U}$ defines a measurement channel
\begin{align}
\mathcal{M}_{\mathcal{U}}\left[\rho\right] & \doteq\mathbb{E}_{\vec{z},U}\left[U^{\dagger}\ketbra{\vec{z}}{\vec{z}}U\right]\nonumber \\
 & =\tr{}I\tr 1{\mathcal{T}_{2,\mathcal{U}}\left(\rho\otimes I\right)},\label{eq:measurement_channel}
\end{align}
expressed in terms of a $t$-fold twirling channel acting on the basis
state $\ketbra{\vec{z}}{\vec{z}}$ like
\begin{equation}
\mathcal{T}_{t,\mathcal{U}}\doteq\int\left(U\ketbra{\vec{z}}{\vec{z}}U^{\dagger}\right)^{\otimes t}\mathrm{d}U_{\mathrm{Haar}}(\mathcal{U}).\label{eq:twirling_channel}
\end{equation}
In our case $U_{\eta}(u)\in\mathcal{U}=\wedge^{\eta}\mathcal{U}_{n}$
is the group of single-particle rotations, where $u\in\mathcal{U}_{n}$
is the $n$-dimension unitary group. So long as $\mathcal{U}$ is
tomographically complete for $\rho$, the measurement channel may
be inverted on any classical shadow to form a single-shot unbiased
estimate of $\rho=\mathbb{E}_{u,\vec{z}}\left[\hat{\rho}_{u,\vec{z}}\right],$
where
\begin{align}
\hat{\rho}_{u,\vec{z}} & \doteq\mathcal{M}_{\mathcal{U}}^{-1}\left[U_{\eta}^{\dagger}(u)\ketbra{\vec{z}}{\vec{z}}U_{\eta}(u)\right].\label{eq:state_estimator}
\end{align}
Hence, we demonstrate in \ref{thm:tomographic_completeness} that
$\wedge^{\eta}\mathcal{U}_{n}$ is tomographically complete, contrary
to a previous negative result where the $u$ are restricted to permutations
\citep{Zhao2020fermionshadows}. In fact, it suffices to just invert
the measurement channel on the state $\ketbra{[\eta]}{[\eta]}$, as
\begin{equation}
\hat{\rho}_{u,\vec{z}}=U_{\eta}^{\dagger}(v_{\vec{z}}^{\dagger}u)\mathcal{M}_{\mathcal{U}}^{-1}\left[\ketbra{[\eta]}{[\eta]}\right]U_{\eta}(v_{\vec{z}}^{\dagger}u),
\end{equation}
following the existence of a permutation $v_{\vec{z}}$ such that
$\ket{\vec{z}}=U_{k}(v_{\vec{z}})\ket{[\eta]}$, and the invariance
of Haar integration with respect to a change of variables.

In \ref{sec:fermion_shadows}, we find a closed-form expression for
the twirling channel $\mathcal{T}_{2,\wedge^{\eta}\mathcal{U}_{n}}$
for all $n$ and $\eta$. We leave the $t=3$ case to future work,
which would enable a per-$k$-RDM variance analysis rather than an
average. This allows us to identify in \ref{thm:inverse_measurement_channel}
the eigenoperators $\tilde{n}_{\vec{x},\vec{y}}$ of the measurement
channel
\begin{align}
\mathcal{M}_{\mathcal{U}}\left[\tilde{n}_{\vec{x},\vec{y}}\right] & =\binom{n+1}{d}^{-1}\tilde{n}_{\vec{x},\vec{y}},\label{eq:measurement_channel_eigenvalue}\\
\tilde{n}_{\vec{x},\vec{y}} & \doteq\prod_{j=1}^{d}\left(\hat{n}_{x_{j}}-\hat{n}_{y_{j}}\right),\label{eq:eigenoperator}
\end{align}
where $\hat{n}_{j}=a_{j}^{\dagger}a_{j}$ are number operators and
$\vec{x}\cap\vec{y}=\emptyset$. By expressing $\ketbra{[\eta]}{[\eta]}=\sum_{\vec{x},\vec{y}}c_{\vec{x},\vec{y}}\tilde{n}_{\vec{x},\vec{y}}$
as a linear combination of $\tilde{n}_{\vec{x},\vec{y}}$, we successfully
find the inverse $\mathcal{M}_{\mathcal{U}}^{-1}\left[\ketbra{[\eta]}{[\eta]}\right]=\binom{n+1}{d}\sum_{\vec{x},\vec{y}}c_{\vec{x},\vec{y}}\tilde{n}_{\vec{x},\vec{y}}$. 

In \ref{sec:efficient_estimation_from_shadows}, as the estimate $\hat{\rho}_{u,\vec{z}}$
has exponentially large dimension, finding an expression for it does
not guarantee the efficient computation of arbitrary observables $O$.
Fortunately, efficient computation is guaranteed if the estimate $\langle\hat{O}\rangle=\binom{n+1}{d}\sum_{\vec{x},\vec{y}}c_{\vec{x},\vec{y}}\tr{}{\tilde{n}_{\vec{x},\vec{y}}O}$
simplifies into an implicit sum over polynomially many terms without
explicitly forming $\hat{\rho}_{u,\vec{z}}$. For any observable that
is a linear combination of $k$-RDMs
\begin{align}
O & =\sum_{\vec{p},\vec{q}\in\mathcal{S}_{n,k}}o_{\vec{p}.\vec{q}}\rdm_{\vec{q}}^{\vec{p}},\label{eq:observable_linear_combination_rdm}
\end{align}
we show in \ref{thm:single_shot_estimate_efficient} that the estimator
for $\hat{O}=\tr{}{oU_{k}^{\dagger}(v_{\vec{z}}^{\dagger}u)E_{\eta,k}U_{k}(v_{\vec{z}}^{\dagger}u)}$,
with \ref{eq:k-RDM_estimator} as a special case when $O$ is a single
$k$-RDM. This expression highlights how our estimator has no preferred
basis. Though we specify $k$-RDMs in the computational basis, any
basis rotated $k$-RDMs, e.g. $U_{k}(w)\hat{\rdm}_{\vec{q}}^{\vec{p}}U_{k}^{\dagger}(w)$,
which could contain exponentially many terms, may be estimated just
as easily by absorbing $v_{\vec{z}}^{\dagger}u\rightarrow v_{\vec{z}}^{\dagger}uw$.
One might also think that this estimator is computationally efficient
only for constant $k$ as $U_{k}(u)$ has dimension $\binom{n}{k}^{2}$.
However, we show in \ref{sec:Pffaffians} that any $k$-RDM may be
evaluated in just $\mathcal{O}(k^{2}\eta)$ time, which is also independent
of $n$.

Finally, in \ref{sec:shadow_norm}, the classical shadows framework
states that the variance of any observable for a given quantum state
$\mathrm{Var}\left[\hat{O}\right]\le\norm O_{\text{s},\rho}^{2}$
is bounded above by the square of a quantity called the shadow norm.
\begin{align}
\norm O_{\text{s,\ensuremath{\rho}}}^{2} & \doteq\mathbb{E}_{u,\vec{z}}\left[|\langle\hat{O}_{\mathrm{tr}}\rangle|^{2}\right],\label{eq:shadow_norm}
\end{align}
where $O_{\mathrm{tr}}$ is the trace-free component of $O$. The
convention is to express $\norm O_{\text{s},\rho}^{2}=\tr{}{\mathcal{T}_{3,\wedge^{\eta}\mathcal{U}_{n}}\left(\rho\otimes\mathcal{M}_{\mathcal{U}}^{-1}\left(O_{\mathrm{tr}}\right)\otimes\mathcal{M}_{\mathcal{U}}^{-1}\left(O_{\mathrm{tr}}\right)\right)}$
in terms of the $3$-fold twirling channel, which appears quite difficult
to evaluate. We achieve some simplification by considering the case
where $O$ is a single $k$-RDM. Substituting our estimator \ref{eq:k-RDM_estimator}
into \ref{eq:shadow_norm}, we obtain in \ref{thm:error_bound} a
state-independent equality on the squared shadow norm averaged over
all $k$-RDMs
\begin{equation}
\mathbb{E}_{\vec{p},\vec{q}}[\|\rdm_{\vec{q}}^{\vec{p}}\|_{\text{s},\rho}^{2}]=\frac{\tr{}{E_{\eta,k}^{2}}}{\binom{n}{k}^{2}}-\frac{\binom{n-k}{\eta-k}^{2}}{\binom{n}{\eta}^{2}\binom{n}{k}},\label{eq:shadow_norm_2}
\end{equation}
seen in \ref{eq:estimate_one_observable} without needing to evaluate
$\mathcal{T}_{3,\wedge^{\eta}\mathcal{U}_{n}}$. The upper bound \ref{eq:variance_upper_bound}
on this exact expression for the squared average shadow norm is one
that works reasonably well when $\eta\ll n$. Given the variance,
we note that taking the median-of-means \citep{Huang2020shadowestimation}
or mean of multiple estimates \citep{Zhao2020fermionshadows} furnishes
the additive error bounds commonly seen in related work.

In \ref{sec:Conclusion}, we discuss other implications of our results,
such as estimating the overlap with all Slater determinants with $\frac{16}{3\epsilon^{2}}$
samples on average, and highlight possible future directions. We relegate
some of our longer and highly technical proofs to the appendices.
\ref{sec:Haar_integrals} evaluates $\mathcal{T}_{2,\wedge^{\eta}\mathcal{U}_{n}}$
in closed-form by a combinatorial approach, and \ref{sec:Hypergeometric-sums}
evaluates certain triple- and quadruple-sums over hypergeometric terms
related to finding the coefficients $c_{\vec{x},\vec{y}}$ and entries
of the estimation matrix. We define some commonly used notation in
the remainder of this section. 

\subsection{Notation}

Let $a_{k}^{\dagger}\ket 0\doteq\ket k$ be basis elements of a dimension
$n$ complex vector space $V_{n}$. The basis elements of the $\eta$
fermion space $\wedge^{\eta}V_{n}$ are $\ket{\eV{\vec{z}}}\doteq\bigwedge_{k=1}^{\eta}\ket{\eV{z_{k}}}=\ket{\eV{z_{1}}}\wedge\cdots\wedge\ket{\eV{z_{\eta}}}$
with dimension $\mathrm{dim}\left[\wedge^{\eta}V_{n}\right]=\binom{n}{\eta}$.
These are indexed by the occupation number basis $\eV{\vec{z}}\in\mathcal{S}_{n,\eta}$,
where
\begin{align}
\mathcal{S}_{n,\eta} & \doteq\left\{ (z_{1},\cdots,z_{\eta}):\forall j\in[\eta],1\le z_{j}<z_{j+1}\le n\right\} ,
\end{align}
is the set of $\eta$ sorted integers between $1$ and $n$. For example,
$[\eta]=(1,\cdots,\eta)\in\mathcal{S}_{n,\eta}$, and using the set
complement notation, $[n]\backslash[\eta]=(\eta+1,\cdots,n)\in\mathcal{S}_{n,\eta}$.
The basis state $\ket{\vec{z}}$ is an eigenvector of the number operators
$\hat{n}_{j}=a_{j}^{\dagger}a_{j}$, where products of $d$ distinct
number operators $\hat{n}_{\vec{p}}=\hat{n}_{p_{1}}\dots\hat{n}_{p_{d}}$
satisfy
\begin{equation}
\bra{\vec{z}}\hat{n}_{\vec{p}}\ket{\vec{z}}=\begin{cases}
1, & \vec{p}\subseteq\vec{z},\\
0, & \text{otherwise},
\end{cases}
\end{equation}
and are rank-$\binom{n-d}{\eta-d}$ projectors. For instance, for
all $\vec{z}\in\mathcal{S}_{n,\eta}$, the rank-$1$ projector
\begin{equation}
\ketbra{\eV{\vec{z}}}{\eV{\vec{z}}}=\Pi_{\vec{z}}=\hat{n}_{\vec{z}}.
\end{equation}
We will often perform sums over combinations of $\hat{n}_{\vec{z}}$
such as
\begin{equation}
e_{k}\left(\hat{n}_{1},\cdots,\hat{n}_{d}\right)\doteq\sum_{\vec{z}\in\mathcal{S}_{d,k}}\hat{n}_{\vec{z}}=\frac{1}{k!}\sum_{\vec{z}\in S_{[d],k}}\hat{n}_{\vec{z}},
\end{equation}
which define elementary symmetric polynomials of degree $k$ in $d$
commuting variables, where we introduce $S_{\vec{p},k}$ as all permutations
of $k$ elements of $\vec{p}$. To simplify notation, we may replace
the argument $e_{k}\left(\{\hat{n}\}_{[d]}\right)\doteq e_{k}\left(\hat{n}_{1},\cdots,\hat{n}_{d}\right)$.

Let $u\in\mathcal{U}_{n}$ be a unitary operation on basis elements
of the complex vector space $\mathbb{C}^{n}$, that is $\ket{u_{k}}\doteq u\ket k=\sum_{q}u_{qk}\ket{\eV q}.$
The representation $U_{\eta}(u)$ on $\wedge^{\eta}\mathcal{U}_{n}$
is then a unitary fermion rotation applying 
\begin{align}
U_{\eta}(u)\ket{\eV{\vec{z}}} & =\bigwedge_{k=1}^{\eta}\ket{u_{z_{k}}}=\sum_{\vec{p}\in\mathcal{S}_{n,\eta}}\det\left[u_{\vec{p}\vec{z}}\right]\ket{\eV{\vec{p}}},\label{eq:fermion_rotation}
\end{align}
where $u_{\vec{x}\vec{y}}$ denotes the submatrix formed by taking
rows $x_{1},x_{2},\cdots$ and columns $y_{1},y_{2},\cdots$ of $u$.
In other words, the determinant $\det\left[u_{\vec{p}\vec{z}}\right]$
is the determinant of a matrix minor of $u$. Fermion rotations are
a homomorphism of $\mathcal{U}_{n}$ as 

\begin{align}
U_{\eta}(v)U_{\eta}(u) & =U_{\eta}(vu).\label{eq:homomorphism}
\end{align}
Fermion rotations are also known as single-particle basis rotations
as each creation operator is rotated to a linear combination of other
creation operators as follows. 
\begin{align}
U_{\eta}(u)a_{k}^{\dagger}U_{\eta}^{\dagger}(u)= & \sum_{j}u_{jk}a_{j}^{\dagger},\\
U_{\eta}^{\dagger}(u)a_{k}^{\dagger}U_{\eta}(u)= & \sum_{j}u_{kj}^{*}a_{j}^{\dagger}.
\end{align}
Consider the special case where $v_{\vec{z}}\in\mathbb{Z}^{n\times n}$
is any permutation matrix such that $\ket{z_{j}}=v_{\vec{z}}\ket j$
for all $j\in[\eta]$. Then 
\begin{align}
\ket{\vec{z}} & =U_{\eta}(v_{\vec{z}})\ket{[\eta]}.
\end{align}
We often rotate product of creation and annihilation operators. For
any $\vec{p},\vec{q}\in\mathcal{S}_{n,k}$, the rotated $k$-RDM is
\begin{align}
U_{\eta}(u)D_{\vec{q}}^{\vec{p}}U_{\eta}^{\dagger}(u) & =\sum_{\vec{p}',\vec{q}'\in\mathcal{S}_{n,k}}\det[u_{\vec{p}'\vec{p}}]D_{\vec{q}'}^{\vec{p}'}\det[(u^{\dagger})_{\vec{q}\vec{q}'}].\label{eq:rdm_rotation}
\end{align}
Given a linear combination $k$-RDMs \ref{eq:observable_linear_combination_rdm},
the rotated observable $O'=U_{\eta}(u)OU_{\eta}^{\dagger}(u)$ is
hence
\begin{align}
O' & =\sum_{\vec{p}',\vec{q}'\in\mathcal{S}_{n,k}}\left(\sum_{\vec{p},\vec{q}\in\mathcal{S}_{n,k}}\det[u_{\vec{p}'\vec{p}}]o_{\vec{p},\vec{q}}\det[(u^{\dagger})_{\vec{q}\vec{q}'}]\right)D_{\vec{q}'}^{\vec{p}'}\nonumber \\
 & =\sum_{\vec{p}',\vec{q}'\in\mathcal{S}_{n,k}}\left(U_{k}(u)\cdot o\cdot U_{k}^{\dagger}(u)\right)_{\vec{p}',\vec{q}'}D_{\vec{q}'}^{\vec{p}'}.\label{eq:observable_rotation}
\end{align}
Above, observe that conjugating $O$ by the dimension $\binom{n}{\eta}$
unitary $U_{\eta}(u)$ is equivalent to conjugating $o$ by the much
smaller dimension $\binom{n}{k}$ unitary $U_{k}(u)$. 

\section{\label{sec:rdm_basis}Tomographic completeness}

A well-determined inversion of the measurement channel requires the
choice of random bases to be tomographically complete \citep{Huang2020shadowestimation}.
In other words, any Hermitian $\eta$-particle $n$-mode fermion state
$\rho$ must be shown to be some linear combination
\begin{align}
\rho=\sum_{j}\alpha_{j}U_{\eta}(u_{j}')\Pi_{\vec{z}_{j}}U_{\eta}^{\dagger}(u_{j}')=\sum_{j}\alpha_{j}U_{\eta}(u_{j})\Pi_{[\eta]}U_{\eta}^{\dagger}(u_{j}),\label{eq:density_matrix_sum_of_rotations}
\end{align}
of fermion rotation $U_{\eta}(u_{j}')$ generated by unitaries $u_{j}'$
with coefficients $\alpha_{j}$. Note that we may replace $\Pi_{\vec{z}_{j}}$
with $\Pi_{[\eta]}$, or in fact any diagonal operator without loss
of generality, as there always exists a permutation matrix $v_{\vec{z}_{j}}$
such that $\Pi_{\vec{z}_{j}}=U_{\eta}(v_{\vec{z}_{j}})\Pi_{[\eta]}U_{\eta}^{\dagger}(v_{\vec{z}_{j}})$.
We may then collect $U_{\eta}(u_{j}')U_{\eta}(v_{\vec{z}_{j}})=U_{\eta}(u_{j}'v_{\vec{z}_{j}})=U_{\eta}(u_{j})$
following the homomorphism \ref{eq:homomorphism}. We now prove that
the decomposition of $\rho$ in \ref{eq:density_matrix_sum_of_rotations}
is always achievable.
\begin{thm}[Tomographic completeness]
\label{thm:tomographic_completeness}Any $n$-mode $\eta$-particle
fermion density matrix $\rho$ is a linear combination of diagonal
operators conjugated by some element $U_{\eta}(u)\in\wedge^{\eta}\mathcal{U}_{n}$
as in \ref{eq:density_matrix_sum_of_rotations}.
\end{thm}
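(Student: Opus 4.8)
The plan is to prove the stronger statement that the Hermitian operators $\{U_{\eta}(u)\Pi_{[\eta]}U_{\eta}^{\dagger}(u):u\in\mathcal{U}_{n}\}$ linearly span all of $\mathrm{End}(\wedge^{\eta}V_{n})$; since the reduction from a general diagonal $\Pi_{\vec z}$ to $\Pi_{[\eta]}$ has already been made above, and since a Hermitian operator lying in the complex span of Hermitian generators automatically lies in their real span, this immediately yields the claimed decomposition \ref{eq:density_matrix_sum_of_rotations} of any density matrix $\rho$. By Hilbert--Schmidt duality the span statement is equivalent to: if $Y\in\mathrm{End}(\wedge^{\eta}V_{n})$ satisfies $\tr{}{Y^{\dagger}U_{\eta}(u)\Pi_{[\eta]}U_{\eta}^{\dagger}(u)}=0$ for all $u\in\mathcal{U}_{n}$, then $Y=0$. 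Using $U_{\eta}(u)\ket{[\eta]}=\ket{u_{1}}\wedge\cdots\wedge\ket{u_{\eta}}=:\ket{\psi_{u}}$ and $\Pi_{[\eta]}=\ketbra{[\eta]}{[\eta]}$, the hypothesis reads $\bra{\psi_{u}}Y\ket{\psi_{u}}=0$ for all $u$. As $u$ ranges over $\mathcal{U}_{n}$ its first $\eta$ columns range over all orthonormal $\eta$-frames, and a decomposable wedge depends only on the span of its factors up to a nonzero scalar, so $\ket{\psi_{u}}$ realizes (up to normalization) \emph{every} Slater determinant $\ket{v_{1}\wedge\cdots\wedge v_{\eta}}$. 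Thus it suffices to prove the linear-algebra lemma: $\bra{\psi}Y\ket{\psi}=0$ for every Slater determinant $\ket{\psi}$ implies $Y=0$.

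To prove this lemma I would parametrize Slater determinants by the big cell of the Grassmannian: for $t=(t_{ia})\in\mathbb{C}^{\eta\times(n-\eta)}$ put $\ket{v_{i}(t)}=\ket{i}+\sum_{a=\eta+1}^{n}t_{ia}\ket{a}$ and $\ket{\psi(t)}=\ket{v_{1}(t)}\wedge\cdots\wedge\ket{v_{\eta}(t)}=\sum_{\vec z\in\mathcal{S}_{n,\eta}}\xi_{\vec z}(t)\ket{\vec z}$, where each Pl\"ucker coordinate $\xi_{\vec z}(t)$ is a polynomial in the entries of $t$ (a maximal minor of $[\,I_{\eta}\,|\,(t_{ia})\,]$) and $\xi_{[\eta]}\equiv1$. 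Then $F(t):=\bra{\psi(t)}Y\ket{\psi(t)}$ is a polynomial in the real and imaginary parts of $t$ that, by hypothesis, vanishes identically. The crux is a rigorous form of polarization: letting $\bar\xi_{\vec w}$ denote the polynomial obtained from $\xi_{\vec w}$ by conjugating its coefficients, the function $G(t,s):=\sum_{\vec z,\vec w}\bar\xi_{\vec w}(s)\,\bra{\vec w}Y\ket{\vec z}\,\xi_{\vec z}(t)$ is an honest polynomial in the $2\eta(n-\eta)$ complex variables $(t,s)$ with $G(t,\bar t)=F(t)\equiv0$; a polynomial vanishing on the totally real locus $\{(t,\bar t)\}$ is the zero polynomial (substitute $t_{j}=x_{j}+iy_{j}$: $G(t,\bar t)$ becomes a real polynomial identity in $x,y$, forcing every coefficient of $G$ to vanish), so $G\equiv0$, i.e.\ $\bra{\psi(s')}Y\ket{\psi(t)}=0$ for all $t,s'$. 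Finally, because the decomposable wedges span $\wedge^{\eta}V_{n}$ and the big cell is Zariski dense in the Grassmannian, the family $\{\ket{\psi(t)}:t\}$ already spans $\wedge^{\eta}V_{n}$; evaluating $\bra{\psi(s')}Y\ket{\psi(t)}=0$ on a basis drawn from this family makes the matrix of $Y$ vanish, so $Y=0$.

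I expect the one genuinely delicate point to be the polarization step. Slater determinants are not closed under addition, so one cannot directly substitute $\ket{\psi}+\ket{\phi}$ and $\ket{\psi}+i\ket{\phi}$ as in the textbook proof that $\bra{\psi}Y\ket{\psi}\equiv0$ on a whole space implies $Y=0$; the parametrization by $t$ is exactly what rescues the argument, turning the hypothesis into the vanishing of a polynomial on its real diagonal, where the polynomial-identity argument applies. Everything else — the Hilbert--Schmidt reduction, the identification of $\{\ket{\psi_{u}}\}$ with all Slater determinants, and the spanning/nondegeneracy of the Pl\"ucker family — is routine. A more representation-theoretic alternative is to note that $\mathrm{span}\{U_{\eta}(u)\Pi_{[\eta]}U_{\eta}^{\dagger}(u)\}$ is invariant under conjugation by $\wedge^{\eta}\mathcal{U}_{n}$, hence a subrepresentation of $\mathrm{End}(\wedge^{\eta}V_{n})$, and to check that it meets every isotypic component because $\Pi_{[\eta]}$ is sufficiently generic; but verifying the latter looks no easier, and in any event the explicit inverse of the measurement channel built later in \ref{thm:inverse_measurement_channel}, whose eigenvalues $\binom{n+1}{d}^{-1}$ are all nonzero, re-proves completeness quantitatively.
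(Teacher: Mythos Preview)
Your argument is correct and complete: the Hilbert--Schmidt reduction, the identification of $U_{\eta}(u)\ket{[\eta]}$ with arbitrary Slater determinants, the big-cell parametrization, the polarization step via the vanishing of $G(t,\bar t)$ on the totally real locus, and the linear independence of the Pl\"ucker polynomials $\xi_{\vec z}(t)$ (each has a distinct leading monomial) all go through as you describe.

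However, your route is genuinely different from the paper's. The paper gives a fully \emph{constructive} proof: for each basis operator $\rdm_{\vec q}^{\vec p}$ it reduces to the case $\vec p\cap\vec q=\emptyset$ and $\eta=k$, then explicitly diagonalizes the rank-two Hermitian combination $e^{i\phi}\rdm_{\vec q}^{\vec p}+e^{-i\phi}\rdm_{\vec p}^{\vec q}$ by exhibiting a concrete $w_{\phi}\in\mathcal{U}_{n}$ (a block Hadamard with phases) such that $U_{k}(w_{\phi})$ maps $\ket{\vec p},\ket{\vec q}$ to the two eigenvectors. The upshot is that every $\rdm_{\vec q}^{\vec p}$ is written as a linear combination of at most \emph{four} rotated diagonal projectors, with explicit unitaries. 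Your duality/polarization argument is shorter and more conceptual, and your observation that the nonvanishing eigenvalues of $\mathcal{M}_{\wedge^{\eta}\mathcal{U}_{n}}$ in \ref{thm:inverse_measurement_channel} already re-prove completeness is a nice bonus. What it does \emph{not} provide is the explicit four-term decomposition, and the paper actually uses that later: in \ref{sec:Pffaffians} the fast $\mathcal{O}(k^{2}\eta)$ estimator for an arbitrary $k$-RDM is obtained by reducing to at most four diagonal $k$-RDMs $\hat n_{[k]}$ via exactly this construction. So both proofs establish the theorem as stated, but only the paper's proof supplies the quantitative byproduct needed downstream.
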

\begin{proof}
First, observe that $\rho=\sum_{\vec{p},\vec{q}\in\mathcal{S}_{n,\eta}}\rho_{\vec{p},\vec{q}}\ketbra{\vec{p}}{\vec{q}}$,
where $\ketbra{\vec{p}}{\vec{q}}=\rdm_{\vec{q}}^{\vec{p}}$ forms
a complete basis for $\rho$, and can be written as a sum of Hermitian
operators $\rdm_{\vec{q};\phi}^{\vec{p}}\doteq e^{i\phi}\rdm_{\vec{q}}^{\vec{p}}+e^{-i\phi}\rdm_{\vec{p}}^{\vec{q}}$
like
\begin{align}
\rdm_{\vec{q}}^{\vec{p}} & =\frac{1}{2}\left(\rdm_{\vec{q};0}^{\vec{p}}-i\rdm_{\vec{q};\pi/2}^{\vec{p}}\right),
\end{align}
Hence for any $\rho$, it suffices to show the existence of a decomposition
\begin{equation}
\rdm_{\vec{q};\phi}^{\vec{p}}=\sum_{j}\alpha_{j}U_{\eta}(u_{j})\Pi_{\vec{z}_{j}}U_{\eta}^{\dagger}(u_{j}),\label{eq:hermitian_rdm_lcu}
\end{equation}
for any $\vec{p},\vec{q},\phi$.

Second, it suffices to consider $k$-RDMs $\rdm_{\vec{q}}^{\vec{p}}$
where $\vec{p}$ and $\vec{q}$ have no elements in common. In the
trivial case where $\vec{p}=\vec{q}$, \ref{eq:hermitian_rdm_lcu}
is automatically satisfied as $\rdm_{\vec{q}}^{\vec{p}}=\hat{n}_{\vec{p}}$
is diagonal. Otherwise, if $\vec{p}\cap\vec{q}=\vec{z}$ has more
than zero elements, we may always express $\rdm_{\vec{q}}^{\vec{p}}=(-1)^{x}\hat{n}_{\vec{z}}\rdm_{\vec{q}\backslash\vec{z}}^{\vec{p}\backslash\vec{z}}$
with some sign depending on the order $x$ of the permutation that
anti-commuts out the number operators $\hat{n}_{\vec{z}}$. Now assuming
that $\rdm_{\vec{q}\backslash\vec{z}}^{\vec{p}\backslash\vec{z}}$
is of the form \ref{eq:hermitian_rdm_lcu} where $u_{j}$ is identity
on indices in $\vec{z}$ implies that $\rdm_{\vec{q}}^{\vec{p}}$
also has the same representation as the number operators $\hat{n}_{\vec{z}}$
are already linear combinations of some $\Pi_{\vec{y}}$ and commute
with $U_{\eta}(u_{j})$.

Third, it suffices to restrict the number of particles to $\eta=k$.
Suppose we have found some $\rdm_{\vec{q}}^{\vec{p}}=\sum_{j}\alpha_{j}U_{k}(u_{j})\Pi_{[k]}U_{k}^{\dagger}(u_{j})$
in the form of \ref{eq:hermitian_rdm_lcu}. Note that $\Pi_{[k]}=\hat{n}_{1}\cdots\hat{n}_{k}=D_{[k]}^{[k]}$
is a product of number operators, and is diagonal. Hence \ref{eq:rdm_rotation}
implies that for any $\eta$, $\rdm_{\vec{q}}^{\vec{p}}=\sum_{j}\alpha_{j}U_{\eta}(u_{j})D_{[k]}^{[k]}U_{\eta}^{\dagger}(u_{j})$.

Finally, we show that in the space of $k$-particles, the $k$-RDM
$\rdm_{\vec{q};\phi}^{\vec{p}}$ has a rank-2 decomposition. It also
suffices to assume that $\vec{p}=[k]$ and $\vec{q}=[2k]\backslash[k]=(k+1,\cdots,2k)$
as $\rdm_{\vec{q};\phi}^{\vec{p}}=U_{k}(v)\rdm_{[2k]\backslash[k];\phi}^{[k]}U_{k}^{\dagger}(v)$
for some permutation matrix $v$. To diagonalize the $\rdm_{\vec{q};\phi}^{\vec{p}}$,
observe that it has only two eigenvectors $\ket{\phi_{\pm}}=\frac{1}{\sqrt{2}}\left(e^{i\phi/2}\ket{\vec{p}}\pm e^{-i\phi/2}\ket{\vec{q}}\right)$
with non-zero eigenvalues. Hence
\begin{align}
\rdm_{\vec{q};\phi}^{\vec{p}} & =\ketbra{\phi_{+}}{\phi_{+}}-\ketbra{\phi_{-}}{\phi_{-}}.
\end{align}
Let us find a fermion rotation that rotates $U_{k}(w_{\phi})\ket{\vec{p}}=\ket{\phi_{+}}$
and $U_{k}(w)\ket{\vec{q}}=\ket{\phi_{-}}$. From \ref{eq:fermion_rotation},
we see that $w_{\phi}$ must satisfy the following constraints 
\begin{align}
\frac{1}{\sqrt{2}}\left(e^{i\phi/2}\ket{\eV{\vec{p}}}+e^{-i\phi/2}\ket{\eV{\vec{q}}}\right) & =\sum_{\vec{p}'\in\mathcal{S}_{n,k}}\det\left[(w_{\phi})_{\vec{p}'\vec{p}}\right]\ket{\vec{p}'},\\
\frac{1}{\sqrt{2}}\left(e^{i\phi/2}\ket{\eV{\vec{p}}}-e^{-i\phi/2}\ket{\eV{\vec{q}}}\right) & =\sum_{\vec{q}'\in\mathcal{S}_{n,k}}\det\left[(w_{\phi})_{\vec{q}'\vec{q}}\right]\ket{\vec{q}'}.
\end{align}
As the determinant is zero for linearly dependent columns or rows,
we see by inspection that a solution is
\begin{equation}
w_{\phi}=\frac{1}{\sqrt{2}}\left[\begin{array}{cc}
e^{i\phi/2}I_{k} & e^{i\phi/2}I_{k}\\
e^{-i\phi/2}I_{k} & -e^{-i\phi/2}I_{k}
\end{array}\right],
\end{equation}
where $I_{k}$ the dimension-$k$ identity matrix.
\begin{equation}
\rdm_{[2k]\backslash[k];\phi}^{[k]}=U(w_{\phi})\left(\ketbra{[k]}{[k]}-\ketbra{[2k]\backslash[k]}{[2k]\backslash[k]}\right)U^{\dagger}(w_{\phi}).
\end{equation}

Altogether, we obtain the general case, where there $n$ modes and
$\eta$ particles, the $k$-RDM $\rdm_{\vec{q};\phi}^{\vec{p}}$ with
parameters $\vec{p}\cap\vec{q}=\vec{z}$ corresponding to $\mathrm{\dim(\vec{z})}$
elements in common and $k'=k-\dim(\vec{z})>0$ element not in common,
as follows.
\begin{align}
\rdm_{\vec{q};\phi}^{\vec{p}} & =U(vw_{\phi})\left(\hat{n}_{[k']}-\hat{n}_{[2k']\backslash[k']}\right)\hat{n}_{[2k'+\dim(\vec{z})]\backslash[2k']}U^{\dagger}(vw_{\phi})\nonumber \\
 & =\sum_{j\in\{\vec{p},\vec{q}\}}U(v_{j}w_{\phi})\hat{n}_{[k]}U^{\dagger}(v_{j}w_{\phi})\\
w_{\phi} & =\frac{1}{\sqrt{2}}\left[\begin{array}{cc}
e^{i\phi/2}I_{k'} & e^{i\phi/2}I_{k'}\\
e^{-i\phi/2}I_{k'} & -e^{-i\phi/2}I_{k'}
\end{array}\right]\bigoplus I_{[n]\backslash[2k']},
\end{align}
where $v$ is any permutation that maps $[k']$ to $\vec{p}\backslash\vec{z}$,
$[2k']\backslash[k']$ to $\vec{q}\backslash\vec{z}$, and $[2k'+\dim(\vec{z})]\backslash[2k']$
to $\vec{z}$, and $v_{\vec{p}}$ is any permutation that maps $[k]$
to $\vec{p}$ and $v_{\vec{q}}$ is any permutation that maps $[k]$
to $\vec{q}$. 
\end{proof}
Hence, we conclude that any $k$-RDM 
\begin{align}
\rdm_{\vec{q}}^{\vec{p}} & =\begin{cases}
U_{k}(v)\hat{D}_{[k]}^{[k]}U_{k}^{\dagger}(v), & k'=0,\\
\sum_{j=1}^{4}\alpha_{j}U(w)\hat{D}_{[k]}^{[k]}U^{\dagger}(w), & k'>0,
\end{cases}\label{eq:k-RDM-linear_combination}\\
\alpha & =\frac{1}{2}(1,1,-i,-i),\\
w & =(v_{\vec{p}}w_{0},v_{\vec{q}}w_{0},v_{\vec{p}}w_{\pi/2},v_{\vec{q}}w_{\pi/2}).
\end{align}
is a linear combination of at most four diagonal $k$-RDMs $\hat{D}_{[k]}^{[k]}$
conjugated by single-particle basis rotations $U(w_{j}).$ 

\section{Inverse measurement channel\label{sec:fermion_shadows}}

In this section, we invert the measurement channel with the assurance
from \ref{sec:rdm_basis} that this is possible in principle given
that single-particle rotations are tomographically complete. Following
\ref{eq:density_matrix_sum_of_rotations}, the measurement channel
for any quantum state is
\begin{align}
\mathcal{M}_{\wedge^{\eta}\mathcal{U}_{n}}\left[\rho\right] & =\mathcal{M}_{\wedge^{\eta}\mathcal{U}_{n}}\left[\sum_{j}\alpha_{j}U_{\eta}(u_{j})\Pi_{[\eta]}U_{\eta}^{\dagger}(u_{j})\right]\nonumber \\
 & =\sum_{j}\alpha_{j}U_{\eta}(u_{j})\mathcal{M}_{\wedge^{\eta}\mathcal{U}_{n}}\left[\Pi_{[\eta]}\right]U_{\eta}^{\dagger}(u_{j}).
\end{align}
Above, we apply linearity of the measurement channel, and commute
the fermion rotations out of the measurement channel by a change of
variables in the Haar integral of the twirling channel \ref{eq:twirling_channel}.
Hence, for any state $\rho$, inverting the measurement channel reduces
to evaluating $\mathcal{M}_{\wedge^{\eta}\mathcal{U}_{n}}^{-1}\left[\Pi_{[\eta]}\right]$
on just a single basis state. Below, \ref{thm:inverse_measurement_channel}
evaluates $\mathcal{M}_{\wedge^{\eta}\mathcal{U}_{n}}^{-1}\left[\Pi_{[\eta]}\right]$
in a useful form, first by finding the eigenoperators of $\mathcal{M}_{\wedge^{\eta}\mathcal{U}_{n}}$,
and second by expressing $\Pi_{[\eta]}$ as a linear combination of
these eigenoperators.
\begin{thm}[Inverse measurement channel]
\label{thm:inverse_measurement_channel}For any $\eta$ fermions
on $n$-modes, the inverse measurement channel $\mathcal{M}_{\wedge^{\eta}\mathcal{U}_{n}}^{-1}$
has eigenoperators and eigenvalue
\begin{align}
\forall\vec{x}\cap\vec{y}=\emptyset,\;\tilde{n}_{\vec{x},\vec{y}} & \doteq\prod_{j=1}^{d}\left(\hat{n}_{x_{j}}-\hat{n}_{y_{j}}\right),\\
\mathcal{M}_{\wedge^{\eta}\mathcal{U}_{n}}^{-1}\left[\tilde{n}_{\vec{x},\vec{y}}\right] & =\binom{n+1}{d}\tilde{n}_{\vec{x},\vec{y}}.\label{eq:inverse_measurement_channel_eigenoperator}
\end{align}
On the basis state $\Pi_{[\eta]}$,
\begin{align}
\mathcal{M}_{\wedge^{\eta}\mathcal{U}_{n}}^{-1}\left[\Pi_{[\eta]}\right]= & \sum_{d=0}^{\eta}a_{d}\binom{n+1}{d}\tilde{n}_{d},\label{eq:inverse_measurement_channel_a_RHS}\\
a_{d}\doteq & \frac{(n-2d+1)(n-d-\eta)!(\eta-d)!}{(n-d+1)!},\label{eq:a_coefficients}
\end{align}
where the symmetrized difference operator is
\begin{equation}
\tilde{n}_{d}=\sum_{\vec{x}\in\mathcal{S}_{\eta,d}}\sum_{\vec{y}\in S_{[n]\backslash[\eta],d}}\tilde{n}_{\vec{x},\vec{y}}.\label{eq:symmetrized_difference_operator}
\end{equation}
\end{thm}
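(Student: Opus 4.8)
The plan is to prove \ref{thm:inverse_measurement_channel} in two independent halves. First, I would establish that the operators $\tilde{n}_{\vec{x},\vec{y}}$ are eigenoperators of the measurement channel $\mathcal{M}_{\wedge^{\eta}\mathcal{U}_{n}}$ with eigenvalue $\binom{n+1}{d}^{-1}$, equivalently of $\mathcal{M}^{-1}$ with eigenvalue $\binom{n+1}{d}$. The natural route is to use the closed-form expression for the $2$-fold twirling channel $\mathcal{T}_{2,\wedge^{\eta}\mathcal{U}_{n}}$ promised by \ref{sec:fermion_shadows} and \ref{sec:Haar_integrals}, together with the identity \ref{eq:measurement_channel} writing $\mathcal{M}_{\mathcal{U}}[\rho]=\mathrm{Tr}\,I\cdot\mathrm{Tr}_1[\mathcal{T}_{2,\mathcal{U}}(\rho\otimes I)]$. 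Concretely, I would feed $\tilde{n}_{\vec{x},\vec{y}}$ (a degree-$d$ polynomial in commuting number operators, hence diagonal) into the twirling channel, exploit the $\mathcal{U}_n$-invariance to argue that the output must again be a symmetric function of the number operators of the same "difference type," and then pin down the scalar. A cleaner alternative, if the twirl is unwieldy, is a representation-theoretic argument: the operators $\tilde{n}_{\vec{x},\vec{y}}$ (symmetrized over choices of $d$-subsets) span irreducible components under the adjoint action of $\mathcal{U}_n$ on the space of $\eta$-particle operators, Schur's lemma forces $\mathcal{M}$ to act as a scalar on each component, and the scalar is extracted by evaluating on one convenient element, e.g. taking a trace against $\Pi_{[\eta]}$ or against $\tilde n_{\vec x,\vec y}$ itself. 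The eigenvalue $\binom{n+1}{d}^{-1}$ should then fall out of a single combinatorial identity counting how many basis states $\vec z\in\mathcal{S}_{n,\eta}$ contribute, analogous to the known $\eta=1$ and $k=1$ special cases cited in the text.

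Second, I would expand the rank-one projector $\Pi_{[\eta]}=\hat n_1\cdots\hat n_\eta$ in the eigenbasis. Since $\Pi_{[\eta]}$ is invariant under permutations of $\{1,\dots,\eta\}$ and of $\{\eta+1,\dots,n\}$, and under the hole-particle-type grading it decomposes into pieces of each difference-degree $d=0,1,\dots,\eta$, it must be a linear combination $\Pi_{[\eta]}=\sum_{d=0}^{\eta} a_d\,\tilde n_d$ of the fully symmetrized operators $\tilde n_d$ defined in \ref{eq:symmetrized_difference_operator}. To find the coefficients $a_d$, I would impose the defining equation on diagonal matrix elements: for a basis state $\vec w\in\mathcal{S}_{n,\eta}$ with $\match{\vec w}{[\eta]}=m$ overlap with the reference, $\bra{\vec w}\Pi_{[\eta]}\ket{\vec w}=\delta_{m,\eta}$, while $\bra{\vec w}\tilde n_d\ket{\vec w}$ is itself a combinatorial quantity — summing $\prod_j(\hat n_{x_j}-\hat n_{y_j})$ over $\vec x\subseteq[\eta]$, $\vec y\subseteq[n]\backslash[\eta]$ — that evaluates to a hypergeometric sum in $m$, $\eta$, $n$, $d$. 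This produces a triangular linear system (lower-triangular in $d$ once organized by $m$) whose inversion yields the stated $a_d=\frac{(n-2d+1)(n-d-\eta)!(\eta-d)!}{(n-d+1)!}$. Combining the two halves and using linearity of $\mathcal{M}^{-1}$ gives \ref{eq:inverse_measurement_channel_a_RHS} immediately, since $\tilde n_d$ is a sum of eigenoperators all sharing eigenvalue $\binom{n+1}{d}$.

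I expect the main obstacle to be the combinatorial bookkeeping in the second half: evaluating $\bra{\vec w}\tilde n_d\ket{\vec w}$ in closed form and then inverting the resulting system to recognize the clean product formula for $a_d$. This is precisely the kind of triple-/quadruple-sum over hypergeometric terms that the text defers to \ref{sec:Hypergeometric-sums}, so I would isolate the needed identity — essentially a Vandermonde-type or Saalschütz-type summation — as a standalone lemma and cite it, rather than grinding it out inline. A secondary subtlety is making the degree-grading argument rigorous: one must check that the $\tilde n_d$ (equivalently the $\tilde n_{\vec x,\vec y}$ with $d$ ranging over $0,\dots,\eta$) are linearly independent and that their span contains $\Pi_{[\eta]}$, which follows from the fact that $\{\prod_{j\in T}\hat n_j\}_{T\subseteq[\eta]}$ together with the $\{\hat n_y\}$ generate enough diagonal operators, but deserves an explicit sentence. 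Everything else — linearity of the channel, the change-of-variables reduction to $\Pi_{[\eta]}$, and the eigenvalue extraction — is routine once the twirl or the representation-theoretic skeleton is in place.
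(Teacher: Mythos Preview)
Your two-part architecture is correct, and the first half is broadly aligned with the paper: the paper also goes through the explicit diagonal of the $2$-fold twirl, though its concrete mechanism is more specific than your sketch. It first derives $\mathcal{M}[\Pi_{\vec p}]=\sum_{j=0}^{\eta} e_j(\{\hat n\}_{\vec p})\big/\bigl(\binom{n+1}{\eta}\binom{\eta}{j}\bigr)$ (via the structure factor $f(k)$ and a $\mathrm{Sim}_k\leftrightarrow e_k$ conversion), then iterates the telescoping identity $e_k(A,X_2,\ldots)-e_k(B,X_2,\ldots)=(A-B)\,e_{k-1}(X_2,\ldots)$ $d$ times to factor out $\tilde n_{\vec x,\vec y}$, sums over all padding tuples $\vec z$, and finally evaluates on basis states to collapse the residual sum to the scalar $\binom{n+1}{d}^{-1}$. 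This is a direct computation, not a Schur-type argument; your route~(b) via irreps would be a different proof and would need care, since it is the \emph{span} of all $\tilde n_{\vec x,\vec y}$ with fixed $d$ that is $\mathcal{U}_n$-invariant, while the theorem asserts each individual $\tilde n_{\vec x,\vec y}$ is an eigenoperator.

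The genuine divergence is in how you extract the coefficients $a_d$. You propose pointwise matching of $\langle\vec w|\Pi_{[\eta]}|\vec w\rangle=\delta_{m,\eta}$ against $\langle\vec w|\tilde n_d|\vec w\rangle$ and inverting the resulting linear system. The paper instead observes that, once the eigenvalue claim is established, the $\tilde n_d$ are eigenoperators of a self-adjoint channel with \emph{distinct} eigenvalues $\binom{n+1}{d}^{-1}$, hence mutually trace-orthogonal. This gives $a_d=\mathrm{Tr}[\Pi_{[\eta]}\tilde n_d]\big/\mathrm{Tr}[\tilde n_d^2]$ directly, with no system to invert. The numerator is one line; the denominator $\mathrm{Tr}[\tilde n_d^2]$ is the hard hypergeometric sum sent to \ref{sec:Hypergeometric-sums}. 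Your approach would instead need a closed form for $\langle\vec w|\tilde n_d|\vec w\rangle$ as a function of $m$ --- which is exactly the single inner sum $\sum_j f_{n,\eta,d}(s,j)$ that appears \emph{unsquared} inside the paper's computation of $\mathrm{Tr}[\tilde n_d^2]$ --- followed by an inversion. Both routes land on a nontrivial hypergeometric identity, but the orthogonality trick is what decouples the $a_d$ from one another and is the simplification you are missing.
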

\begin{proof}
From \ref{eq:measurement_channel}, the measurement channel becomes
\begin{align}
\mathcal{M}_{\wedge^{\eta}\mathcal{U}_{n}}\left[\Pi_{\vec{p}}\right] & =\binom{n}{\eta}\tr 1{\mathcal{T}_{2,\wedge^{\eta}\mathcal{U}_{n}}\left(\Pi_{\vec{p}}\otimes I\right)},\label{eq:measurement_channel_fermion_rotation}\\
\mathcal{T}_{t,\wedge^{\eta}\mathcal{U}_{n}} & \doteq\int\left(U_{\eta}(u)\Pi_{[\eta]}U_{\eta}^{\dagger}(u)\right)^{\otimes t}\mathrm{d}u_{\mathrm{Haar}}(\mathcal{U}_{n}).
\end{align}
The first step is evaluating the twirling channel from \ref{eq:twirling_channel}.
In the second equality above, we use the fact that $\ketbra{\vec{z}}{\vec{z}}=U_{\eta}(v_{\vec{z}})\Pi_{[\eta]}U_{\eta}^{\dagger}(v_{\vec{z}})$
followed by a change of variables does not affect integration over
the Haar measure. As $\Pi_{\vec{p}}$ is diagonal, it suffices to
evaluate only diagonal components of $\mathcal{T}_{2,\wedge^{\eta}\mathcal{U}_{n}}$.
Each matrix entry of the fermion rotation $U_{\eta}(u)$ is a determinant
of the $\eta\times\eta$ submatrix $u_{\vec{x}\vec{y}}$. By the Leibniz
formula, $\det\left[u_{\vec{x}\vec{y}}\right]$ is a linear combination
of $\eta$ products of entries of $u$. Hence $\mathcal{T}_{t,\wedge^{\eta}\mathcal{U}_{n}}$
is a linear combination of the Haar integral of $2t\eta$ products
of entries from $u$ or $u^{\dagger}$, which are in turn linear combinations
of Weingarten functions \citep{Collins2006Weingarten}. Following
an involved combinatorial proof in \ref{lem:twirling_expression}
of \ref{sec:Haar_integrals}, we find that diagonal components of
the twirling channel are
\begin{align}
\mathcal{T}_{2,\wedge^{\eta}\mathcal{U}_{n}} & =\sum_{\vec{p},\vec{q}\in\mathcal{S}_{n,\eta}}f(\match{\vec{p}}{\vec{q}})\Pi_{\vec{p}}\otimes\Pi_{\vec{q}}+\cdots,\\
f(k) & \doteq\binom{n+1}{\eta}^{-1}\binom{n}{\eta}^{-1}\frac{\eta+1}{\eta+1-k}.
\end{align}

As $f(\match{\vec{p}}{\vec{q}})$ has the same value for all states
with the same number of matching elements, the measurement channel
separates into a sum of projectors 
\begin{align}
\mathcal{M}_{\wedge^{\eta}\mathcal{U}_{n}}\left[\Pi_{\vec{p}}\right] & =\binom{n}{\eta}\sum_{k=0}^{\eta}f(k)\mathrm{Sim}_{k}\left(\Pi_{\vec{p}}\right),\label{eq:measurement_channel_similarity_projector}\\
\mathrm{Sim}_{k}\left(\Pi_{\vec{p}}\right) & \doteq\sum_{\vec{q}\in\mathcal{S}_{n,\eta}:\match{\vec{p}}{\vec{q}}=k}\Pi_{\vec{q}}.\label{eq:sim_k_definition}
\end{align}
In \ref{lem:similarity_projector_to_symmetric_polynomial}, we show
that $\mathrm{Sim}_{k}\left(\Pi_{\vec{p}}\right)=\sum_{j=0}^{\eta}(-1)^{j+k}\binom{j}{k}e_{j}\left(\{\hat{n}\}_{\vec{p}}\right)$
is a linear combination of elementary symmetric polynomials. Substituting
into \ref{eq:measurement_channel_similarity_projector} and applying
the identity $\sum_{k=0}^{j}\frac{(-1)^{j+k}(1+\eta)}{1+\eta-k}\binom{j}{k}=\binom{\eta}{j}^{-1}$,
\begin{equation}
\mathcal{M}_{\wedge^{\eta}\mathcal{U}_{n}}\left[\Pi_{\vec{p}}\right]=\sum_{j=0}^{\eta}\frac{e_{j}\left(\{\hat{n}\}_{\vec{p}}\right)}{\binom{n+1}{\eta}\binom{\eta}{j}}.\label{eq:measurement_channel_symmetric_polynomials}
\end{equation}

This representation allows us to identify the eigenoperators of $\mathcal{M}_{\wedge^{\eta}\mathcal{U}_{n}}$
by applying identity $e_{k}\left(A,X_{2},\cdots\right)-e_{k}\left(B,X_{2},\cdots\right)=(A-B)e_{k-1}\left(X_{2},\cdots\right)$.
As $\Pi_{\vec{p}}=\hat{n}_{p_{1}}\cdots\hat{n}_{p_{\eta}}=e_{\eta}\left(\{\hat{n}\}_{\vec{p}}\right)$,
linearity of the measurement channel implies
\begin{align}
 & \mathcal{M}_{\wedge^{\eta}\mathcal{U}_{n}}\left[\prod_{j=1}^{d}(\hat{n}_{p_{j}}-\hat{n}_{y_{j}})\hat{n}_{p_{d+1}}\cdots\hat{n}_{p_{\eta}}\right]\nonumber \\
 & \quad=\prod_{j=1}^{d}(\hat{n}_{p_{j}}-\hat{n}_{y_{j}})\sum_{j=d}^{\eta}\frac{e_{j-d}\left(\hat{n}_{p_{d+1}},\cdots,\hat{n}_{p_{\eta}}\right)}{\binom{n+1}{\eta}\binom{\eta}{j}},
\end{align}
Let us relabel the variables $\vec{x}=(p_{1},\cdots,p_{d})$, $\vec{z}=(p_{d+1},\cdots,p_{\eta})$.
Note that $\vec{x},\vec{y},\vec{z}$ have no elements in common. Hence,
\begin{equation}
\mathcal{M}_{\wedge^{\eta}\mathcal{U}_{n}}\left[\tilde{n}_{\vec{x},\vec{y}}\hat{n}_{\vec{z}}\right]=\tilde{n}_{\vec{x},\vec{y}}\sum_{j=d}^{\eta}\frac{e_{j-d}\left(\hat{n}_{z_{1}},\cdots,\hat{n}_{z_{\eta-d}}\right)}{\binom{n+1}{\eta}\binom{\eta}{j}}.
\end{equation}
Without loss of generality, let $\vec{x}\cup\vec{y}$ contain the
last $2d$ elements of $[n]$. Hence $\vec{z}\in\mathcal{S}_{n-2d,\eta-d}$.
Now sum both sides over all values of $\vec{z}$ using the summation
identity $\sum_{\vec{q}\in\mathcal{S}_{m,b}}e_{k}\left(X_{q_{1}},\cdots,X_{q_{b}}\right)=\binom{m-k}{b-k}e_{k}\left(X_{1},\cdots,X_{m}\right)$
on both sides. Hence,
\begin{align}
\mathcal{M}_{\wedge^{\eta}\mathcal{U}_{n}}\left[\tilde{n}_{\vec{x},\vec{y}}\right] & =\tilde{n}_{\vec{x},\vec{y}}\sum_{j=d}^{\eta}\frac{\binom{n-d-j}{\eta-j}e_{j-d}\left(\hat{n}_{1},\cdots,\hat{n}_{n-2d}\right)}{\binom{n+1}{\eta}\binom{\eta}{j}}.
\end{align}
Observe that $\tilde{n}_{\vec{x},\vec{y}}\ket{\vec{p}}$ is non-zero
only if $\vec{p}$ contains $d$ elements of $\vec{x}$ and $\vec{y}$.
Hence for any state $\ket{\vec{p}}$ where $\vec{p}$ has $\eta-d$
elements in $[n-2d]$, $e_{j-d}\left(\hat{n}_{1},\cdots,\hat{n}_{n-2d}\right)\ket{\vec{p}}=\binom{\eta-d}{j-d}\ket{\vec{p}}$.
Thus
\begin{align}
\mathcal{M}_{\wedge^{\eta}\mathcal{U}_{n}}\left[\tilde{n}_{\vec{x},\vec{y}}\right] & =\tilde{n}_{\vec{x},\vec{y}}\sum_{j=d}^{\eta}\frac{\binom{n-d-j}{\eta-j}\binom{\eta-d}{j-d}}{\binom{n+1}{\eta}\binom{\eta}{j}}=\frac{\tilde{n}_{\vec{x},\vec{y}}}{\binom{n+1}{d}}.
\end{align}
We then obtain \ref{eq:inverse_measurement_channel_eigenoperator}
by a trivial inversion.

Evaluating $\mathcal{M}_{\wedge^{\eta}\mathcal{U}_{n}}^{-1}\left[\Pi_{[\eta]}\right]$
requires us to express $\Pi_{[\eta]}$ as some linear combination
of $\tilde{n}_{\vec{x},\vec{y}}$. The symmetrized difference operators
$\tilde{n}_{d}$ \ref{eq:symmetrized_difference_operator} are eigenoperators
that enumerate over all combinations of $\vec{x}$ and all permutations
of $\vec{y}$ consistent with $\Pi_{[\eta]}$. Hence, there exists
a linear combination
\begin{align}
\Pi_{[\eta]} & =\sum_{d=0}^{\eta}a_{d}\tilde{n}_{d},\quad a_{d}=\frac{\tr{}{\Pi_{\left\{ 1,\cdots,\eta\right\} }\tilde{n}_{d}}}{\tr{}{\tilde{n}_{d}^{2}}},\label{eq:projector_linear_combination}
\end{align}
where the last line follows from trace orthogonality $\tr{\eta}{\tilde{n}_{j}\tilde{n}_{k}}\propto\delta_{jk}$. 

It is helpful to expand $\tilde{n}_{d}$ in terms of elementary symmetric
polynomials.
\begin{align}
\tilde{n}_{d} & =\sum_{j=0}^{d}(-1)^{j}\binom{d}{j}\frac{\left(\sum_{\vec{z}\in S_{\eta,d}}\right)}{\left(\sum_{\vec{z}\in S_{\eta,d-j}}\right)}\frac{\left(\sum_{\vec{z}\in S_{[n]\backslash[\eta],d}}\right)}{\left(\sum_{\vec{z}\in S_{[n]\backslash[\eta],j}}\right)}\nonumber \\
 & \qquad\times\underbrace{\sum_{\vec{x}\in S_{\eta,d-j}}\hat{n}_{x_{1}}\cdots\hat{n}_{x_{d-j}}}_{e_{d-j}(\hat{n}_{1},\cdots,\hat{n}_{\eta})}\underbrace{\sum_{\vec{y}\in S_{[n]\backslash[\eta],j}}\hat{n}_{y_{1}}\cdots\hat{y}_{j}}_{j!e_{j}(\hat{n}_{\eta+1},\cdots,\hat{n}_{n})},\nonumber \\
 & =\sum_{j=0}^{d}(-1)^{j}\frac{(\eta-d+j)!}{(\eta-d)!}\frac{(n-\eta-j)!}{(n-\eta-d)!}\nonumber \\
 & \qquad\times e_{d-j}(\hat{n}_{1},\cdots,\hat{n}_{\eta})e_{j}(\hat{n}_{\eta+1},\cdots,\hat{n}_{n}).\label{eq:symmetric_difference_elementary_polynomials}
\end{align}
To evaluate $\tr{}{\Pi_{\left\{ 1,\cdots,\eta\right\} }\tilde{n}_{d}}$,
note that $\Pi_{\left\{ 1,\cdots,\eta\right\} }e_{j}(\hat{n}_{\eta+1},\cdots,\hat{n}_{n})\ket{\vec{z}}=\delta_{j,0}\ket{\vec{z}}$
for any basis state with $\eta$ particles and that $\tr{}{\Pi_{\left\{ 1,\cdots,\eta\right\} }e_{d}(\hat{n}_{1},\cdots,\hat{n}_{\eta})}=\binom{\eta}{d}$.
Hence 
\begin{align}
\tr{}{\Pi_{\left\{ 1,\cdots,\eta\right\} }\tilde{n}_{d}} & =\frac{(n-\eta)!}{(n-\eta-d)!}\binom{\eta}{d}.
\end{align}

Evaluating $\tr{}{\tilde{n}_{d}^{2}}$ is significantly more challenging,
and we leave most details to \ref{lem:trace_symmetric_difference_squared}.
There, we show that taking the square of \ref{eq:symmetric_difference_elementary_polynomials}
leads to the sum,
\begin{align}
\tr{}{\tilde{n}_{d}^{2}} & =\frac{\eta!(n-\eta)!(\eta-s)!}{(n-\eta-s)!(\eta-d)!^{2}(n-\eta-d)!^{2}}\label{eq:trace_squared_sum}\\
 & \qquad\times\sum_{s=0}^{\min(\eta,n-\eta)}\left(\sum_{j=\max(0,s+d-\eta)}^{\min(d,s)}f_{\eta,d}(s,j)\right)^{2},\nonumber \\
f_{\eta,d}(s,j) & =\frac{(-1)^{j}(\eta-d+j)!(n-\eta-j)!}{j!(s-j)!(d-j)!(\eta-s-d+j)!},
\end{align}
which evaluates to $\tr{\eta}{\tilde{n}_{d}^{2}}=\frac{\eta!}{d!}\frac{(n-d+1)!(n-\eta)!}{(n-2d+1)(n-\eta-d)!^{2}(\eta-d)!^{2}}$.
Substituting into \ref{eq:projector_linear_combination} and canceling
terms, then leads to \ref{eq:a_coefficients} for $a_{d}$. 

We complete the proof by applying $\mathcal{M}_{\wedge^{\eta}\mathcal{U}_{n}}^{-1}\left[\tilde{n}_{d}\right]=\binom{n+1}{d}\tilde{n}_{d}$
and linearity of measurement channels to $\mathcal{M}_{\wedge^{\eta}\mathcal{U}_{n}}^{-1}\left[\sum_{d=0}^{\eta}a_{d}\tilde{n}_{d}\right]$.
\end{proof}
\begin{lem}
\label{lem:similarity_projector_to_symmetric_polynomial}For any $\vec{p}\in\mathcal{S}_{n,\eta}$,
let the projector onto basis states distance $k$ from $\Pi_{\vec{p}}$
be $\mathrm{Sim}_{k}\left(\Pi_{\vec{p}}\right)$ from \ref{eq:sim_k_definition}.
Then the following equalities are true
\begin{align}
e_{k}\left(\hat{n}_{p_{1}},\cdots\hat{n}_{p_{\eta}}\right) & =\sum_{j=0}^{\eta}\binom{j}{k}\mathrm{Sim}_{j}\left(\Pi_{\vec{p}}\right),\label{eq:ek_as_simj}\\
\mathrm{Sim}_{k}\left(\Pi_{\vec{p}}\right) & =\sum_{j=0}^{\eta}(-1)^{j+k}\binom{j}{k}e_{j}\left(\hat{n}_{p_{1}},\cdots\hat{n}_{p_{\eta}}\right).\label{eq:simj_as_ek}
\end{align}
\end{lem}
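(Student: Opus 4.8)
The plan is to observe that every operator appearing in the lemma---the elementary symmetric polynomials $e_k(\{\hat n\}_{\vec p})$ and the similarity projectors $\mathrm{Sim}_j(\Pi_{\vec p})$---is diagonal in the occupation number basis $\{\ket{\vec z}\}_{\vec z\in\mathcal{S}_{n,\eta}}$, so it suffices to compare eigenvalues on an arbitrary basis state $\ket{\vec z}$. Fix $\vec z\in\mathcal{S}_{n,\eta}$ and set $m\doteq\match{\vec p}{\vec z}\in\{0,1,\dots,\eta\}$, the number of modes $\vec p$ and $\vec z$ have in common.

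First I would compute the two relevant eigenvalues. By definition $\mathrm{Sim}_j(\Pi_{\vec p})\ket{\vec z}=\delta_{j,m}\ket{\vec z}$, directly from \ref{eq:sim_k_definition}. For the symmetric polynomial, recall $e_k(\{\hat n\}_{\vec p})=\sum_{\vec r\in\mathcal{S}_{\eta,k},\,\vec r\subseteq\vec p}\hat n_{\vec r}$, and $\hat n_{\vec r}\ket{\vec z}=\ket{\vec z}$ iff $\vec r\subseteq\vec z$; since $\vec r\subseteq\vec p$ and $\vec r\subseteq\vec z$ is equivalent to $\vec r\subseteq\vec p\cap\vec z$, the number of contributing $\vec r$ is the number of $k$-subsets of an $m$-element set, giving
\begin{equation}
e_k(\{\hat n\}_{\vec p})\ket{\vec z}=\binom{m}{k}\ket{\vec z}.
\end{equation}

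With these eigenvalues in hand, \ref{eq:ek_as_simj} reduces to the trivial identity $\binom{m}{k}=\sum_{j=0}^{\eta}\binom{j}{k}\,\delta_{j,m}$ (valid because $0\le m\le\eta$). For \ref{eq:simj_as_ek} I would verify the scalar identity $\sum_{j=0}^{\eta}(-1)^{j+k}\binom{j}{k}\binom{m}{j}=\delta_{k,m}$: using the subset-of-a-subset rewriting $\binom{j}{k}\binom{m}{j}=\binom{m}{k}\binom{m-k}{j-k}$ and reindexing $i=j-k$, the sum becomes $\binom{m}{k}\sum_{i\ge0}(-1)^{i}\binom{m-k}{i}=\binom{m}{k}\,[m=k]=\delta_{k,m}$, where I use $\sum_i(-1)^i\binom{N}{i}=[N=0]$. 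Substituting the eigenvalue $\binom{m}{j}$ for $e_j(\{\hat n\}_{\vec p})$ on $\ket{\vec z}$ then yields $\delta_{k,m}=\mathrm{Sim}_k(\Pi_{\vec p})\ket{\vec z}$ up to the scalar, completing the check. Equivalently, one can simply note that \ref{eq:ek_as_simj} expresses the vector of eigenvalues of $e_\bullet$ as the upper-triangular unipotent matrix $(\binom{j}{k})_{k,j}$ applied to that of $\mathrm{Sim}_\bullet$, and \ref{eq:simj_as_ek} is its inverse by standard binomial inversion.

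I do not anticipate a real obstacle here; the only point requiring a moment of care is confirming that the summation over $j$ from $0$ to $\eta$ genuinely captures all possible overlaps $m=\match{\vec p}{\vec z}$ (which it does, since both $\vec p$ and $\vec z$ have exactly $\eta$ elements, so $m\le\eta$), so that the triangular system and its inversion are exact rather than truncated.
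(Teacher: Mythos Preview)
Your proposal is correct and follows essentially the same approach as the paper: both reduce to comparing eigenvalues on occupation-number basis states, compute $e_k(\{\hat n\}_{\vec p})\ket{\vec z}=\binom{m}{k}\ket{\vec z}$ and $\mathrm{Sim}_j(\Pi_{\vec p})\ket{\vec z}=\delta_{j,m}\ket{\vec z}$, and then verify the resulting scalar binomial identities. The only cosmetic difference is that the paper proves $\sum_j(-1)^{j+k}\binom{m}{j}\binom{j}{k}=\delta_{k,m}$ via a finite-difference/polynomial argument, whereas you use the trinomial-revision rewriting $\binom{j}{k}\binom{m}{j}=\binom{m}{k}\binom{m-k}{j-k}$ followed by the alternating binomial sum---your route is arguably cleaner, but the overall strategy is identical.
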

\begin{proof}
Without loss of generality, let $\vec{p}=(1,\cdots,\eta)$. For any
integer $s\in[0,\eta]$, let $\vec{r}\in\mathcal{S}_{\eta,s}$, $\vec{q}\in\mathcal{S}_{[n]\backslash[\eta],\eta-s}$.
For any basis state $\ket{\vec{r}\circ\vec{q}}=\ket{(r_{1},\cdots,r_{s},q_{1},\cdots,q_{\eta-s})}$,
\begin{align}
e_{k}\left(\hat{n}_{1},\cdots\hat{n}_{\eta}\right)\ket{\vec{r}\circ\vec{q}} & =\binom{s}{k}\ket{\vec{r}\circ\vec{q}},\\
\mathrm{Sim}_{k}\left(\hat{n}_{1},\cdots\hat{n}_{\eta}\right)\ket{\vec{r}\circ\vec{q}} & =\delta_{k,s}\ket{\vec{r}\circ\vec{q}}.
\end{align}
Hence, the first equality \ref{eq:ek_as_simj} immediately is true
for all basis states $\ket{\vec{r}\circ\vec{q}}$. The second equality
is true as
\begin{align}
\bra{\vec{r}\circ\vec{q}}\mathrm{Sim}_{k}\left(\Pi_{\vec{p}}\right)\ket{\vec{r}\circ\vec{q}} & =\sum_{j=0}^{\eta}(-1)^{j+k}\binom{s}{j}\binom{j}{k}.
\end{align}
When either $s<k$, observe that $\binom{s}{j}\binom{j}{k}=0$. When
$s\ge k$, we apply the fact $\sum_{j=0}^{s}(-1)^{j+k}\binom{s}{j}P(j)=s!a_{s}$
for any polynomial $P(j)=\sum_{x=0}^{s}a_{x}j^{x}$ of degree at most
$s$. Observe that $\binom{j}{k}$ is a polynomial of degree $k$
with coefficient $a_{s}=\frac{1}{s!}\delta_{k,s}$. Hence $\sum_{j=0}^{s}(-1)^{j+k}\binom{s}{j}\binom{j}{s}=\delta_{k,s}.$
As $s\le\eta$, we may change the upper limits of the sum to $\sum_{j=0}^{\eta}\cdots$
without affecting its value.
\end{proof}

\section{\label{sec:efficient_estimation_from_shadows}Efficient estimation
from fermion shadows}

In the previous section, we found an expression for the inverse measurement
channel, which provides an unbiased single-shot estimate $\hat{\rho}_{u,\vec{z}}$
of the quantum state $\rho$. Hence, from \ref{eq:state_estimator},
the estimator for any observable $O$ is
\begin{equation}
\langle\hat{O}\rangle=\tr{}{OU_{\eta}^{\dagger}(u)\mathcal{M}_{\wedge^{\eta}\mathcal{U}_{n}}^{-1}\left[\Pi_{\vec{z}}\right]U_{\eta}(u)}.\label{eq:O_naive_estimator}
\end{equation}
However, this expression is not computationally efficient as the right-hand
side of \ref{eq:inverse_measurement_channel_a_RHS} for $\mathcal{M}_{\wedge^{\eta}\mathcal{U}_{n}}^{-1}\left[\Pi_{[n]}\right]$
is a sum of exponentially many terms, and each operator in the trace
above has exponentially large dimension $\binom{n}{\eta}\times\binom{n}{\eta}$.
However, we now show below in \ref{thm:single_shot_estimate_efficient}
that for observables that are linear combination of $k$-RDMs, as
in \ref{eq:observable_linear_combination_rdm}, the estimator \ref{eq:O_naive_estimator}
simplifies to multiplying at most $4$ dimension $\binom{n}{k}\times\binom{n}{k}=\mathcal{O}(n^{2k})$
matrices, and is hence efficient for any fixed $k$. We then go further
in \ref{sec:Pffaffians} to show that efficient evaluation with respect
to $k$ is also possible by modifying a recent technique based on
polynomial interpolation of Pfaffians \citep{Wan2022MatchgateShadows}.

Furthermore in the special case where $O=D_{\vec{q}}^{\vec{p}}$ is
just a single $k$-RDM, the single shot estimate \ref{thm:single_shot_estimate_efficient}
simplifies to 
\begin{equation}
\langle\hat{D}_{\vec{q}}^{\vec{p}}\rangle=\bra{\vec{q}}U_{k}^{\dagger}(v_{\vec{z}}^{\dagger}u)E_{\eta,k}U_{k}(v_{\vec{z}}^{\dagger}u)\ket{\vec{p}}.\label{eq:single-shot}
\end{equation}
Hence the single evaluation of
\begin{equation}
U_{k}^{\dagger}(v_{\vec{z}}^{\dagger}u)E_{\eta,k}U_{k}(v_{\vec{z}}^{\dagger}u)=\sum_{\vec{p},\vec{q}\in\mathcal{S}_{n,k}}\ketbra{\vec{q}}{\vec{p}}\langle\hat{D}_{\vec{q}}^{\vec{p}}\rangle,
\end{equation}
simultaneously estimates all $k$-RDMs.
\begin{thm}[Estimator for $k$-RDMs]
\label{thm:single_shot_estimate_efficient}Let $(u,\vec{z})$ be
a classical shadow of $\rho$. Let the observable $O=\sum_{\vec{p},\vec{q}\in\mathcal{S}_{n,k}}o_{\vec{p},\vec{q}}\rdm_{\vec{q}}^{\vec{p}}.$
Then the single-shot estimator is
\begin{align}
\langle\hat{O}\rangle & =\tr{}{o\cdot U_{k}^{\dagger}(v_{\vec{z}}^{\dagger}u)\cdot E_{\eta,k}\cdot U_{k}(v_{\vec{z}}^{\dagger}u)},\label{eq:single_shot_estimate_thm_equation}\\
E_{\eta,k} & \doteq\sum_{\vec{r}\in\mathcal{S}_{n,k}}\ketbra{\vec{r}}{\vec{r}}\frac{\binom{\eta-s'}{k-s'}\binom{n-\eta+s'}{s'}}{(-1)^{k+s'}\binom{k}{s'}},\label{eq:estimation_matrix}
\end{align}
where $s'=\left|\vec{r}\cap[\eta]\right|$ and $v_{\vec{z}}$ is any
permutation that maps elements of $[\eta]$ to $\vec{z}$.
\end{thm}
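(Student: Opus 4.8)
The plan is to start from the naive estimator \ref{eq:O_naive_estimator} and reduce it step by step to the compact matrix trace in \ref{eq:single_shot_estimate_thm_equation}. First I would use the factorization $\hat{\rho}_{u,\vec{z}} = U_{\eta}^{\dagger}(v_{\vec{z}}^{\dagger}u)\mathcal{M}_{\wedge^{\eta}\mathcal{U}_{n}}^{-1}[\Pi_{[\eta]}]U_{\eta}(v_{\vec{z}}^{\dagger}u)$ established in \ref{sec:rdm_basis}, so that $\langle\hat{O}\rangle = \tr{}{O\,U_{\eta}^{\dagger}(w)\mathcal{M}_{\wedge^{\eta}\mathcal{U}_{n}}^{-1}[\Pi_{[\eta]}]U_{\eta}(w)}$ with $w = v_{\vec{z}}^{\dagger}u$. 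Using \ref{thm:inverse_measurement_channel} to write $\mathcal{M}^{-1}[\Pi_{[\eta]}] = \sum_{d}a_{d}\binom{n+1}{d}\tilde{n}_{d}$ and cyclicity of the trace, the estimator becomes $\sum_{d}a_{d}\binom{n+1}{d}\tr{}{\big(U_{\eta}(w)OU_{\eta}^{\dagger}(w)\big)\tilde{n}_{d}}$. The key move is that $\tilde{n}_{d}$, being built out of number operators $\hat{n}_{j}=a_j^\dagger a_j$, is itself a linear combination of diagonal $k$-RDMs for various $k$; more importantly, tracing a $k$-RDM against a product of number operators picks out only the "diagonal" structure, so I expect the exponentially large trace to collapse to a sum over $\vec{r}\in\mathcal{S}_{n,k}$ of the diagonal entries $\bra{\vec{r}}U_{\eta}(w)OU_{\eta}^{\dagger}(w)\ket{\vec{r}}$ weighted by $\bra{\vec{r}}\tilde{n}_d\ket{\vec{r}}$ — or rather by $\tr{}{\tilde n_d \, D_{\vec r}^{\vec r}}$ after accounting for the rank-$\binom{n-k}{\eta-k}$ degeneracy of each $k$-mode occupation pattern.

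Next I would invoke \ref{eq:observable_rotation}: conjugating $O$ (a linear combination of $k$-RDMs) by the dimension-$\binom{n}{\eta}$ unitary $U_{\eta}(w)$ is equivalent to conjugating the coefficient matrix $o$ by the dimension-$\binom{n}{k}$ unitary $U_{k}(w)$, so $U_{\eta}(w)OU_{\eta}^{\dagger}(w) = \sum_{\vec{p}',\vec{q}'}\big(U_{k}(w) o\, U_{k}^{\dagger}(w)\big)_{\vec{p}',\vec{q}'}D_{\vec{q}'}^{\vec{p}'}$. Feeding this into the trace against $\tilde{n}_{d}$ and using that $\tr{}{\tilde n_d\, D_{\vec q'}^{\vec p'}}$ vanishes unless $\vec p' = \vec q'$ (since $\tilde n_d$ is diagonal in the occupation basis and $D_{\vec q'}^{\vec p'}$ has a definite off-diagonal block structure), the whole thing reduces to $\sum_{\vec{r}\in\mathcal{S}_{n,k}}\big(U_{k}(w) o\, U_{k}^{\dagger}(w)\big)_{\vec{r},\vec{r}}\,\big[\sum_{d}a_{d}\binom{n+1}{d}\tr{}{\tilde n_d\, \hat n_{\vec r}}\big]$. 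Recognizing the bracketed quantity as $\tr{}{\hat n_{\vec r}\,\mathcal{M}^{-1}[\Pi_{[\eta]}]}$, this is precisely $\tr{}{oU_{k}^{\dagger}(w)E_{\eta,k}U_{k}(w)}$ provided I define the diagonal matrix $E_{\eta,k}$ by $\bra{\vec r}E_{\eta,k}\ket{\vec r} = \tr{}{\hat n_{\vec r}\,\mathcal{M}^{-1}[\Pi_{[\eta]}]}$, which depends on $\vec r$ only through $s' = |\vec r\cap[\eta]|$.

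The last and most laborious step is the explicit evaluation of this diagonal entry, i.e., showing $\tr{}{\hat n_{\vec r}\,\mathcal{M}^{-1}[\Pi_{[\eta]}]} = \binom{\eta-s'}{k-s'}\binom{n-\eta+s'}{s'}/\big[(-1)^{k+s'}\binom{k}{s'}\big]$. I would compute this either directly from $\mathcal{M}^{-1}[\Pi_{[\eta]}] = \sum_d a_d\binom{n+1}{d}\tilde n_d$, expanding $\tilde n_d$ in elementary symmetric polynomials as in \ref{eq:symmetric_difference_elementary_polynomials} and counting how many $\eta$-particle basis states are compatible with a given $k$-mode pattern $\vec r$ having $s'$ modes inside $[\eta]$ and $k-s'$ modes outside — this yields a triple hypergeometric sum of the type handled in \ref{sec:Hypergeometric-sums}. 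Alternatively, and more cleanly, I would use the eigen-relation $\mathcal{M}[\tilde n_{\vec x,\vec y}] = \binom{n+1}{d}^{-1}\tilde n_{\vec x,\vec y}$ "in reverse": expand $\hat n_{\vec r}$ itself in the $\tilde n_{\vec x,\vec y}$ basis restricted to the relevant modes, apply $\mathcal{M}^{-1}$ eigenvalue-by-eigenvalue, and then take the overlap with $\Pi_{[\eta]}$. The main obstacle I anticipate is exactly this combinatorial identity — verifying that the messy sum over the degree $d$ and over the split of $\vec r$ between $[\eta]$ and its complement telescopes to the stated product of three binomials; the bookkeeping of signs $(-1)^{k+s'}$ and the ranks $\binom{n-k}{\eta-k}$ is where errors creep in, and it is the natural place to defer to the hypergeometric-sum lemmas in the appendix. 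Everything upstream — the reduction from the exponential trace to a diagonal sum over $\mathcal{S}_{n,k}$, and the replacement of $U_\eta$ by $U_k$ — is essentially formal once \ref{thm:inverse_measurement_channel} and \ref{eq:observable_rotation} are in hand.
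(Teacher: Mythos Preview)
Your proposal is correct and follows essentially the same route as the paper: reduce to $w=v_{\vec z}^\dagger u$, expand $\mathcal{M}^{-1}[\Pi_{[\eta]}]=\sum_d a_d\binom{n+1}{d}\tilde n_d$, use the diagonality of $\tilde n_d$ together with \ref{eq:observable_rotation} to collapse the $\binom{n}{\eta}$-dimensional trace to a sum over $\vec r\in\mathcal{S}_{n,k}$ of $(U_k(w)\,o\,U_k^\dagger(w))_{\vec r,\vec r}\cdot\tr{}{\hat n_{\vec r}\tilde n_d}$, and then evaluate the remaining combinatorial sum by expanding $\tilde n_d$ in elementary symmetric polynomials and deferring to the hypergeometric lemma in the appendix. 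One small correction: the resulting sum over $d,d',x',y'$ is a \emph{quadruple} hypergeometric sum (not triple), exactly as in \ref{eq:estimation_matrix_entries}, and the paper takes only your first option (direct expansion via \ref{eq:symmetric_difference_elementary_polynomials}) rather than your alternative of expanding $\hat n_{\vec r}$ in the $\tilde n_{\vec x,\vec y}$ basis.
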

\begin{proof}
The inverse measurement channel on $\Pi_{\vec{z}}=U_{\eta}^{\dagger}(v_{\vec{z}}^{\dagger})\Pi_{[\eta]}U_{\eta}(v_{\vec{z}}^{\dagger})$
can be expressed as one on $\Pi_{[\eta]}$ through
\begin{equation}
\mathcal{M}_{\wedge^{\eta}\mathcal{U}_{n}}^{-1}\left[\Pi_{\vec{z}}\right]=U_{\eta}^{\dagger}(v_{\vec{z}}^{\dagger})\mathcal{M}_{\wedge^{\eta}\mathcal{U}_{n}}^{-1}\left[\Pi_{[\eta]}\right]U_{\eta}(v_{\vec{z}}^{\dagger}).
\end{equation}
The estimator \ref{eq:O_naive_estimator} combined with the expression
$\mathcal{M}_{\wedge^{\eta}\mathcal{U}_{n}}^{-1}\left[\Pi_{[\eta]}\right]$
from \ref{thm:inverse_measurement_channel}, is
\begin{align}
\langle\hat{O}\rangle & =\sum_{d=0}^{\eta}a_{d}\binom{n+1}{d}\tr{}{U_{\eta}(v_{\vec{z}}^{\dagger}u)OU_{\eta}^{\dagger}(v_{\vec{z}}^{\dagger}u)\tilde{n}_{d}}.
\end{align}
To simplify notation, let $w=v_{\vec{z}}^{\dagger}u$. As $\tilde{n}_{d}$
is diagonal, only the diagonal components of $U_{\eta}(w)OU_{\eta}^{\dagger}(w)$
have non-zero contributions to the trace. When $O$ from \ref{eq:observable_linear_combination_rdm}
is a sum of $k$-RDMs, \ref{eq:observable_rotation} tells that diagonal
components of
\begin{equation}
U_{\eta}(w)OU_{\eta}^{\dagger}(w)=\sum_{\vec{r}\in\mathcal{S}_{n,k}}\left(U_{k}(w)\cdot o\cdot U_{k}^{\dagger}(w)\right)_{\vec{r},\vec{r}}\hat{n}_{\vec{r}}+\cdots.
\end{equation}
Hence, the single-shot estimate is
\begin{align}
\langle\hat{O}\rangle & =\tr{}{U_{k}(w)\cdot o\cdot U_{k}^{\dagger}(w)\cdot E_{\eta,k}},\\
E_{\eta,k}= & \sum_{\vec{r}\in\mathcal{S}_{n,k}}\ketbra{\vec{r}}{\vec{r}}\tr{}{\hat{n}_{\vec{r}}\tilde{n}_{d}}.
\end{align}
Using the cyclic property of the trace, this matches the form of \ref{eq:single_shot_estimate_thm_equation}. 

However, this evaluation is still not efficient as the estimation
matrix $E_{\eta,k}$ is still a sum over exponentially many terms
through $\tilde{n}_{d}=\sum_{\vec{x}\in S_{\eta,d}}\sum_{\vec{y}\in S_{[n]\backslash[\eta],d}}\tilde{n}_{\vec{x},\vec{y}}$.
The sum over permutations implies that the trace has a value that
depends on $\vec{r}$ only through the number of elements that are
not in $[\eta]$. Suppose $\hat{n}_{\vec{r}}$ has $k-s=\left|\vec{r}\cap\text{\ensuremath{[\eta]}}\right|$
elements that overlap with $[\eta]$ and $s=\left|\vec{r}\backslash\text{\ensuremath{[\eta]}}\right|$
elements that overlap with $[n]\backslash[\eta]$. Then let
\begin{align}
E_{\eta,k}= & \sum_{s=0}^{k}E_{\eta,k,s}\sum_{\vec{r}\in\mathcal{S}_{n,k}:\left|\vec{r}\backslash\text{\ensuremath{[\eta]}}\right|=s}\ketbra{\vec{r}}{\vec{r}},\label{eq:estimation_matrix_subspaces}\\
E_{\eta,k,s} & \doteq\tr{}{\hat{n}_{\vec{r}}\tilde{n}_{d}}=\tr{}{\hat{n}_{[k-s]}\hat{n}_{\eta+[s]}\tilde{n}_{d}}.
\end{align}
By expressing $\tilde{n}_{d}$ in terms of elementary symmetric polynomials
\ref{eq:symmetrized_difference_operator}, we obtain
\begin{align}
E_{\eta,k,s} & =\sum_{d'=0}^{d}\frac{(\eta-d+d')!}{(-1)^{d'}(\eta-d)!}\frac{(n-\eta-d')!}{(n-\eta-d)!}E_{\eta,k,s,d'}\\
E_{\eta,k,s,d'} & =\mathrm{Tr}\big[\hat{n}_{[k-s]}e_{d-d'}(\hat{n}_{1},\cdots,\hat{n}_{\eta})\hat{n}_{\eta+[s]}\nonumber \\
 & \qquad\times e_{d'}(\hat{n}_{\eta+1},\cdots,\hat{n}_{n})\big].
\end{align}
Now, observe that $\hat{n}_{j}^{2}=\hat{n}_{j}$ and that some of
the number operators in $\hat{n}_{[k-s]}\hat{n}_{\eta+[s]}$ may be
identical to some of the terms in the symmetric polynomials. Thus
we separate the sums in $e_{d-d'}(\cdots)$ and $e_{d'}(\cdots)$
into cases where the number operator indices contain $x'$ and $y'$
elements of $[k-s]$ and $\eta+[s]$ respectively, that is
\begin{align}
E_{\eta,k,s,d'} & =\sum_{x'=0}^{k-s}\sum_{y'=0}^{s}\binom{k-s}{x'}\binom{\eta-k+s}{d-d'-x'}\binom{s}{y'}\\
 & \qquad\times\binom{n-\eta-s}{d'-y'}\binom{n-(d+k-x'-y')}{\eta-(d+k-x'-y')}.\nonumber 
\end{align}
Combining all these expressions, 
\begin{equation}
E_{\eta,k,s}=\sum_{d=0}^{\eta}\sum_{d'=0}^{d}\sum_{x'=0}^{k-s}\sum_{y'=0}^{s}\cdots,\label{eq:estimation_matrix_entries}
\end{equation}
is a quadruple sum over hypergeometric terms that we simplify in \ref{lem:estimation_matrix_values}
to obtain \ref{eq:estimation_matrix}.
\end{proof}

\subsection{\label{sec:Pffaffians}$k$-RDM estimator in matrix multiplication
time}

We now present a scheme for computing the estimator for any $k$-RDM
in the time $\mathcal{O}(k^{2}\eta^{\alpha}+\eta k)$ time, where
multiplying a $k\times\eta$ matrix with a $\eta\times k$ matrix
takes $\mathcal{O}(k^{2}\eta^{\alpha})$ time for some matrix multiplication
exponent $\alpha\le1$. From each measurement of the quantum state,
we can estimate any $k$-RDM in $\mathcal{O}(k^{2}\eta)$ time, which
is significant improvement, especially in applications \citep{Huggins2022FermionicQMC}
where these estimates must be computed a large number of times.

Following \ref{sec:rdm_basis} and \ref{eq:single-shot}, computing
the estimate of any $k$-RDM $\langle\hat{D}_{\vec{q}}^{\vec{p}}\rangle$
from a classical shadow $(u,\vec{z})$ reduces to computing the estimate
at most four diagonal $k$-RDMs of the form 
\begin{align}
\langle\hat{n}_{[k]}\rangle_{v_{\vec{z}}uw_{j}}\doteq & \langle U_{k}(w_{j})\hat{D}_{[k]}^{[k]}U_{k}^{\dagger}(w_{j})\rangle\nonumber \\
= & \tr k{\hat{n}_{[k]}U_{k}^{\dagger}(v_{\vec{z}}^{\dagger}uw_{j})E_{\eta,k}U_{k}(v_{\vec{z}}^{\dagger}uw_{j})},
\end{align}
where the trace is over the $k$-particle subspace, i.e. $\tr kX=\sum_{\vec{x}\in\mathcal{S}_{n,k}}\bra{\vec{x}}X\ket{\vec{x}}=\tr{}{\Pi_{k}X},$where
$\Pi_{k}$ projects onto $k$-particle subspace. For brevity, let
$u=v_{\vec{z}}^{\dagger}uw_{j}$. From \ref{eq:estimation_matrix_subspaces},
the estimation matrix from \ref{eq:estimation_matrix} has the same
value for all $\vec{r}$ that share the same value $s=\left|\vec{r}\backslash\text{\ensuremath{[\eta]}}\right|$.
Hence, we may write
\begin{align}
E_{\eta,k} & =\sum_{s=0}^{k}\underbrace{E_{\eta,k,s}'}_{E_{\eta,k,k-s}}\mathrm{Sim}_{k,s}\left(\Pi_{[\eta]}\right),\\
\mathrm{Sim}_{k,s}\left(\Pi_{[\eta]}\right) & \doteq\sum_{\vec{r}\in\mathcal{S}_{n,k}:|\vec{r}\cap[\eta]|=s}\ketbra{\vec{r}}{\vec{r}}\nonumber \\
 & =\sum_{j=0}^{k}(-1)^{j+s}\binom{j}{s}e_{j}\left(\{\hat{n}\}_{[\eta]}\right),\label{eq:sim_k_s_number_operators}
\end{align}
where the proof for the second equality of $\mathrm{Sim}_{k,s}\left(\Pi_{[\eta]}\right)$
is similar to \ref{lem:similarity_projector_to_symmetric_polynomial}
for the special case $\mathrm{Sim}_{k}\left(\Pi_{[\eta]}\right)=\mathrm{Sim}_{\eta,k}\left(\Pi_{[\eta]}\right)$.
Then estimating $\langle\hat{D}_{\vec{q}}^{\vec{p}}\rangle=\tr k{D_{\vec{q}}^{\vec{p}}U_{k}^{\dagger}(u)E_{\eta,k}U_{k}(u)}$
reduces to estimating terms of the form

\begin{equation}
\langle\hat{n}_{[k]}\rangle_{u}=\sum_{s=0}^{k}E'_{\eta,k,s}\tr k{\hat{n}_{[k]}U_{k}^{\dagger}(u)\mathrm{Sim}_{k,s}\left(\Pi_{[\eta]}\right)U_{k}(u)}.
\end{equation}

We proceed further by representing regular fermions operators with
Majorana fermion $\gamma_{p}$ operators where
\begin{align}
a_{p} & =\frac{\gamma_{2p-1}+i\gamma_{2p}}{2},\quad a_{p}^{\dagger}=\frac{\gamma_{2p-1}-i\gamma_{2p}}{2},\\
\hat{n}_{j} & =\frac{1}{2}\big(1+i\beta_{j}\big),\quad\beta_{j}\doteq\gamma_{2j-1}\gamma_{2j},
\end{align}
and $\beta_{j}$ is a Majorana bivector. In this representation, the
single-particle rotations $U(u)$ rotates Majorana operators to other
linear combination of Majorana operators
\begin{align}
&U(u)\gamma_{p}U^{\dagger}(u)  =\sum_{q}\tilde{u}_{p,q}\gamma_{q},\\
&\tilde{u}  =\mathrm{Re}[u]\otimes\underbrace{\left(\begin{array}{cc}
1 & 0\\
0 & 1
\end{array}\right)}_{I_{[2]}}+\mathrm{Im}[u]\otimes\underbrace{\left(\begin{array}{cc}
0 & -1\\
1 & 0
\end{array}\right)}_{Y}.
\end{align}
More generally, given any real orthogonal matrix $R\in\mathcal{O}_{2n}$,
let us overload the notation for $U$ with
\begin{align}
U(R)\gamma_{p}U^{\dagger}(R) & =\sum_{q}R_{q,p}\gamma_{q}.
\end{align}
Elementary symmetric polynomials of number operators $e_{j'}\left(\{\hat{n}\}_{[k]}\right)$
can be shown to be linear combinations of $e_{j}\left(\{\beta\}_{[\eta]}\right)$
for $j'\in[0,j]$. From the generating function
\begin{equation}
\prod_{j=0}^{\eta}(\kappa-\lambda\hat{n}_{j})=\sum_{j=0}^{\eta}\kappa^{\eta-j}(-\lambda)^{j}e_{j}\left(\{\hat{n}\}_{[\eta]}\right),
\end{equation}
 for elementary symmetric polynomials, one can show that
\begin{align}
&\prod_{j=0}^{\eta}\left(\kappa-\frac{\lambda}{2}-i\frac{\lambda}{2}\beta_{j}\right)\nonumber \\ &\qquad =\sum_{\alpha=0}^{\eta}\left(\kappa-\frac{\lambda}{2}\right)^{\eta-\alpha}\left(-i\frac{\lambda}{2}\right)^{\alpha}e_{\alpha}\left(\{\beta\}_{[\eta]}\right).
\end{align}
The coefficient of the monomial $(-\lambda)^{j}$ is, after setting
$\kappa=1$, and using the binomial theorem,
\begin{align}
e_{j}\left(\{\hat{n}\}_{[\eta]}\right) & =\frac{1}{2^{j}}\sum_{\alpha=0}^{\eta}\binom{\eta-\alpha}{j-\alpha}i^{\alpha}e_{\alpha}\left(\{\beta\}_{[\eta]}\right).\label{eq:symmetric_n_to_beta}
\end{align}
Substituting \ref{eq:symmetric_n_to_beta} into \ref{eq:sim_k_s_number_operators},
we obtain 
\begin{align}
\mathrm{Sim}_{k,s}\left(\Pi_{[\eta]}\right) & =(-1)^{s}\sum_{j=0}^{\eta}f_{k,s}(j)i{}^{j}e_{j}\left(\{\beta\}_{[\eta]}\right),\label{eq:Sim_k_s_symmetric_polynomial_beta}\\
f_{k,s}(j) & \doteq\sum_{x=j}^{k}(-1)^{x}\binom{x}{s}\frac{1}{2^{j}}\binom{\eta-j}{\eta-x}.
\end{align}

The estimate $\langle\hat{\Pi}_{[k]}\rangle_{u}$ is then a linear
combination traces of various degrees of elementary symmetric polynomials
of Majorana bivectors
\begin{align}
\langle\hat{n}_{[k]}\rangle_{u} & =\sum_{j=0}^{k}\alpha_{\eta,k,j}\tr k{\hat{n}_{[k]}U_{k}^{\dagger}(u)e_{j}\left(\{\beta\}_{[\eta]}\right)U_{k}(u)},\label{eq:estimator_symmetric-polynomials}\\
\alpha_{\eta,k,j} & \doteq\sum_{s=0}^{k}(-1)^{s}f_{k,s}(j)i{}^{j}E'_{\eta,k,s}.
\end{align}
 with some coefficients $\alpha_{\eta,k,j}$. The speedup we present
in this section arises from an improved method for computing in $\mathcal{O}(\eta k^{2})$
time traces of the form
\begin{equation}
\tr k{\hat{n}_{[k]}U(u)e_{j}\left(\{\beta\}_{[\eta]}\right)U^{\dagger}(u)}.\label{eq:inner_product}
\end{equation}
simultaneously for all elementary symmetric polynomials $j=0,\cdots,\eta$
of Majorana bivectors. Once computed, the $k$-RDM estimate $\langle U_{k}(u)\hat{D}_{[k]}^{[k]}U_{k}^{\dagger}(u)\rangle$
is obtained by an appropriate linear combination of \ref{eq:inner_product}. 

We note that inner product \ref{eq:inner_product} can be inferred
from $\mathcal{O}(n)$ calculations of fermionic linear optics \citep{Bravyi2005},
which is related to the strong simulability of matchgate circuits
or non-interacting fermion distributions \citep{Terhal2002fermions},
and is known to be possible in polynomial time $\mathcal{O}(\text{poly}(n))$
\citep{Brod2016Matchgate}. Given any real anti-symmetric matrix $M$
that has block-diagonal form 
\begin{equation}
M=R\left[\bigoplus_{j\in[n]}\lambda_{j}Y\right]R^{T},\quad R\in\mathcal{SO}_{2n},
\end{equation}
let the fermionic gaussian \citep{Bravyi2005}
\begin{align}
\rho(M) & =e^{\frac{i}{2}\sum_{j,k\in[2n]}\theta_{j}M_{j,k}\theta_{k}}\\\nonumber &=U(R)\prod_{j\in[n]}(1+i\lambda_{j}\beta_{j})U^{\dagger}(R).\label{eq:fermionic_gaussian}
\end{align}
Following \citep{Bravyi2005}, it is well-known that the trace of
two products of fermionic gaussians evaluates to a Pfaffian as follows
\begin{align}
\tr{}{\rho(M)\rho(K)} & =\mathrm{Pf}(M)\mathrm{Pf}(K-M^{-1}).
\end{align}

Our algorithm works in two key steps. First, it reduces the inner
product computation with each elementary symmetric polynomials $e_{x}(\beta_{1},\cdots,\beta_{\eta})$
\ref{eq:inner_product} to evaluating at most $\eta$ instances of
fermionic linear optics $\tr{}{\rho(M)\rho(K)}$ for some $M$ and
$K$. Second, theses $\eta$ instances turn out to be related and
may actually all be evaluated with one unit of effort.

In the first step, we show that for any operator $X$, the trace of
$\tr k{\hat{n}_{[k]}X}=\tr k{\rho_{k}X}$ over $k$ particles is equal
to the trace over all particles times an appropriate fermionic gaussian
\begin{align}
\rho_{k} & \doteq\rho\left(\Lambda\otimes Y\right)=\hat{n}_{[k]}\prod_{j\in[n]\backslash[k]}(1-\hat{n}_{j}),\\
\Lambda & \doteq I_{[k]}-I_{[n]\backslash[k]}=I_{[n]}-2I_{[n]\backslash[k]}=2I_{[k]}-I_{[n]},
\end{align}
for some invertible matrix $\Lambda$. This implies the desired equality
\begin{align}
\tr k{\hat{n}_{[k]}X} & =\tr{}{\hat{n}_{[k]}\prod_{j\in[n]\backslash[k]}(1-\hat{n}_{j})X}=\tr{}{\rho_{k}X}.
\end{align}

Next, observe that the fermionic gaussian itself is a generating function
for elementary symmetric polynomials 
\begin{align}
\rho(\kappa M) & =U(R)\prod_{j=1}^{n}(1+i\kappa\lambda_{j}\beta_{j})U^{\dagger}(R) \\ \nonumber
 & =U(R)\sum_{j=0}^{n}(i\kappa)^{j}e_{j}\left(\lambda_{1}\beta_{1},\cdots,\lambda_{n}\beta_{n}\right)U^{\dagger}(R),
\end{align}
Hence the following fermonic gaussian is a linear combination of the
symmetric polynomials $e_{j}\left(\{\beta\}_{[\eta]}\right)$ seen
in \ref{eq:Sim_k_s_symmetric_polynomial_beta}
\begin{equation}
\rho_{\eta}(\kappa)\doteq\rho\left(\kappa I_{[\eta]}\otimes Y\right)=\sum_{j=0}^{\eta}(i\kappa)^{j}e_{j}\left(\{\beta\}_{[\eta]}\right).
\end{equation}
Thus, a single evaluation of 
\begin{align}
 & \tr{}{\rho_{k}U_{k}(u)\rho_{\eta}(\kappa)U_{k}^{\dagger}(u)}\nonumber\\
 & \quad=\sum_{j=0}^{\eta}(i\kappa)^{j}\tr k{U(u)\hat{n}_{[k]}U^{\dagger}(u)e_{j}\left(\{\beta\}_{[\eta]}\right)},
\end{align}
is some linear combination of the desired quantity \ref{eq:inner_product}.
Evaluating this on $\mathcal{O}(\eta)$ different values of $\kappa$
then provides enough information to compute all $\tr k{\Pi_{[k]}U(v)e_{j}\left(\{\hat{n}\}_{[\eta]}\right)U^{\dagger}(v)}$,
e.g. by polynomial interpolation.

In the second step, we avoid using polynomial interpolation and find
a faster approach. Observe that the traces we compute turn out to
be Pfaffians of appropriately defined matrices.
\begin{align}
 & \tr{}{\rho_{k}U_{k}(u)\rho_{\eta}(\kappa)U_{k}^{\dagger}(u)} =(-1)^{n-k}\mathrm{Pf}\left[A(\kappa)\right].\label{eq:pfaffian}\\
& A(\kappa) \doteq\kappa I_{[\eta]}\otimes Y-\tilde{u}^{T}\Lambda\otimes Y\tilde{u}
\end{align}
By taking high-order derivatives with respect to $\kappa$, we are
able to isolate the traces with individual elementary symmetric polynomials.
For instance,
\begin{align}
 & \partial_{\kappa}^{x}\tr{}{\rho_{k}U_{k}(u)\rho_{\eta}(\kappa)U_{k}^{\dagger}(u)}\vert_{\kappa=0}\nonumber \\
 & \quad=x!i^{x}\tr k{\hat{n}_{[k]}U(u)e_{x}\left(\{\beta\}_{[\eta]}\right)U^{\dagger}(u)}.
\end{align}
In the following, we present an efficient method to compute all derivatives
of the Pfaffian in \ref{eq:pfaffian}. Observe that the derivatives
of a Pfaffian in general is
\begin{align}
\partial\mathrm{Pf}(A) & =\frac{1}{2}\mathrm{Pf}(A)\tr{}{A^{-1}\partial A}.
\end{align}
In our case, the derivatives are\begin{widetext}
\begin{align}
\partial_{\kappa}\mathrm{Pf}(A(\kappa)) & =\frac{1}{2}\mathrm{Pf}(A)\tr{}{A^{-1}(I_{[\eta]}\otimes Y)},\\
\partial_{\kappa}^{2}\mathrm{Pf}(A(\kappa)) & =\frac{1}{2}\partial_{\kappa}\mathrm{Pf}(A)\tr{}{A^{-1}(I_{[\eta]}\otimes Y)}-\frac{1}{2}\mathrm{Pf}(A)\tr{}{A^{-1}(I_{[\eta]}\otimes Y)A^{-1}(I_{[\eta]}\otimes Y)},\\
\vdots & =\nonumber \\
\partial_{\kappa}^{x}\mathrm{Pf}(A(\kappa)) & =\frac{1}{2}\sum_{j=0}^{x-1}\binom{x-1}{j}\partial_{\kappa}^{x-j-1}(-1)^{j}\mathrm{Pf}(A)\tr{}{\left(A^{-1}(I_{[\eta]}\otimes Y)\right)^{j+1}}.
\end{align}
\end{widetext}This recursion allows us to compute higher-order derivatives
from lower-order derivatives. After computing all the traces $\tr{}{\left(A^{-1}(I_{[\eta]}\otimes Y)\right)^{j}}\vert_{\kappa=0}$,
the recursion for all derivatives $\forall x\in[\eta],\;\partial_{\kappa}^{x}\mathrm{Pf}(A)\vert_{\kappa=0}$
can be solved in $\mathcal{O}(\eta^{2})$ time.

We now evaluate the trace. Observe that
\begin{align}
A(0) & =A^{-1}(0)=-\tilde{u}^{T}\Lambda\otimes Y\tilde{u}.
\end{align}
Hence, the trace
\begin{align}
 & \tr{}{\left(A^{-1}(I_{[\eta]}\otimes Y)\right)^{j}}\vert_{\kappa=0}\nonumber \\
 & =(-1)^{j}\tr{}{\left((\Lambda\otimes Y)(\tilde{u}I_{[\eta]}\otimes Y\tilde{u}^{T})\right)^{j}}.
\end{align}
In principle, it suffices to evaluate the all eigenvalues of $(\Lambda\otimes Y)(\tilde{u}I_{[\eta]}\otimes Y\tilde{u}^{T})$.
Computing eigenvalues of a $2n\times2n$ matrix takes $\mathcal{O}(n^{\omega})$
time and would enable the straightforward computation of the trace.
Indeed, a similar approach was taken in \citep{Wan2022MatchgateShadows}
for the non-particle conserving case. However, we now highlight optimizations
for our particle-conserving case that reduces the problem to finding
the eigenvalues of an even smaller $2k\times2k$ matrix. Using the
identity $\Lambda=2I_{[k]}-I_{[n]}$,
\begin{widetext}
\begin{align}
 & \tr{}{\left(A^{-1}(I_{[\eta]}\otimes Y)\right)^{j}}\vert_{\kappa=0}\nonumber \\
 & =(-1)^{j}\tr{}{\left(2(\tilde{u}^{T}I_{[k]}\otimes Y\tilde{u})(I_{[\eta]}\otimes Y)-(\tilde{u}^{T}I_{[n]}\otimes Y\tilde{u})(I_{[\eta]}\otimes Y)\right)^{j}}\nonumber \\
 & =(-1)^{j}\tr{}{\left(-2(\tilde{iu}^{T}I_{[k]}\otimes I_{[2]}\tilde{iu})(I_{[\eta]}\otimes I_{[2]})+I_{[\eta]}\otimes I_{[2]}\right)^{j}},
\end{align}
\end{widetext}
where $\tilde{iu}\doteq-\mathrm{Im}[u]\otimes I_{[2]}+\mathrm{Re}[u]\otimes Y.$
Hence using the binomial expansion and the cyclic property of the
trace,\begin{widetext}
\begin{align}
\tr{}{\left(A^{-1}(I_{[\eta]}\otimes Y)\right)^{j}}\vert_{\kappa=0} & =(-1)^{j}\sum_{y=0}^{j}(-2)^{y}\binom{j}{y}\tr{}{I_{[\eta]}\otimes I_{[2]}\left[(\tilde{iu}^{T}I_{[k]}\otimes I_{[2]}\tilde{iu})(I_{[\eta]}\otimes I_{[2]})\right]^{y}}\nonumber \\
 & =(-1)^{j}\left[\tr{}{I_{[\eta]}\otimes I_{[2]}}+\sum_{y=1}^{j}(-2)^{y}\binom{j}{y}\tr{}{\left[(\tilde{iu}^{T}I_{[k]}\otimes I_{[2]}\tilde{iu})(I_{[\eta]}\otimes I_{[2]})\right]^{y}}\right]\nonumber \\
 & =(-1)^{j}\left[2\eta+\sum_{y=1}^{j}(-2)^{y}\binom{j}{y}\tr{}{\left[\underbrace{(I_{[k]}\otimes I_{[2]})(\tilde{iu}I_{[\eta]}\otimes I_{[2]}\tilde{iu}^{T})(I_{[k]}\otimes I_{[2]})}_{M}\right]^{y}}\right].
\end{align}
\end{widetext}

Within the trace, observe that the matrix $M$ is non-zero on only
a $2k\times2k$ block. We may write $M=m\cdot m^{T}$ where 
\begin{equation}
m=(I_{[k]}\otimes I_{[2]})(\tilde{iu}I_{[\eta]}\otimes I_{[2]}),
\end{equation}
 is a $2k\times2\eta$ submatrix of $\tilde{iu}$. We note that constructing
the $2k\times2\eta$ submatrix of $\tilde{iu}$ for any $u=v_{\vec{z}}^{\dagger}uv_{\vec{p}}w_{\phi}$
only takes $\mathcal{O}(k\eta)$ arithmetic opertations if $v_{\vec{z}}^{\dagger},v_{\vec{p}},w_{\phi}$,
which contain at most two non-zero elements in any row or column,
are stored as sparse matrices. Hence $M$ is obtained by matrix multiplication
of a $2k\times2\eta$ matrix with a $2\eta\times2k$ matrix, which
takes $\mathcal{O}(k^{2}\eta^{\alpha})$ time for some matrix multiplication
exponent $\alpha\le1$. Once $M$ is obtained, we may obtain all of
its non-zero eigenvalues in $\mathcal{O}(k^{\omega})$ times. Subsequently,
all powers of its eigenvalues up to $\eta$ may be obtained (such
as by repeated squaring) in $\mathcal{O}(\eta k)$ multiplications,
which allows all traces $\tr{}{M^{y}}$ for all $y\in[\eta]$ to be
computed in an overall time of $\mathcal{O}(k^{2}\eta^{\alpha}+k^{\omega}+\eta k)$.
Assuming the naive cubic-time algorithm for matrix this takes $\mathcal{O}(k^{2}\eta)$
times, dominated by the cost of forming $M$. Subsequently, all traces
$\tr{}{\left(A^{-1}(I_{[\eta]}\otimes Y)\right)^{j}}\vert_{\kappa=0}$
for all $j\in[\eta]$ may be computed recursively in $\mathcal{O}(\eta^{2})$
time, which implies an overall complexity of $\mathcal{O}(\eta^{2}+k^{2}\eta)$
for computing all derivatives of the Pfaffian $\partial_{\kappa}^{x}\mathrm{Pf}(A)$.

\section{Error of estimation\label{sec:shadow_norm}}

We now evaluate the variance of our estimator from \ref{sec:efficient_estimation_from_shadows}
for any $k$-RDM. For brevity, we use the notation for the expectation
$\mathbb{E}_{u}\doteq\mathbb{E}_{u\sim\mathcal{U}_{n}}$ and $\mathbb{E}_{u,\vec{z}}\doteq\mathbb{E}_{u\sim\mathcal{U}_{n}}\mathbb{E}_{\ket{\vec{z}}\sim U_{\eta}(u)\rho U_{\eta}^{\dagger}(u)}\doteq\mathbb{E}_{u\sim\mathcal{U}_{n}}\sum_{\vec{z}}U_{\eta}(u)\rho U_{\eta}^{\dagger}(u)\ket{\vec{z}}$.
Consider an observable $O=O_{\mathrm{tr}}+\alpha I$ where $O_{\mathrm{tr}}$
is traceless. From the definition of variance for any estimate $\langle\hat{O}\rangle=\tr{}{O\hat{\rho}_{u,\vec{z}}}$,
the variance
\begin{align}
\mathrm{Var}\left[\langle\hat{O}\rangle\right] & =\mathbb{E}_{u,\vec{z}}\left[\left|\tr{}{O\hat{\rho}_{u,\vec{z}}}-\tr{}{O\rho}\right|^{2}\right]\nonumber \\
 & =\mathbb{E}_{u,\vec{z}}\left[\left|\tr{}{O_{\mathrm{tr}}\hat{\rho}_{u,\vec{z}}}\right|^{2}\right]-\left|\tr{}{O_{\mathrm{tr}}\rho}\right|^{2},
\end{align}
only depends on the traceless component. An upper bound on the variance
is then the state-dependent shadow norm
\begin{equation}
\norm O_{\text{s,\ensuremath{\rho}}}^{2}\doteq\mathbb{E}_{u,\vec{z}}\left[\left|\tr{}{O_{\mathrm{tr}}\hat{\rho}_{u,\vec{z}}}\right|^{2}\right].
\end{equation}

As the traceless component of any $k$-RDM is
\begin{equation}
(D_{\vec{q}}^{\vec{p}})_{\mathrm{tr}}=D_{\vec{q}}^{\vec{p}}-\delta_{\vec{p},\vec{q}}\frac{\binom{n-k}{\eta-k}}{\binom{n}{\eta}}I,
\end{equation}
the shadow norm of any $k$-RDM is then
\begin{align}
 & \norm{D_{\vec{q}}^{\vec{p}}}_{\text{s,\ensuremath{\rho}}}^{2}=\mathbb{E}_{u,\vec{z}}\left[\left|\langle\hat{D}_{\vec{q}}^{\vec{p}}\rangle\right|^{2}\right]\nonumber \\
 & \qquad-\delta_{\vec{p},\vec{q}}\left(2\frac{\binom{n-k}{\eta-k}\mathbb{E}_{u,\vec{z}}\left[\langle\hat{D}_{\vec{q}}^{\vec{p}}\rangle\right]}{\binom{n}{\eta}}-\frac{\binom{n-k}{\eta-k}^{2}}{\binom{n}{\eta}^{2}}\right),\label{eq:k_rdm_shadow_norm}
\end{align}
where $\langle\hat{D}_{\vec{q}}^{\vec{p}}\rangle=\tr{}{D_{\vec{q}}^{\vec{p}}\hat{\rho}_{u,\vec{z}}}$
is our single-shot estimator from \ref{eq:single_shot_estimate_thm_equation}
in the previous section that implicitly depends on the shadow $(u,\vec{z})$.
One may also define a state-independent shadow norm by maximizing
$\norm O_{\text{s}}^{2}\doteq\max_{\rho}\norm O_{\text{s,\ensuremath{\rho}}}^{2}.$
Below, we prove that the average variance of estimating all $k$-RDMs
is also a quantity independent of $\rho$ and also small.
\begin{thm}[Error of estimation]
\label{thm:error_bound}For all $\eta$-particle $n$-mode fermion
states $\rho$, the average variance over all $k$-RDMs is upper-bounded
by the average squared shadow norm
\begin{align}
\mathbb{E}_{\vec{p},\vec{q}}\left[\norm{D_{\vec{q}}^{\vec{p}}}_{\text{s},\rho}^{2}\right] & =\frac{\tr{}{E_{\eta,k}^{2}}}{\binom{n}{k}^{2}}-\frac{\binom{n-k}{\eta-k}^{2}}{\binom{n}{\eta}^{2}\binom{n}{k}}\label{eq:shadow_norm_expectation_pq}\\
 & \le\binom{\eta}{k}\left(1-\frac{\eta-k}{n}\right)^{k}\left(\frac{1+n}{1+n-k}\right).\nonumber 
\end{align}
\end{thm}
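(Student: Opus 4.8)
The plan is to treat the two assertions in \ref{eq:shadow_norm_expectation_pq} separately: first the exact value of the $\vec{p},\vec{q}$-averaged squared shadow norm, then the stated upper bound.

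\emph{The exact average.} Starting from \ref{eq:k_rdm_shadow_norm}, I average each of its three pieces over $\vec{p},\vec{q}\in\mathcal{S}_{n,k}$ chosen independently and uniformly. For the leading term I substitute the single-shot estimator \ref{eq:single-shot}, so $\mathbb{E}_{\vec{p},\vec{q}}\big[|\langle\hat{D}_{\vec{q}}^{\vec{p}}\rangle|^{2}\big]=\binom{n}{k}^{-2}\sum_{\vec{p},\vec{q}}\big|\bra{\vec{q}}U_{k}^{\dagger}(v_{\vec{z}}^{\dagger}u)E_{\eta,k}U_{k}(v_{\vec{z}}^{\dagger}u)\ket{\vec{p}}\big|^{2}$; using the resolution of the identity $\sum_{\vec{p}}\ketbra{\vec{p}}{\vec{p}}=\Pi_{k}$ on the $k$-particle subspace, the fact that $U_{k}$ preserves that subspace, and that $E_{\eta,k}$ is Hermitian, this collapses by cyclicity of the trace to $\binom{n}{k}^{-2}\tr{}{E_{\eta,k}^{2}}$, which is manifestly independent of the shadow $(u,\vec{z})$ so the outer expectation $\mathbb{E}_{u,\vec{z}}$ is trivial. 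For the correction terms (present only when $\vec{p}=\vec{q}$) I use unbiasedness $\mathbb{E}_{u,\vec{z}}[\langle\hat{D}_{\vec{p}}^{\vec{p}}\rangle]=\langle\hat{n}_{\vec{p}}\rangle$ together with the sum rule $\sum_{\vec{p}\in\mathcal{S}_{n,k}}\langle\hat{n}_{\vec{p}}\rangle=\langle e_{k}(\hat{n}_{1},\dots,\hat{n}_{n})\rangle=\binom{\eta}{k}$, the last equality because $e_{k}$ of all $n$ number operators acts as $\binom{\eta}{k}$ on any $\eta$-particle state. Combining these with the binomial identity $\binom{n}{k}\binom{n-k}{\eta-k}=\binom{n}{\eta}\binom{\eta}{k}$ reduces the averaged correction to exactly $\binom{n-k}{\eta-k}^{2}/\big(\binom{n}{\eta}^{2}\binom{n}{k}\big)$, which subtracted from $\tr{}{E_{\eta,k}^{2}}/\binom{n}{k}^{2}$ gives the claimed equality.

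\emph{The upper bound.} Since the subtracted term in \ref{eq:shadow_norm_expectation_pq} is nonnegative, it suffices to show $\tr{}{E_{\eta,k}^{2}}\le\binom{n}{k}^{2}\binom{\eta}{k}\big(1-\tfrac{\eta-k}{n}\big)^{k}\tfrac{1+n}{1+n-k}$. From the diagonal form \ref{eq:estimation_matrix}, $E_{\eta,k}$ has eigenvalue $\lambda_{s'}=(-1)^{k+s'}\binom{\eta-s'}{k-s'}\binom{n-\eta+s'}{s'}/\binom{k}{s'}$ with multiplicity $\tr{}{\mathrm{Sim}_{k,s'}(\Pi_{[\eta]})}=\binom{\eta}{s'}\binom{n-\eta}{k-s'}$, so $\tr{}{E_{\eta,k}^{2}}=\sum_{s'=0}^{k}\binom{\eta}{s'}\binom{n-\eta}{k-s'}\big(\binom{\eta-s'}{k-s'}\binom{n-\eta+s'}{s'}/\binom{k}{s'}\big)^{2}$. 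Applying $\binom{\eta}{s'}\binom{\eta-s'}{k-s'}=\binom{\eta}{k}\binom{k}{s'}$ twice pulls out a factor $\binom{\eta}{k}^{2}$ and leaves the single sum $\sum_{s'=0}^{k}\binom{n-\eta}{k-s'}\binom{n-\eta+s'}{s'}^{2}/\binom{\eta}{s'}$. I would evaluate this sum in closed form using the hypergeometric-summation machinery of \ref{sec:Hypergeometric-sums}, and then verify that the resulting rational expression equals $\binom{n}{k}^{2}\big(1-\tfrac{\eta-k}{n}\big)^{k}\tfrac{1+n}{1+n-k}/\binom{\eta}{k}$ minus a manifestly nonnegative remainder (for $k=1$ this remainder is precisely $\binom{\eta}{1}^{2}$). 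Converting the binomials to the stated rational form in the final step uses the elementary facts $\binom{n-\eta+k}{k}/\binom{n}{k}=\prod_{j=0}^{k-1}\tfrac{n-\eta+k-j}{n-j}\le\big(1-\tfrac{\eta-k}{n}\big)^{k}$ (each factor bounded since $j(\eta-k)\ge0$) and $\tfrac{1+n}{1+n-k}=\binom{n+1}{k}/\binom{n}{k}$.

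\emph{Main obstacle.} The only nontrivial step is the evaluation, or sufficiently sharp bounding, of $\sum_{s'=0}^{k}\binom{n-\eta}{k-s'}\binom{n-\eta+s'}{s'}^{2}/\binom{\eta}{s'}$. Crude estimates do not suffice: bounding every $|\lambda_{s'}|$ by the largest one — which, from the ratio $\lambda_{s'+1}/\lambda_{s'}=(n-\eta+s'+1)/(\eta-s')$, is $\lambda_{k}=\binom{n-\eta+k}{k}$ whenever $\eta\le(n+1)/2$ — and then applying Vandermonde $\sum_{s'}\binom{\eta}{s'}\binom{n-\eta}{k-s'}=\binom{n}{k}$ gives only $\tr{}{E_{\eta,k}^{2}}\le\binom{n-\eta+k}{k}^{2}\binom{n}{k}$, which already exceeds the target at half-filling (e.g.\ $k=1$, $\eta=n/2$). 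The bound therefore genuinely relies on the internal cancellations of the sum, so I expect the clean route to be an exact closed-form evaluation of $\tr{}{E_{\eta,k}^{2}}$ via the appendix hypergeometric identities, and this is the most technical part of the argument.
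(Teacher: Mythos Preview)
Your derivation of the exact average is correct and matches the paper line for line: the double sum $\sum_{\vec{p},\vec{q}}|\langle\hat{D}_{\vec{q}}^{\vec{p}}\rangle|^{2}$ collapses to $\tr{}{E_{\eta,k}^{2}}$ by cyclicity of the trace, and the diagonal correction reduces via $\sum_{\vec{p}}\langle\hat{n}_{\vec{p}}\rangle=\binom{\eta}{k}$ together with $\binom{n}{k}\binom{n-k}{\eta-k}=\binom{n}{\eta}\binom{\eta}{k}$.

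For the upper bound, your plan has a genuine gap. You propose to evaluate the remaining single sum in closed form and then exhibit the target minus the result as a nonnegative remainder, relegating the product-to-power inequality $\prod_{j}\tfrac{n-\eta+k-j}{n-j}\le(1-\tfrac{\eta-k}{n})^{k}$ to a cosmetic final step. But that sum has no clean closed form: the target involves the \emph{power} $(1-\tfrac{\eta-k}{n})^{k}$ rather than a falling factorial, so the bound is strict and the would-be ``remainder'' is not a simple hypergeometric term amenable to the machinery of \ref{sec:Hypergeometric-sums}. The paper never attempts a closed form. Instead, writing $s=k-s'$, it factors the $s$-th summand of $Q_{n,\eta,k}=\tr{}{E_{\eta,k}^{2}}/\binom{n}{k}^{2}$ as
\[
\binom{\eta}{k}\binom{k}{s}\Bigl[\prod_{j=0}^{k-1}\frac{n-\eta+k-s-j}{n-j}\Bigr]\Bigl\{\frac{(\eta-k+s)!(n-\eta+k-s)!(n-k)!}{n!(\eta-k)!(n-\eta)!}\Bigr\},
\]
bounds the bracketed product termwise by $(1-\tfrac{\eta-k}{n})^{k}$ (each factor obeys $\tfrac{n-\eta+k-s-j}{n-j}\le\tfrac{n-\eta+k}{n}$ because $(\eta-k)j+ns\ge0$), and then observes that the residual sum $\sum_{s=0}^{k}\binom{k}{s}\{\cdots\}$ is an elementary identity evaluating exactly to $\tfrac{n+1}{n+1-k}$. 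The inequality you flagged is therefore not a finishing touch but the \emph{central} estimate, and it must be applied \emph{inside} the sum---term by term, with the extra $s$ in the numerator---before summing; that specific split into a uniformly bounded factor and an exactly summable factor is the idea your proposal is missing.
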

\begin{proof}
Consider the case $\vec{p}=\vec{q}$, where the cross-term $\mathbb{E}_{u,\vec{z}}\left[\langle\hat{D}_{\vec{p}}^{\vec{p}}\rangle\right]=\mathbb{E}_{u,\vec{z}}\left[\tr{}{\hat{n}_{\vec{p}}\hat{\rho}_{u,\vec{z}}}\right]$
in \ref{eq:k_rdm_shadow_norm} appears. Using the identity $\sum_{\vec{p}\in\mathcal{S}_{n,k}}\hat{n}_{\vec{p}}=\sum_{\vec{p}\in\mathcal{S}_{n,k}}e_{k}(\hat{n}_{p_{1}},\cdots,\hat{n}_{p_{k}})=e_{k}(\hat{n}_{1},\cdots,\hat{n}_{n})$,
observe that on the space of $\eta$-particle states, 
\begin{equation}
e_{k}(\hat{n}_{1},\cdots,\hat{n}_{n})=\frac{\binom{n-k}{\eta-k}\binom{n}{k}}{\binom{n}{\eta}}I.
\end{equation}
Hence, the sum 
\begin{align}
K\doteq\sum_{\vec{p}\in\mathcal{S}_{n,k}}\mathbb{E}_{u,\vec{z}}\left[\langle\hat{D}_{\vec{p}}^{\vec{p}}\rangle\right] & =\frac{\binom{n-k}{\eta-k}\binom{n}{k}}{\binom{n}{\eta}},
\end{align}
and the sum of shadow norms over all diagonal $k$-RDMs 
\begin{align}
\sum_{\vec{p}\in\mathcal{S}_{n,k}}\norm{D_{\vec{p}}^{\vec{p}}}_{\text{s,\ensuremath{\rho}}}^{2} & =\sum_{\vec{p}\in\mathcal{S}_{n,k}}\mathbb{E}_{u,\vec{z}}\left[\left|\langle\hat{D}_{\vec{p}}^{\vec{p}}\rangle\right|^{2}\right]-K,
\end{align}
and the sum of shadow norms over all $k$-RDMs 
\begin{align}
 & \sum_{\vec{p},\vec{q}\in\mathcal{S}_{n,k}}\norm{D_{\vec{q}}^{\vec{p}}}_{\text{s},\rho}^{2}=\mathbb{E}_{u,\vec{z}}\left[\sum_{\vec{p},\vec{q}\in\mathcal{S}_{n,k}}\left|\langle\hat{D}_{\vec{q}}^{\vec{p}}\rangle\right|^{2}\right]-K\nonumber \\
 & \qquad=\mathbb{E}_{u,\vec{z}}\left[\tr{}{\left(U_{k}^{\dagger}(v_{\vec{z}}u)E_{\eta,k}U_{k}(v_{\vec{z}}u)\right)^{2}}\right]-K\nonumber \\
 & \qquad=\tr{}{E_{\eta,k}^{2}}-\frac{\binom{n-k}{\eta-k}^{2}\binom{n}{k}}{\binom{n}{\eta}^{2}},\label{eq:sum_pq_shadow_norm}
\end{align}
is state-independent, where in the second line, we substitute the
estimator \ref{eq:single_shot_estimate_thm_equation}, and in the
last line, we use the cyclic property of traces to cancel all adjacent
unitaries $U_{k}(v_{\vec{z}}u)U_{k}^{\dagger}(v_{\vec{z}}u)=I$. We
then obtain \ref{eq:shadow_norm_expectation_pq} by diving \ref{eq:sum_pq_shadow_norm}
by the number of terms $\binom{n}{k}^{2}$ in the sum.

From \ref{eq:sum_pq_shadow_norm}, an upper bound on the average shadow
norm is just $\mathbb{E}_{\vec{p},\vec{q}}\left[\norm{D_{\vec{q}}^{\vec{p}}}_{\text{s},\rho}^{2}\right]\le\frac{\tr{}{E_{\eta,k}^{2}}}{\binom{n}{k}^{2}}\doteq Q_{n,\eta,k}$.
From \ref{thm:single_shot_estimate_efficient}, $\bra{\vec{p}}E_{\eta,k}\ket{\vec{p}}=(-1)^{s}\binom{k}{s}^{-1}\binom{\eta-k+s}{s}\binom{n-\eta+k-s}{k-s}$,
where $s=\left|\vec{p}\backslash\text{\ensuremath{[\eta]}}\right|$.
By writing the sum $\sum_{\vec{p}\in\mathcal{S}_{n,k}}\cdots=\sum_{s=0}^{k}\binom{n-\eta}{s}\binom{\eta}{k-s}\cdots$
and collecting terms, 
\begin{align}
 & Q_{n,\eta,k}=\binom{\eta}{k}\sum_{s=0}^{k}\binom{k}{s}\left[\prod_{j=0}^{k-1}\frac{n-\eta+k-s-j}{n-j}\right]\nonumber \\
 & \qquad\times\left\{ \frac{(\eta-k+s)!(n-\eta+k-s)!(n-k)!}{n!(\eta-k)!(n-\eta)!}\right\} .
\end{align}
We arrive at the bound in \ref{eq:shadow_norm_expectation_pq} by
observing that the product on the right $\left[\prod_{j=0}^{k-1}\cdots\right]\le\left(1-\frac{\eta-k}{n}\right)^{k}$,
and that $\sum_{s=0}^{k}\binom{k}{s}\left\{ \cdots\right\} =\frac{1+n}{1+n-k}$. 
\end{proof}
We note that tighter bounds may be derived for specific parameter
regimes. For instance, when $k=\eta$, we may increase the upper limit
on the sum as the summand is zero when $s\ge\eta$. Hence, the sum
\begin{align}
Q_{n,\eta,\eta} & =\sum_{s=0}^{n}\frac{\eta!(n-\eta)!}{(\eta-s)!(n-s-\eta)!}\frac{(n-s)!^{2}}{n!^{2}}\label{eq:slater_error}\\
 & \le Q_{2\eta,\eta,\eta}=\sum_{s=0}^{2\eta}\prod_{j=0}^{s-1}\frac{(\eta-j)^{2}}{(2\eta-j)^{2}}\le\sum_{s=0}^{2\eta}\frac{1}{4^{s}}\le\frac{4}{3},\nonumber 
\end{align}
is symmetric about and maximized at $\eta=n/2$. Other interesting
asymptotic bounds include the limit of large $n$ with $\eta$ held
constant. There $\left(1-\frac{\eta-k}{n}\right)^{k}=1-\mathcal{O}(\frac{k(\eta-k)}{n})$
implies $Q_{n,\eta,k}=\binom{\eta}{k}$$\left(1-\mathcal{O}\left(\frac{k(\eta-k)}{n}\right)\right)$.
At half-filling $\eta=n/2$, in the limit of large $n$ with $k$
held constant, $\left(1-\frac{\eta-k}{n}\right)^{k}=2^{-k}\text{\ensuremath{\left(1+\mathcal{O}(k^{2}/n)\right)}}$
implies $Q_{n,n2,k}=\binom{\eta}{k}2^{-k}\left(1+\mathcal{O}(k^{2}/n)\right)$.

\section{Conclusion\label{sec:Conclusion}}

We have presented a technique to estimate all $k$-RDMs of fermion
states extremely efficiently with an average error that depends only
on the number of particles, in contrast to all prior methods which
depend on the number of modes. Our main assumption that the state
of interest has a definite number of particles applies to very many
systems of interest. On a quantum computer, our scheme may be applied
with any fermion-to-qubit mapping so long as the random number-conserving
single-particle rotations can be applied. Implementing our approach
in the second-quantized representation also facilitates a straightforward
approach to error mitigation by symmetry verification \citep{Endo2018ErrorMitigation,Bonet2018ErrorMitigation}
and can detect multiple errors -- simply check that the measured
state $\ket{\vec{z}}$ has $\eta$-particles.

Our estimator is also computationally efficient for all parameters.
Many observables such as electronic structure or nuclear Hamiltonians
only require $k\le3$. However, there are natural applications for
large $k$ as well. Consider the problem of estimating overlaps with
arbitrary Slater determinants $\ket{\vec{q}'}=U_{\eta}(w)\ket{\vec{q}}$
for any $w\in\mathcal{U}_{n}$, a key component of quantum-classical
auxiliary-field quantum Monte Carlo \citep{Huggins2022FermionicQMC}.
For any pure state $\ket{\psi}=\sum_{\vec{p}\in\mathcal{S}_{n,\eta}}\psi_{\vec{p}}\ket{\vec{p}'}$,
add $\eta$ more modes and prepare the state
\begin{equation}
\ket{\psi'}=\frac{\ket{\psi}+\ket{n+[\eta]}}{\sqrt{2}}.\label{eq:slater_state}
\end{equation}
Then the $\eta$-RDM $U_{\eta}(w)D_{\vec{q}}^{n+[\eta]}U_{\eta}^{\dagger}(w)$
has expectation
\begin{equation}
\tr{}{U_{\eta}(w)D_{\vec{q}}^{n+[\eta]}U_{\eta}^{\dagger}(w)\ketbra{\psi'}{\psi'}}=\frac{\psi_{\vec{q}'}}{2},
\end{equation}
which is half of the desired overlap with the Slater determinant $\ket{\vec{q}'}$.
Our approach simultaneously estimates all $\eta$-RDMs extremely efficiently
using only $\frac{4}{3\epsilon^{2}}$ samples on average according
to \ref{eq:slater_error}, which compares favorably to very recent
work \citep{Wan2022MatchgateShadows} that performs the same task
using exponentially more $\tilde{\mathcal{O}}(\sqrt{n}/\epsilon^{2})$
samples, through with a stronger per-RDM error guarantee rather than
an average. Many variations on this idea are possible future directions
to pursue. For instance, the number of additional modes required may
also be reduced to as few as $1$ by preparing $\ket{\psi'}=\frac{\ket{\psi}+\ket{n-\eta+1+[\eta]}}{\sqrt{2}}$
and estimating $D_{\vec{q}}^{n-\eta+1+[\eta]}U_{\eta}^{\dagger}(w)$,
or even to $0$ when the sign is not important by estimating $U_{\eta}(w)D_{\vec{q}}^{\vec{q}}U_{\eta}^{\dagger}(w)$.

As our estimator has no preferred basis, any rotated $k$-RDM $U_{\eta}(w)D_{\vec{q}}^{\vec{p}}U_{\eta}^{\dagger}(w)$
may be estimated just as efficiently and easily as $D_{\vec{q}}^{\vec{p}}$
even if they contain exponentially many terms in the computational
basis. Even more general $k$-RDMs of the form $U_{\eta}(w)D_{\vec{q}}^{\vec{p}}U_{\eta}^{\dagger}(w')$
where $w\neq w'$ may also be estimated following \ref{eq:single_shot_estimate_thm_equation},
though we leave the Pfaffian method in \ref{sec:Pffaffians} for this
case to a future analysis. This suggests that our approach is tailored
to estimating observables of the form $O=\sum_{j\in[R]}\alpha_{j}U_{\eta}(w_{j})D_{\vec{q}_{j}}^{\vec{p}_{j}}U_{\eta}(w_{j}')$,
similar to \ref{eq:density_matrix_sum_of_rotations}, that either
have a low-rank structure or are well-approximated by it, meaning
that $R$ is small and $\left|\vec{\alpha}\right|_{2}$ is minimized.
As the upper-bound on the variance of our estimator was evaluated
using the $2$-fold twirling channel, this approach bounded the average
variance across all $k$-RDMs rather than each $k$-RDM individually.
We leave to future work the task of obtaining the shadow norm of linear
combinations of $k$-RDMs such as $O$, which would require a challenging
evaluation of the $3$-fold twirling operator for the group $\wedge^{\eta}\mathcal{U}_{n}$,
followed by understanding the covariance of $k$-RDM estimation.

\begin{acknowledgments}

We thank Aarthi Sundaram, Jeongwan Haah, Matthias Troyer, and Dave
Wecker for insightful discussions.\end{acknowledgments}

\onecolumngrid

\bibliography{main}

\begin{thebibliography}{30}%
\makeatletter
\providecommand \@ifxundefined [1]{%
 \@ifx{#1\undefined}
}%
\providecommand \@ifnum [1]{%
 \ifnum #1\expandafter \@firstoftwo
 \else \expandafter \@secondoftwo
 \fi
}%
\providecommand \@ifx [1]{%
 \ifx #1\expandafter \@firstoftwo
 \else \expandafter \@secondoftwo
 \fi
}%
\providecommand \natexlab [1]{#1}%
\providecommand \enquote  [1]{``#1''}%
\providecommand \bibnamefont  [1]{#1}%
\providecommand \bibfnamefont [1]{#1}%
\providecommand \citenamefont [1]{#1}%
\providecommand \href@noop [0]{\@secondoftwo}%
\providecommand \href [0]{\begingroup \@sanitize@url \@href}%
\providecommand \@href[1]{\@@startlink{#1}\@@href}%
\providecommand \@@href[1]{\endgroup#1\@@endlink}%
\providecommand \@sanitize@url [0]{\catcode `\\12\catcode `\$12\catcode `\&12\catcode `\#12\catcode `\^12\catcode `\_12\catcode `\%12\relax}%
\providecommand \@@startlink[1]{}%
\providecommand \@@endlink[0]{}%
\providecommand \url  [0]{\begingroup\@sanitize@url \@url }%
\providecommand \@url [1]{\endgroup\@href {#1}{\urlprefix }}%
\providecommand \urlprefix  [0]{URL }%
\providecommand \Eprint [0]{\href }%
\providecommand \doibase [0]{https://doi.org/}%
\providecommand \selectlanguage [0]{\@gobble}%
\providecommand \bibinfo  [0]{\@secondoftwo}%
\providecommand \bibfield  [0]{\@secondoftwo}%
\providecommand \translation [1]{[#1]}%
\providecommand \BibitemOpen [0]{}%
\providecommand \bibitemStop [0]{}%
\providecommand \bibitemNoStop [0]{.\EOS\space}%
\providecommand \EOS [0]{\spacefactor3000\relax}%
\providecommand \BibitemShut  [1]{\csname bibitem#1\endcsname}%
\let\auto@bib@innerbib\@empty
\bibitem [{\citenamefont {Cao}\ \emph {et~al.}(2019)\citenamefont {Cao}, \citenamefont {Romero}, \citenamefont {Olson}, \citenamefont {Degroote}, \citenamefont {Johnson}, \citenamefont {Kieferová}, \citenamefont {Kivlichan}, \citenamefont {Menke}, \citenamefont {Peropadre}, \citenamefont {Sawaya}, \citenamefont {Sim}, \citenamefont {Veis},\ and\ \citenamefont {Aspuru-Guzik}}]{cao19}%
  \BibitemOpen
  \bibfield  {author} {\bibinfo {author} {\bibfnamefont {Y.}~\bibnamefont {Cao}}, \bibinfo {author} {\bibfnamefont {J.}~\bibnamefont {Romero}}, \bibinfo {author} {\bibfnamefont {J.~P.}\ \bibnamefont {Olson}}, \bibinfo {author} {\bibfnamefont {M.}~\bibnamefont {Degroote}}, \bibinfo {author} {\bibfnamefont {P.~D.}\ \bibnamefont {Johnson}}, \bibinfo {author} {\bibfnamefont {M.}~\bibnamefont {Kieferová}}, \bibinfo {author} {\bibfnamefont {I.~D.}\ \bibnamefont {Kivlichan}}, \bibinfo {author} {\bibfnamefont {T.}~\bibnamefont {Menke}}, \bibinfo {author} {\bibfnamefont {B.}~\bibnamefont {Peropadre}}, \bibinfo {author} {\bibfnamefont {N.~P.~D.}\ \bibnamefont {Sawaya}}, \bibinfo {author} {\bibfnamefont {S.}~\bibnamefont {Sim}}, \bibinfo {author} {\bibfnamefont {L.}~\bibnamefont {Veis}},\ and\ \bibinfo {author} {\bibfnamefont {A.}~\bibnamefont {Aspuru-Guzik}},\ }\bibfield  {title} {\bibinfo {title} {Quantum chemistry in the age of quantum computing},\ }\href {https://doi.org/10.1021/acs.chemrev.8b00803} {\bibfield
  {journal} {\bibinfo  {journal} {Chemical Reviews}\ }\textbf {\bibinfo {volume} {119}},\ \bibinfo {pages} {10856} (\bibinfo {year} {2019})}\BibitemShut {NoStop}%
\bibitem [{\citenamefont {von Burg}\ \emph {et~al.}(2021)\citenamefont {von Burg}, \citenamefont {Low}, \citenamefont {Häner}, \citenamefont {Steiger}, \citenamefont {Reiher}, \citenamefont {Roetteler},\ and\ \citenamefont {Troyer}}]{vonBurg2020carbon}%
  \BibitemOpen
  \bibfield  {author} {\bibinfo {author} {\bibfnamefont {V.}~\bibnamefont {von Burg}}, \bibinfo {author} {\bibfnamefont {G.~H.}\ \bibnamefont {Low}}, \bibinfo {author} {\bibfnamefont {T.}~\bibnamefont {Häner}}, \bibinfo {author} {\bibfnamefont {D.~S.}\ \bibnamefont {Steiger}}, \bibinfo {author} {\bibfnamefont {M.}~\bibnamefont {Reiher}}, \bibinfo {author} {\bibfnamefont {M.}~\bibnamefont {Roetteler}},\ and\ \bibinfo {author} {\bibfnamefont {M.}~\bibnamefont {Troyer}},\ }\bibfield  {title} {\bibinfo {title} {Quantum computing enhanced computational catalysis},\ }\href {https://doi.org/10.1103/PhysRevResearch.3.033055} {\bibfield  {journal} {\bibinfo  {journal} {Physical Review Research}\ }\textbf {\bibinfo {volume} {3}},\ \bibinfo {pages} {033055} (\bibinfo {year} {2021})}\BibitemShut {NoStop}%
\bibitem [{\citenamefont {Huggins}\ \emph {et~al.}(2022)\citenamefont {Huggins}, \citenamefont {O’Gorman}, \citenamefont {Rubin}, \citenamefont {Reichman}, \citenamefont {Babbush},\ and\ \citenamefont {Lee}}]{Huggins2022FermionicQMC}%
  \BibitemOpen
  \bibfield  {author} {\bibinfo {author} {\bibfnamefont {W.~J.}\ \bibnamefont {Huggins}}, \bibinfo {author} {\bibfnamefont {B.~A.}\ \bibnamefont {O’Gorman}}, \bibinfo {author} {\bibfnamefont {N.~C.}\ \bibnamefont {Rubin}}, \bibinfo {author} {\bibfnamefont {D.~R.}\ \bibnamefont {Reichman}}, \bibinfo {author} {\bibfnamefont {R.}~\bibnamefont {Babbush}},\ and\ \bibinfo {author} {\bibfnamefont {J.}~\bibnamefont {Lee}},\ }\bibfield  {title} {\bibinfo {title} {Unbiasing fermionic quantum monte carlo with a quantum computer},\ }\href {https://doi.org/10.1038/s41586-021-04351-z} {\bibfield  {journal} {\bibinfo  {journal} {Nature}\ }\textbf {\bibinfo {volume} {603}},\ \bibinfo {pages} {416} (\bibinfo {year} {2022})}\BibitemShut {NoStop}%
\bibitem [{\citenamefont {Haah}\ \emph {et~al.}(2017)\citenamefont {Haah}, \citenamefont {Harrow}, \citenamefont {Ji}, \citenamefont {Wu},\ and\ \citenamefont {Yu}}]{Jeongwan2017tomography}%
  \BibitemOpen
  \bibfield  {author} {\bibinfo {author} {\bibfnamefont {J.}~\bibnamefont {Haah}}, \bibinfo {author} {\bibfnamefont {A.~W.}\ \bibnamefont {Harrow}}, \bibinfo {author} {\bibfnamefont {Z.}~\bibnamefont {Ji}}, \bibinfo {author} {\bibfnamefont {X.}~\bibnamefont {Wu}},\ and\ \bibinfo {author} {\bibfnamefont {N.}~\bibnamefont {Yu}},\ }\bibfield  {title} {\bibinfo {title} {Sample-optimal tomography of quantum states},\ }\href {https://doi.org/10.1109/TIT.2017.2719044} {\bibfield  {journal} {\bibinfo  {journal} {IEEE Transactions on Information Theory}\ }\textbf {\bibinfo {volume} {63}},\ \bibinfo {pages} {5628} (\bibinfo {year} {2017})}\BibitemShut {NoStop}%
\bibitem [{\citenamefont {Elben}\ \emph {et~al.}(2020)\citenamefont {Elben}, \citenamefont {Yu}, \citenamefont {Zhu}, \citenamefont {Hafezi}, \citenamefont {Pollmann}, \citenamefont {Zoller},\ and\ \citenamefont {Vermersch}}]{Elben2020topologicalinvariants}%
  \BibitemOpen
  \bibfield  {author} {\bibinfo {author} {\bibfnamefont {A.}~\bibnamefont {Elben}}, \bibinfo {author} {\bibfnamefont {J.}~\bibnamefont {Yu}}, \bibinfo {author} {\bibfnamefont {G.}~\bibnamefont {Zhu}}, \bibinfo {author} {\bibfnamefont {M.}~\bibnamefont {Hafezi}}, \bibinfo {author} {\bibfnamefont {F.}~\bibnamefont {Pollmann}}, \bibinfo {author} {\bibfnamefont {P.}~\bibnamefont {Zoller}},\ and\ \bibinfo {author} {\bibfnamefont {B.}~\bibnamefont {Vermersch}},\ }\bibfield  {title} {\bibinfo {title} {Many-body topological invariants from randomized measurements in synthetic quantum matter},\ }\href {https://doi.org/10.1126/sciadv.aaz3666} {\bibfield  {journal} {\bibinfo  {journal} {Science Advances}\ }\textbf {\bibinfo {volume} {6}},\ \bibinfo {pages} {eaaz3666} (\bibinfo {year} {2020})}\BibitemShut {NoStop}%
\bibitem [{\citenamefont {Huang}\ \emph {et~al.}(2020)\citenamefont {Huang}, \citenamefont {Kueng},\ and\ \citenamefont {Preskill}}]{Huang2020shadowestimation}%
  \BibitemOpen
  \bibfield  {author} {\bibinfo {author} {\bibfnamefont {H.-Y.}\ \bibnamefont {Huang}}, \bibinfo {author} {\bibfnamefont {R.}~\bibnamefont {Kueng}},\ and\ \bibinfo {author} {\bibfnamefont {J.}~\bibnamefont {Preskill}},\ }\bibfield  {title} {\bibinfo {title} {Predicting many properties of a quantum system from very few measurements},\ }\href {https://doi.org/10.1038/s41567-020-0932-7} {\bibfield  {journal} {\bibinfo  {journal} {Nature Physics}\ }\textbf {\bibinfo {volume} {16}},\ \bibinfo {pages} {1050} (\bibinfo {year} {2020})}\BibitemShut {NoStop}%
\bibitem [{\citenamefont {Chen}\ \emph {et~al.}(2021)\citenamefont {Chen}, \citenamefont {Yu}, \citenamefont {Zeng},\ and\ \citenamefont {Flammia}}]{Chen2021robustshadow}%
  \BibitemOpen
  \bibfield  {author} {\bibinfo {author} {\bibfnamefont {S.}~\bibnamefont {Chen}}, \bibinfo {author} {\bibfnamefont {W.}~\bibnamefont {Yu}}, \bibinfo {author} {\bibfnamefont {P.}~\bibnamefont {Zeng}},\ and\ \bibinfo {author} {\bibfnamefont {S.~T.}\ \bibnamefont {Flammia}},\ }\bibfield  {title} {\bibinfo {title} {Robust shadow estimation},\ }\href {https://doi.org/10.1103/PRXQuantum.2.030348} {\bibfield  {journal} {\bibinfo  {journal} {PRX Quantum}\ }\textbf {\bibinfo {volume} {2}},\ \bibinfo {pages} {30348} (\bibinfo {year} {2021})}\BibitemShut {NoStop}%
\bibitem [{\citenamefont {Bravyi}\ and\ \citenamefont {Kitaev}(2002)}]{Bravyi2002Fermion}%
  \BibitemOpen
  \bibfield  {author} {\bibinfo {author} {\bibfnamefont {S.~B.}\ \bibnamefont {Bravyi}}\ and\ \bibinfo {author} {\bibfnamefont {A.~Y.}\ \bibnamefont {Kitaev}},\ }\bibfield  {title} {\bibinfo {title} {Fermionic quantum computation},\ }\href {https://doi.org/10.1006/aphy.2002.6254} {\bibfield  {journal} {\bibinfo  {journal} {Annals of Physics}\ }\textbf {\bibinfo {volume} {298}},\ \bibinfo {pages} {210} (\bibinfo {year} {2002})}\BibitemShut {NoStop}%
\bibitem [{\citenamefont {Jiang}\ \emph {et~al.}(2020)\citenamefont {Jiang}, \citenamefont {Kalev}, \citenamefont {Mruczkiewicz},\ and\ \citenamefont {Neven}}]{Jiang2020fermionmapternery}%
  \BibitemOpen
  \bibfield  {author} {\bibinfo {author} {\bibfnamefont {Z.}~\bibnamefont {Jiang}}, \bibinfo {author} {\bibfnamefont {A.}~\bibnamefont {Kalev}}, \bibinfo {author} {\bibfnamefont {W.}~\bibnamefont {Mruczkiewicz}},\ and\ \bibinfo {author} {\bibfnamefont {H.}~\bibnamefont {Neven}},\ }\bibfield  {title} {\bibinfo {title} {Optimal fermion-to-qubit mapping via ternary trees with applications to reduced quantum states learning},\ }\href {https://doi.org/10.22331/q-2020-06-04-276} {\bibfield  {journal} {\bibinfo  {journal} {Quantum}\ }\textbf {\bibinfo {volume} {4}},\ \bibinfo {pages} {276} (\bibinfo {year} {2020})}\BibitemShut {NoStop}%
\bibitem [{\citenamefont {Zhao}\ \emph {et~al.}(2021)\citenamefont {Zhao}, \citenamefont {Rubin},\ and\ \citenamefont {Miyake}}]{Zhao2020fermionshadows}%
  \BibitemOpen
  \bibfield  {author} {\bibinfo {author} {\bibfnamefont {A.}~\bibnamefont {Zhao}}, \bibinfo {author} {\bibfnamefont {N.~C.}\ \bibnamefont {Rubin}},\ and\ \bibinfo {author} {\bibfnamefont {A.}~\bibnamefont {Miyake}},\ }\bibfield  {title} {\bibinfo {title} {Fermionic partial tomography via classical shadows},\ }\href {https://doi.org/10.1103/PhysRevLett.127.110504} {\bibfield  {journal} {\bibinfo  {journal} {Physical Review Letters}\ }\textbf {\bibinfo {volume} {127}},\ \bibinfo {pages} {110504} (\bibinfo {year} {2021})}\BibitemShut {NoStop}%
\bibitem [{\citenamefont {Elben}\ \emph {et~al.}(2022)\citenamefont {Elben}, \citenamefont {Flammia}, \citenamefont {Huang}, \citenamefont {Kueng}, \citenamefont {Preskill}, \citenamefont {Vermersch},\ and\ \citenamefont {Zoller}}]{Elben2022RandomziedToolbox}%
  \BibitemOpen
  \bibfield  {author} {\bibinfo {author} {\bibfnamefont {A.}~\bibnamefont {Elben}}, \bibinfo {author} {\bibfnamefont {S.~T.}\ \bibnamefont {Flammia}}, \bibinfo {author} {\bibfnamefont {H.-Y.}\ \bibnamefont {Huang}}, \bibinfo {author} {\bibfnamefont {R.}~\bibnamefont {Kueng}}, \bibinfo {author} {\bibfnamefont {J.}~\bibnamefont {Preskill}}, \bibinfo {author} {\bibfnamefont {B.}~\bibnamefont {Vermersch}},\ and\ \bibinfo {author} {\bibfnamefont {P.}~\bibnamefont {Zoller}},\ }\bibfield  {title} {\bibinfo {title} {The randomized measurement toolbox},\ }\href@noop {} {\bibfield  {journal} {\bibinfo  {journal} {arXiv preprint}\ }\textbf {\bibinfo {volume} {2203}},\ \bibinfo {pages} {11374} (\bibinfo {year} {2022})}\BibitemShut {NoStop}%
\bibitem [{\citenamefont {Reiher}\ \emph {et~al.}(2017)\citenamefont {Reiher}, \citenamefont {Wiebe}, \citenamefont {Svore}, \citenamefont {Wecker},\ and\ \citenamefont {Troyer}}]{Reiher2016Reaction}%
  \BibitemOpen
  \bibfield  {author} {\bibinfo {author} {\bibfnamefont {M.}~\bibnamefont {Reiher}}, \bibinfo {author} {\bibfnamefont {N.}~\bibnamefont {Wiebe}}, \bibinfo {author} {\bibfnamefont {K.~M.}\ \bibnamefont {Svore}}, \bibinfo {author} {\bibfnamefont {D.}~\bibnamefont {Wecker}},\ and\ \bibinfo {author} {\bibfnamefont {M.}~\bibnamefont {Troyer}},\ }\bibfield  {title} {\bibinfo {title} {Elucidating reaction mechanisms on quantum computers},\ }\href {https://doi.org/10.1073/pnas.1619152114} {\bibfield  {journal} {\bibinfo  {journal} {Proceedings of the National Academy of Sciences}\ }\textbf {\bibinfo {volume} {114}},\ \bibinfo {pages} {7555} (\bibinfo {year} {2017})}\BibitemShut {NoStop}%
\bibitem [{\citenamefont {Babbush}\ \emph {et~al.}(2018)\citenamefont {Babbush}, \citenamefont {Gidney}, \citenamefont {Berry}, \citenamefont {Wiebe}, \citenamefont {McClean}, \citenamefont {Paler}, \citenamefont {Fowler},\ and\ \citenamefont {Neven}}]{Babbush2018encoding}%
  \BibitemOpen
  \bibfield  {author} {\bibinfo {author} {\bibfnamefont {R.}~\bibnamefont {Babbush}}, \bibinfo {author} {\bibfnamefont {C.}~\bibnamefont {Gidney}}, \bibinfo {author} {\bibfnamefont {D.~W.}\ \bibnamefont {Berry}}, \bibinfo {author} {\bibfnamefont {N.}~\bibnamefont {Wiebe}}, \bibinfo {author} {\bibfnamefont {J.}~\bibnamefont {McClean}}, \bibinfo {author} {\bibfnamefont {A.}~\bibnamefont {Paler}}, \bibinfo {author} {\bibfnamefont {A.}~\bibnamefont {Fowler}},\ and\ \bibinfo {author} {\bibfnamefont {H.}~\bibnamefont {Neven}},\ }\bibfield  {title} {\bibinfo {title} {Encoding electronic spectra in quantum circuits with linear t complexity},\ }\href {https://doi.org/10.1103/PhysRevX.8.041015} {\bibfield  {journal} {\bibinfo  {journal} {Physical Review X}\ }\textbf {\bibinfo {volume} {8}},\ \bibinfo {pages} {041015} (\bibinfo {year} {2018})}\BibitemShut {NoStop}%
\bibitem [{\citenamefont {Low}\ and\ \citenamefont {Wiebe}(2018)}]{Low2018IntPicSim}%
  \BibitemOpen
  \bibfield  {author} {\bibinfo {author} {\bibfnamefont {G.~H.}\ \bibnamefont {Low}}\ and\ \bibinfo {author} {\bibfnamefont {N.}~\bibnamefont {Wiebe}},\ }\bibfield  {title} {\bibinfo {title} {Hamiltonian simulation in the interaction picture},\ }\href {https://arxiv.org/abs/1805.00675} {\bibfield  {journal} {\bibinfo  {journal} {arXiv preprint}\ }\textbf {\bibinfo {volume} {1805}},\ \bibinfo {pages} {00675} (\bibinfo {year} {2018})}\BibitemShut {NoStop}%
\bibitem [{\citenamefont {González-Cuadra}\ \emph {et~al.}(2023)\citenamefont {González-Cuadra}, \citenamefont {Bluvstein}, \citenamefont {Kalinowski}, \citenamefont {Kaubruegger}, \citenamefont {Maskara}, \citenamefont {Naldesi}, \citenamefont {Zache}, \citenamefont {Kaufman}, \citenamefont {Lukin}, \citenamefont {Pichler}, \citenamefont {Vermersch}, \citenamefont {Ye},\ and\ \citenamefont {Zoller}}]{Naldesi2023Fermionic}%
  \BibitemOpen
  \bibfield  {author} {\bibinfo {author} {\bibfnamefont {D.}~\bibnamefont {González-Cuadra}}, \bibinfo {author} {\bibfnamefont {D.}~\bibnamefont {Bluvstein}}, \bibinfo {author} {\bibfnamefont {M.}~\bibnamefont {Kalinowski}}, \bibinfo {author} {\bibfnamefont {R.}~\bibnamefont {Kaubruegger}}, \bibinfo {author} {\bibfnamefont {N.}~\bibnamefont {Maskara}}, \bibinfo {author} {\bibfnamefont {P.}~\bibnamefont {Naldesi}}, \bibinfo {author} {\bibfnamefont {T.~V.}\ \bibnamefont {Zache}}, \bibinfo {author} {\bibfnamefont {A.~M.}\ \bibnamefont {Kaufman}}, \bibinfo {author} {\bibfnamefont {M.~D.}\ \bibnamefont {Lukin}}, \bibinfo {author} {\bibfnamefont {H.}~\bibnamefont {Pichler}}, \bibinfo {author} {\bibfnamefont {B.}~\bibnamefont {Vermersch}}, \bibinfo {author} {\bibfnamefont {J.}~\bibnamefont {Ye}},\ and\ \bibinfo {author} {\bibfnamefont {P.}~\bibnamefont {Zoller}},\ }\bibfield  {title} {\bibinfo {title} {Fermionic quantum processing with programmable neutral atom arrays},\ }\href
  {https://doi.org/10.1073/pnas.2304294120} {\bibfield  {journal} {\bibinfo  {journal} {Proceedings of the National Academy of Sciences}\ }\textbf {\bibinfo {volume} {120}},\ \bibinfo {pages} {e2304294120} (\bibinfo {year} {2023})},\ \Eprint {https://arxiv.org/abs/https://www.pnas.org/doi/pdf/10.1073/pnas.2304294120} {https://www.pnas.org/doi/pdf/10.1073/pnas.2304294120} \BibitemShut {NoStop}%
\bibitem [{Note1()}]{Note1}%
  \BibitemOpen
  \bibinfo {note} {An estimator with $\protect \mathcal {O}(N^{5})$ classical runtime by polynomial interpolation was proposed in the original preprint of this manuscript.}\BibitemShut {Stop}%
\bibitem [{\citenamefont {Terhal}\ and\ \citenamefont {DiVincenzo}(2002)}]{Terhal2002fermions}%
  \BibitemOpen
  \bibfield  {author} {\bibinfo {author} {\bibfnamefont {B.~M.}\ \bibnamefont {Terhal}}\ and\ \bibinfo {author} {\bibfnamefont {D.~P.}\ \bibnamefont {DiVincenzo}},\ }\bibfield  {title} {\bibinfo {title} {Classical simulation of noninteracting-fermion quantum circuits},\ }\href {https://doi.org/10.1103/PhysRevA.65.032325} {\bibfield  {journal} {\bibinfo  {journal} {Physical Review A}\ }\textbf {\bibinfo {volume} {65}},\ \bibinfo {pages} {032325} (\bibinfo {year} {2002})}\BibitemShut {NoStop}%
\bibitem [{\citenamefont {Bravyi}(2005)}]{Bravyi2005}%
  \BibitemOpen
  \bibfield  {author} {\bibinfo {author} {\bibfnamefont {S.}~\bibnamefont {Bravyi}},\ }\bibfield  {title} {\bibinfo {title} {Lagrangian representation for fermionic linear optics},\ }\href@noop {} {\bibfield  {journal} {\bibinfo  {journal} {Quantum Inf. and Comp.}\ }\textbf {\bibinfo {volume} {5}},\ \bibinfo {pages} {216} (\bibinfo {year} {2005})}\BibitemShut {NoStop}%
\bibitem [{\citenamefont {Wan}\ \emph {et~al.}(2022)\citenamefont {Wan}, \citenamefont {Huggins}, \citenamefont {Lee},\ and\ \citenamefont {Babbush}}]{Wan2022MatchgateShadows}%
  \BibitemOpen
  \bibfield  {author} {\bibinfo {author} {\bibfnamefont {K.}~\bibnamefont {Wan}}, \bibinfo {author} {\bibfnamefont {W.~J.}\ \bibnamefont {Huggins}}, \bibinfo {author} {\bibfnamefont {J.}~\bibnamefont {Lee}},\ and\ \bibinfo {author} {\bibfnamefont {R.}~\bibnamefont {Babbush}},\ }\bibfield  {title} {\bibinfo {title} {Matchgate shadows for fermionic quantum simulation},\ }\href {https://doi.org/10.48550/ARXIV.2207.13723} {\bibfield  {journal} {\bibinfo  {journal} {arXiv preprint}\ }\textbf {\bibinfo {volume} {2207}},\ \bibinfo {pages} {13723} (\bibinfo {year} {2022})}\BibitemShut {NoStop}%
\bibitem [{\citenamefont {McClean}\ \emph {et~al.}(2016)\citenamefont {McClean}, \citenamefont {Romero}, \citenamefont {Babbush},\ and\ \citenamefont {Aspuru-Guzik}}]{McClean2016eigensolver}%
  \BibitemOpen
  \bibfield  {author} {\bibinfo {author} {\bibfnamefont {J.~R.}\ \bibnamefont {McClean}}, \bibinfo {author} {\bibfnamefont {J.}~\bibnamefont {Romero}}, \bibinfo {author} {\bibfnamefont {R.}~\bibnamefont {Babbush}},\ and\ \bibinfo {author} {\bibfnamefont {A.}~\bibnamefont {Aspuru-Guzik}},\ }\bibfield  {title} {\bibinfo {title} {The theory of variational hybrid quantum-classical algorithms},\ }\href {https://doi.org/10.1088/1367-2630/18/2/023023} {\bibfield  {journal} {\bibinfo  {journal} {New Journal of Physics}\ }\textbf {\bibinfo {volume} {18}},\ \bibinfo {pages} {023023} (\bibinfo {year} {2016})}\BibitemShut {NoStop}%
\bibitem [{\citenamefont {Kivlichan}\ \emph {et~al.}(2018)\citenamefont {Kivlichan}, \citenamefont {McClean}, \citenamefont {Wiebe}, \citenamefont {Gidney}, \citenamefont {Aspuru-Guzik}, \citenamefont {Chan},\ and\ \citenamefont {Babbush}}]{Kivlichan2018GivensLinearDepth}%
  \BibitemOpen
  \bibfield  {author} {\bibinfo {author} {\bibfnamefont {I.~D.}\ \bibnamefont {Kivlichan}}, \bibinfo {author} {\bibfnamefont {J.}~\bibnamefont {McClean}}, \bibinfo {author} {\bibfnamefont {N.}~\bibnamefont {Wiebe}}, \bibinfo {author} {\bibfnamefont {C.}~\bibnamefont {Gidney}}, \bibinfo {author} {\bibfnamefont {A.}~\bibnamefont {Aspuru-Guzik}}, \bibinfo {author} {\bibfnamefont {G.~K.-L.}\ \bibnamefont {Chan}},\ and\ \bibinfo {author} {\bibfnamefont {R.}~\bibnamefont {Babbush}},\ }\bibfield  {title} {\bibinfo {title} {Quantum simulation of electronic structure with linear depth and connectivity},\ }\href {https://doi.org/10.1103/PhysRevLett.120.110501} {\bibfield  {journal} {\bibinfo  {journal} {Physical Review Letters}\ }\textbf {\bibinfo {volume} {120}},\ \bibinfo {pages} {110501} (\bibinfo {year} {2018})}\BibitemShut {NoStop}%
\bibitem [{\citenamefont {Webb}(2016)}]{Webb2016Clifford3Design}%
  \BibitemOpen
  \bibfield  {author} {\bibinfo {author} {\bibfnamefont {Z.}~\bibnamefont {Webb}},\ }\bibfield  {title} {\bibinfo {title} {The clifford group forms a unitary 3-design},\ }\href {https://doi.org/10.26421/QIC16.15-16-8} {\bibfield  {journal} {\bibinfo  {journal} {Quantum Inf. Comput.}\ }\textbf {\bibinfo {volume} {16}},\ \bibinfo {pages} {1379} (\bibinfo {year} {2016})}\BibitemShut {NoStop}%
\bibitem [{\citenamefont {Zhu}(2017)}]{Zhu2017Clifford3Design}%
  \BibitemOpen
  \bibfield  {author} {\bibinfo {author} {\bibfnamefont {H.}~\bibnamefont {Zhu}},\ }\bibfield  {title} {\bibinfo {title} {Multiqubit clifford groups are unitary 3-designs},\ }\href {https://doi.org/10.1103/PhysRevA.96.062336} {\bibfield  {journal} {\bibinfo  {journal} {Physical Review A}\ }\textbf {\bibinfo {volume} {96}},\ \bibinfo {pages} {62336} (\bibinfo {year} {2017})}\BibitemShut {NoStop}%
\bibitem [{\citenamefont {Naldesi}\ \emph {et~al.}(2022)\citenamefont {Naldesi}, \citenamefont {Elben}, \citenamefont {Minguzzi}, \citenamefont {Clément}, \citenamefont {Zoller},\ and\ \citenamefont {Vermersch}}]{Naldesi2022FermionicCorrelation}%
  \BibitemOpen
  \bibfield  {author} {\bibinfo {author} {\bibfnamefont {P.}~\bibnamefont {Naldesi}}, \bibinfo {author} {\bibfnamefont {A.}~\bibnamefont {Elben}}, \bibinfo {author} {\bibfnamefont {A.}~\bibnamefont {Minguzzi}}, \bibinfo {author} {\bibfnamefont {D.}~\bibnamefont {Clément}}, \bibinfo {author} {\bibfnamefont {P.}~\bibnamefont {Zoller}},\ and\ \bibinfo {author} {\bibfnamefont {B.}~\bibnamefont {Vermersch}},\ }\bibfield  {title} {\bibinfo {title} {Fermionic correlation functions from randomized measurements in programmable atomic quantum devices},\ }\href@noop {} {\bibfield  {journal} {\bibinfo  {journal} {arXiv preprint}\ }\textbf {\bibinfo {volume} {2205}},\ \bibinfo {pages} {00981} (\bibinfo {year} {2022})}\BibitemShut {NoStop}%
\bibitem [{\citenamefont {Collins}\ and\ \citenamefont {Śniady}(2006)}]{Collins2006Weingarten}%
  \BibitemOpen
  \bibfield  {author} {\bibinfo {author} {\bibfnamefont {B.}~\bibnamefont {Collins}}\ and\ \bibinfo {author} {\bibfnamefont {P.}~\bibnamefont {Śniady}},\ }\bibfield  {title} {\bibinfo {title} {Integration with respect to the haar measure on unitary, orthogonal and symplectic group},\ }\href {https://doi.org/10.1007/s00220-006-1554-3} {\bibfield  {journal} {\bibinfo  {journal} {Communications in Mathematical Physics}\ }\textbf {\bibinfo {volume} {264}},\ \bibinfo {pages} {773} (\bibinfo {year} {2006})}\BibitemShut {NoStop}%
\bibitem [{\citenamefont {Brod}(2016)}]{Brod2016Matchgate}%
  \BibitemOpen
  \bibfield  {author} {\bibinfo {author} {\bibfnamefont {D.~J.}\ \bibnamefont {Brod}},\ }\bibfield  {title} {\bibinfo {title} {Efficient classical simulation of matchgate circuits with generalized inputs and measurements},\ }\href {https://doi.org/10.1103/PhysRevA.93.062332} {\bibfield  {journal} {\bibinfo  {journal} {Physical Review A}\ }\textbf {\bibinfo {volume} {93}},\ \bibinfo {pages} {062332} (\bibinfo {year} {2016})}\BibitemShut {NoStop}%
\bibitem [{\citenamefont {Endo}\ \emph {et~al.}(2018)\citenamefont {Endo}, \citenamefont {Benjamin},\ and\ \citenamefont {Li}}]{Endo2018ErrorMitigation}%
  \BibitemOpen
  \bibfield  {author} {\bibinfo {author} {\bibfnamefont {S.}~\bibnamefont {Endo}}, \bibinfo {author} {\bibfnamefont {S.~C.}\ \bibnamefont {Benjamin}},\ and\ \bibinfo {author} {\bibfnamefont {Y.}~\bibnamefont {Li}},\ }\bibfield  {title} {\bibinfo {title} {Practical quantum error mitigation for near-future applications},\ }\href {https://doi.org/10.1103/PhysRevX.8.031027} {\bibfield  {journal} {\bibinfo  {journal} {Physical Review X}\ }\textbf {\bibinfo {volume} {8}},\ \bibinfo {pages} {31027} (\bibinfo {year} {2018})}\BibitemShut {NoStop}%
\bibitem [{\citenamefont {Bonet-Monroig}\ \emph {et~al.}(2018)\citenamefont {Bonet-Monroig}, \citenamefont {Sagastizabal}, \citenamefont {Singh},\ and\ \citenamefont {O'Brien}}]{Bonet2018ErrorMitigation}%
  \BibitemOpen
  \bibfield  {author} {\bibinfo {author} {\bibfnamefont {X.}~\bibnamefont {Bonet-Monroig}}, \bibinfo {author} {\bibfnamefont {R.}~\bibnamefont {Sagastizabal}}, \bibinfo {author} {\bibfnamefont {M.}~\bibnamefont {Singh}},\ and\ \bibinfo {author} {\bibfnamefont {T.~E.}\ \bibnamefont {O'Brien}},\ }\bibfield  {title} {\bibinfo {title} {Low-cost error mitigation by symmetry verification},\ }\href {https://doi.org/10.1103/PhysRevA.98.062339} {\bibfield  {journal} {\bibinfo  {journal} {Physical Review A}\ }\textbf {\bibinfo {volume} {98}},\ \bibinfo {pages} {62339} (\bibinfo {year} {2018})}\BibitemShut {NoStop}%
\bibitem [{\citenamefont {Petkovšek}\ \emph {et~al.}(1996)\citenamefont {Petkovšek}, \citenamefont {Wilf},\ and\ \citenamefont {Zeilberger}}]{petkovvsek1996b}%
  \BibitemOpen
  \bibfield  {author} {\bibinfo {author} {\bibfnamefont {M.}~\bibnamefont {Petkovšek}}, \bibinfo {author} {\bibfnamefont {H.~S.}\ \bibnamefont {Wilf}},\ and\ \bibinfo {author} {\bibfnamefont {D.}~\bibnamefont {Zeilberger}},\ }\href {https://www2.math.upenn.edu/~wilf/Downld.html} {\emph {\bibinfo {title} {A=B, with a foreword by Donald E. Knuth}}}\ (\bibinfo {year} {1996})\BibitemShut {NoStop}%
\bibitem [{\citenamefont {Koutschan}(2014)}]{kouschan2014}%
  \BibitemOpen
  \bibfield  {author} {\bibinfo {author} {\bibfnamefont {C.}~\bibnamefont {Koutschan}},\ }\bibfield  {title} {\bibinfo {title} {Holonomic functions in mathematica},\ }\href {https://doi.org/10.1145/2576802.2576831} {\bibfield  {journal} {\bibinfo  {journal} {ACM Communications in Computer Algebra}\ }\textbf {\bibinfo {volume} {47}},\ \bibinfo {pages} {179} (\bibinfo {year} {2014})}\BibitemShut {NoStop}%
\end{thebibliography}%

\appendix

\section{\label{sec:Haar_integrals}Haar integrals over $\wedge^{\eta}\mathcal{U}_{n}$}

Integrals of polynomial functions with respect to the Haar measure
on the dimension $n$ unitary group $\mathcal{U}_{n}$ can be evaluated
as follows. 
\begin{defn}[Twirling operator]
\label{def:twirling_operator}The twirling operator of degree $t$
on $\bigwedge_{k=1}^{\eta}\mathcal{U}_{n}$ with respect to the Haar
measure on $\mathcal{U}_{n}$ is 
\begin{align}
\mathcal{T}_{t,\mathcal{U}} & \doteq\int\left(U(u)\ket{[\eta]}\bra{[\eta]}U^{\dagger}(u)\right)^{\otimes t}\mathrm{d}u_{\mathrm{Haar}}(\mathcal{U}_{n}),
\end{align}
where $\ket{\vec{\eta}}=\ket{[\eta]}\doteq\bigwedge_{k=1}^{\eta}\ket{\eV k}$
is a basis vector of $\bigwedge_{k=1}^{\eta}\mathcal{U}_{n}$ and
$\ket{\eV k}$ are orthonormal bases for $\mathcal{U}_{n}$.

The purpose of this section is to prove the main result we use for
the measurement channel. In the following, we will choose randomization
with respect to single-particle basis rotations $U_{\eta}(u)$, where
$u\in\mathcal{U}_{n}$ is Haar random. In other words, $\mathcal{U}=\wedge^{\eta}\mathcal{U}_{n}$.
The corresponding twirling operators are linear in the basis of Weingarten
integrals, which will be key to evaluating difficult quantities such
as the measurement channel and shadow variance without the help of
$t$-design results.
\end{defn}
\begin{thm}
\label{lem:twirling_expression}Twirling operator
\begin{equation}
\mathcal{T}_{2,\wedge^{\eta}\mathcal{U}_{n}}=\sum_{\vec{p}\in\mathcal{S}_{n,\eta}^{\otimes2}}f(\match{\vec{p}_{1}}{\vec{p}_{2}})\Pi_{\vec{p}_{1}}\otimes\Pi_{\vec{p}_{2}}+\sum_{\vec{p}\neq\vec{q}\in\mathcal{S}_{n,\eta}^{\otimes2}}f(\vec{p},\vec{q})\ket{\eV{\vec{p}_{1}}}\bra{\eV{\vec{q}_{1}}}\otimes\ket{\eV{\vec{p}_{2}}}\bra{\eV{\vec{q}_{2}}},
\end{equation}
where for the diagonal terms, the structure factor $f(k)=\binom{n+1}{\eta}^{-1}\binom{n}{\eta}^{-1}\frac{\eta+1}{\eta+1-k}$,
and the form of $f(\vec{p},\vec{q})$, of lesser interest, is detailed
in \ref{lem:twirling_expression_full_proof}.
\end{thm}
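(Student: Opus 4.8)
The plan is to sidestep a direct Weingarten expansion and instead read off $\mathcal{T}_{2,\wedge^{\eta}\mathcal{U}_{n}}$ from the $\mathcal{U}_{n}$-module structure of $(\wedge^{\eta}V_{n})^{\otimes2}$. Write $\Pi_{[\eta]}=\ket{[\eta]}\bra{[\eta]}$ and recall that $U_{\eta}(u)$ is exactly the exterior power $\wedge^{\eta}u$. Since $\ket{[\eta]}$ is the highest-weight vector of $\wedge^{\eta}V_{n}$, the vector $\ket{[\eta]}\otimes\ket{[\eta]}$ is annihilated by every raising operator (each acts as a derivation on the two tensor factors), hence is a highest-weight vector of weight $(2,\dots,2,0,\dots,0)$ with $\eta$ twos. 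Therefore the cyclic $\mathcal{U}_{n}$-submodule it generates, which is $\mathrm{span}\{U_{\eta}(u)^{\otimes2}\ket{[\eta]}\otimes\ket{[\eta]}:u\in\mathcal{U}_{n}\}$, is the irreducible representation $V_{(2^{\eta})}$ attached to the rectangular partition $(2^{\eta})$. Consequently the image of $\mathcal{T}_{2,\wedge^{\eta}\mathcal{U}_{n}}=\int U_{\eta}(u)^{\otimes2}\,\Pi_{[\eta]}^{\otimes2}\,U_{\eta}^{\dagger}(u)^{\otimes2}\,\mathrm{d}u$ lies inside $V_{(2^{\eta})}$; it is positive semidefinite and, by Haar invariance, commutes with every $U_{\eta}(u)^{\otimes2}$, so Schur's lemma forces $\mathcal{T}_{2,\wedge^{\eta}\mathcal{U}_{n}}=c\,P_{(2^{\eta})}$, where $P_{(2^{\eta})}$ is the orthogonal projector onto $V_{(2^{\eta})}\subseteq(\wedge^{\eta}V_{n})^{\otimes2}$ and $c\dim V_{(2^{\eta})}=\tr{}{\mathcal{T}_{2,\wedge^{\eta}\mathcal{U}_{n}}}=(\tr{}{\Pi_{[\eta]}})^{2}=1$.

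The first computation is $\dim V_{(2^{\eta})}$: the Weyl (equivalently hook–content) dimension formula for the partition $(2^{\eta})$ in $n$ variables gives $\dim V_{(2^{\eta})}=\tfrac{1}{\eta+1}\binom{n}{\eta}\binom{n+1}{\eta}$, so $c=\binom{n}{\eta}^{-1}\binom{n+1}{\eta}^{-1}(\eta+1)$. It remains to extract the matrix elements of $P_{(2^{\eta})}$ in the occupation basis. Fix $\vec{p}_{1},\vec{p}_{2}\in\mathcal{S}_{n,\eta}$ and set $k=\match{\vec{p}_{1}}{\vec{p}_{2}}$; then $\ket{\vec{p}_{1}}\otimes\ket{\vec{p}_{2}}$ has torus weight $\mu$ with $k$ entries equal to $2$ and $2(\eta-k)$ entries equal to $1$. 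Because $P_{(2^{\eta})}$ commutes with the diagonal torus it preserves the weight space $W_{\mu}$, which is spanned by the $\binom{2(\eta-k)}{\eta-k}$ basis vectors $\ket{\vec{p}_{1}'}\otimes\ket{\vec{p}_{2}'}$ with $\match{\vec{p}_{1}'}{\vec{p}_{2}'}=k$. The subgroup of permutation matrices that permutes the $2(\eta-k)$ singly-occupied modes fixes $\mu$, lies in $\wedge^{\eta}\mathcal{U}_{n}$, commutes with $P_{(2^{\eta})}$, and acts on that basis as $S_{2(\eta-k)}$ on $(\eta-k)$-subsets of a $2(\eta-k)$-set, hence transitively. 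An invariant operator restricted to a basis permuted transitively has all equal diagonal entries, so the diagonal coefficient depends only on $k$ (reproving the structure-factor claim) and equals $\mathrm{rank}(P_{(2^{\eta})}|_{W_{\mu}})/\binom{2(\eta-k)}{\eta-k}=m_{\mu}/\binom{2(\eta-k)}{\eta-k}$, where $m_{\mu}$ is the multiplicity of the weight $\mu$ in $V_{(2^{\eta})}$.

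To evaluate $m_{\mu}$, use the dual Jacobi–Trudi identity for the two-column shape, $s_{(2^{\eta})}=e_{\eta}^{2}-e_{\eta-1}e_{\eta+1}$ in formal variables $x_{1},\dots,x_{n}$, and read off the coefficient of the monomial $x^{\mu}$. In $e_{\eta}^{2}$ the $k$ doubly-occupied variables must contribute one from each factor while the $2(\eta-k)$ single variables split evenly, giving $\binom{2(\eta-k)}{\eta-k}$; in $e_{\eta-1}e_{\eta+1}$ the same bookkeeping gives $\binom{2(\eta-k)}{\eta-k-1}$; hence $m_{\mu}=\binom{2(\eta-k)}{\eta-k}-\binom{2(\eta-k)}{\eta-k-1}=\tfrac{1}{\eta-k+1}\binom{2(\eta-k)}{\eta-k}$, a Catalan number. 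Dividing by $\binom{2(\eta-k)}{\eta-k}$ leaves the diagonal value $1/(\eta+1-k)$, and multiplying by $c$ yields exactly $f(k)=\binom{n}{\eta}^{-1}\binom{n+1}{\eta}^{-1}\tfrac{\eta+1}{\eta+1-k}$, as claimed. The off-diagonal coefficients $f(\vec{p},\vec{q})=c\,\bra{\vec{p}_{1}}\bra{\vec{p}_{2}}P_{(2^{\eta})}\ket{\vec{q}_{1}}\ket{\vec{q}_{2}}$ are obtained the same way — they vanish unless the two basis vectors share a weight space — and their explicit form is recorded in \ref{lem:twirling_expression_full_proof}.

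I expect the main obstacle of this route to be the clean identification of the cyclic module with $V_{(2^{\eta})}$ together with the weight-multiplicity evaluation; everything else is bookkeeping. The alternative, which is essentially the combinatorial path of \ref{lem:twirling_expression_full_proof}, is to expand $U_{\eta}(u)^{\otimes2}\Pi_{[\eta]}^{\otimes2}U_{\eta}^{\dagger}(u)^{\otimes2}$ into the four $\eta\times\eta$ minors of $u,u^{\dagger}$ via the Leibniz formula and apply the Collins–Śniady Weingarten formula; there the fact that all column labels are pinned to $[\eta]$ with multiplicity two collapses the double sum over permutation pairs in $S_{2\eta}$, and the residual hypergeometric sum in the Weingarten weights resums to the same $f(k)$ — the harder but more elementary obstacle being that resummation.
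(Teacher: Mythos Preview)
Your proof is correct and takes a genuinely different route from the paper.  The paper (through Lemmas~\ref{lem:twirling_expression_full_proof}, \ref{lem:twirling_expression_diagonal_full_proof}, \ref{lem:diagonal_structure_factor}, and \ref{lem:weingartensum}) expands $U_{\eta}(u)^{\otimes 2}\Pi_{[\eta]}^{\otimes 2}U_{\eta}^{\dagger}(u)^{\otimes 2}$ into four $\eta\times\eta$ minors via the Leibniz formula, applies the Collins--\'Sniady Weingarten formula, and then argues combinatorially---by factoring the relevant permutations as $\mu\nu\chi\gamma$ and tracking their cycle structure---that the diagonal structure factor depends only on $k=\match{\vec{p}_{1}}{\vec{p}_{2}}$ and equals $(\eta!)^{2}\frac{\eta+1}{\eta+1-k}\,\Xi_{n,\eta}$; the constant $\Xi_{n,\eta}$ is finally fixed by the trace normalisation $\tr{}{\mathcal{T}_{2}}=1$ and a Chu--Vandermonde sum.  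You instead recognise $\ket{[\eta]}^{\otimes 2}$ as a highest-weight vector, so that $\mathcal{T}_{2}=(\dim V_{(2^{\eta})})^{-1}P_{(2^{\eta})}$ by Schur, and then extract the diagonal entries of $P_{(2^{\eta})}$ as weight multiplicities divided by weight-space dimensions, with the dual Jacobi--Trudi identity $s_{(2^{\eta})}=e_{\eta}^{2}-e_{\eta-1}e_{\eta+1}$ producing the Catalan number $\frac{1}{\eta-k+1}\binom{2(\eta-k)}{\eta-k}$.  Your argument is shorter and more conceptual---it explains simultaneously the prefactor $(\eta+1)\binom{n}{\eta}^{-1}\binom{n+1}{\eta}^{-1}$ (hook--content dimension) and the $1/(\eta+1-k)$ factor (Catalan), and it bypasses Weingarten functions entirely; the paper's route is more elementary in that it needs no Lie theory, but pays for this with several pages of permutation bookkeeping.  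One small point: your transitivity argument for the equality of diagonal entries relies on the permutation action on the occupation basis, which introduces signs via $U_{\eta}(v_{\sigma})\ket{\vec{p}}=\pm\ket{\sigma(\vec{p})}$; these signs are harmless for diagonal matrix elements but would need care if you extended the same symmetry argument to the off-diagonal $f(\vec{p},\vec{q})$.
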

Our derivation begins by expressing the twirling operator in the basis
of Weingarten integrals. 
\begin{defn}[{Weingarten integral \citep[Equation (1)]{Collins2006Weingarten}}]
\label{def:Weingarten_integral}The basis $(\vec{i},\vec{i}',\vec{j},\vec{j}')$
Weingarten integral of degree $t$ on $\mathcal{U}_{n}$ is
\begin{align}
\int_{u\sim\mathcal{U}_{n}}u_{i_{1}j_{1}}\cdots u_{i_{q}j_{q}}u_{j'_{1}i'_{1}}^{\dagger}\cdots u_{j'_{q}i'_{q}}^{\dagger}\mathrm{d}u & =\sum_{\pi,\xi\in S_{q}}\delta_{\vec{i},\pi(\vec{i}')}\delta_{\xi^{-1}(\vec{j}),\vec{j}'}\wg{\pi\xi^{-1}},\\
\delta_{\vec{x},\pi(\vec{x}')} & \doteq\begin{cases}
1, & \forall j\in[q],\;x_{j}=x'_{\pi(j)},\\
0, & \text{otherwise}.
\end{cases}
\end{align}
where $S_{q}$ is is the symmetric group on $q$ elements, and $\wg{\pi\xi^{-1}}=\wg{\xi^{-1}\pi}=\wg{\pi^{-1}\xi}$
is the so-called Weingarten function, which depends only on the conjugacy
class of the permutation. 
\end{defn}
With the help of Weingarten integrals, we may evaluate the twirling
operator and express then in a simpler form. We use the notation $\vec{x}\oplus\vec{y}=(x_{1},..,x_{\dim(\vec{x})},y_{1},\cdots,y_{\dim(\vec{y})})$
for list concatenation, and $S_{k}$ for the symmetric group on $k$
elements.
\begin{lem}[Twirling operator structure factor]
\label{lem:twirling_expression_full_proof}\label{thm:twirling_operator_diagonal}The
twirling operator evaluates to
\begin{align}
\mathcal{T}_{t,\wedge^{\eta}\mathcal{U}_{n}} & =\sum_{\vec{p}_{\theta},\vec{q}_{\theta}\in\mathcal{S}_{n,\eta}}f(\vec{p},\vec{q})\bigotimes_{\theta\in[t]}\ket{\eV{\vec{p}_{\theta}}}\bra{\eV{\vec{q}_{\theta}}},\\
f(\vec{p},\vec{q}) & =\sum_{\mu\in S_{\eta}^{\oplus t}}\sum_{\nu\in S_{t}^{\oplus\eta}}\sum_{\xi\in S_{t\eta}}(-1)^{\mu}\prod_{\theta\in[t]}\det\left[\Delta_{\vec{q}_{\theta},\vec{p}_{\theta}^{\xi}}\right]\wg{\mu\nu\xi},
\end{align}
where $f$ is the structure factor, $\vec{p}\doteq\bigoplus_{j\in[t]}\vec{p}_{j}$,
$\vec{q}\doteq\bigoplus_{j\in[t]}\vec{q}_{j}$, $\Delta_{ij}=\delta_{ij}$,
and $\bigoplus_{j\in[t]}\vec{p}_{j}^{\xi}\doteq\xi\left(\vec{p}\right),$$\quad\mu=\bigoplus_{j\in[t]}\mu_{j},\quad\nu=\bigoplus_{j\in[\eta]}\nu_{j}$
\begin{equation}
\mu\left(\bigoplus_{\theta\in[t]}\vec{x}_{\theta}\right)=\left(\begin{array}{ccc}
x_{1,\mu_{1}(1)} & \cdots & x_{t,\mu_{1}(1)}\\
\vdots & \ddots & \vdots\\
x_{1,\mu_{\eta}(\eta)} & \cdots & x_{t,\mu_{\eta}(\eta)}
\end{array}\right),\quad\nu\left(\bigoplus_{j\in[t]}\vec{x}_{j}\right)\doteq\left(\begin{array}{ccc}
x_{\nu_{1}(1),1} & \cdots & x_{\nu_{t}(t),1}\\
\vdots & \ddots & \vdots\\
x_{\nu_{1}(1),\eta} & \cdots & x_{\nu_{t}(t),\eta}
\end{array}\right).
\end{equation}
\end{lem}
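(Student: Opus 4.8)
The plan is to expand the twirling operator directly in the computational basis and reduce each matrix entry to a Haar integral of a monomial in the entries of $u$ and $u^\dagger$, which \ref{def:Weingarten_integral} evaluates in closed form. First I would write $\mathcal{T}_{t,\wedge^{\eta}\mathcal{U}_{n}} = \sum_{\vec{p}_\theta,\vec{q}_\theta} f(\vec{p},\vec{q}) \bigotimes_{\theta\in[t]} \ketbra{\vec{p}_\theta}{\vec{q}_\theta}$ with $f(\vec{p},\vec{q}) = \int \prod_{\theta\in[t]} \bra{\vec{p}_\theta} U_\eta(u)\Pi_{[\eta]} U_\eta^\dagger(u)\ket{\vec{q}_\theta}\,\mathrm{d}u$, and then use \ref{eq:fermion_rotation} to replace each factor $\bra{\vec{p}_\theta}U_\eta(u)\ket{[\eta]} = \det[u_{\vec{p}_\theta [\eta]}]$ and $\bra{[\eta]}U_\eta^\dagger(u)\ket{\vec{q}_\theta} = \det[(u^\dagger)_{[\eta]\vec{q}_\theta}] = \overline{\det[u_{\vec{q}_\theta[\eta]}]}$. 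Expanding each of the $2t$ determinants by the Leibniz formula turns $f(\vec{p},\vec{q})$ into a sum over $2t$ permutations of $S_\eta$ of a single degree-$(t\eta, t\eta)$ monomial in $u,u^\dagger$, whose Haar average is given by \ref{def:Weingarten_integral} as a sum over two permutations $\pi,\xi$ of $S_{t\eta}$ of a product of Kronecker deltas times $\mathrm{Wg}(\pi\xi^{-1})$.

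The bulk of the work is organizing the resulting index bookkeeping so that the $\delta$-constraints and the $2t$ Leibniz permutations collapse into the three labelled sums $\mu\in S_\eta^{\oplus t}$, $\nu\in S_t^{\oplus\eta}$, $\xi\in S_{t\eta}$ appearing in the statement. The key observation is that the Weingarten delta-functions force the ``bra'' multi-index $\vec{p}$ (after a permutation) to match the ``ket'' multi-index on the $u$-side and similarly on the $u^\dagger$-side; because the shared index on each factor is always $[\eta] = (1,\dots,\eta)$, the matching becomes a statement purely about permutations. The natural way to see this is to identify $S_{t\eta}$ with permutations of the $t\eta$ ``slots'' arranged as a $t\times\eta$ array: a slot is indexed by a pair $(\theta,j)$ with $\theta\in[t]$, $j\in[\eta]$. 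The permutations $\mu = \bigoplus_\theta \mu_\theta$ act within each row-block (permuting the $\eta$ columns in copy $\theta$), the permutations $\nu = \bigoplus_j \nu_j$ act within each column-block (permuting the $t$ copies of index $j$), and $\xi$ is the free $S_{t\eta}$ permutation coming from the Weingarten formula. Carrying out the Leibniz expansion of the $\det[u_{\vec{p}_\theta[\eta]}]$ factors introduces exactly the $\mu_\theta$'s with signs $(-1)^{\mu}=\prod_\theta (-1)^{\mu_\theta}$; the Leibniz expansion of the conjugate factors, after absorbing the $\xi$ from the Weingarten integral, produces the minor determinants $\det[\Delta_{\vec{q}_\theta,\vec{p}_\theta^\xi}]$ of the identity matrix (i.e. products of Kronecker deltas enforcing the surviving index matchings). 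The notational devices $\mu(\bigoplus_\theta \vec{x}_\theta)$ and $\nu(\bigoplus_j \vec{x}_j)$ in the statement are precisely the two ways of re-reading the $t\times\eta$ array of indices row-wise versus column-wise, so writing $f$ in the claimed form amounts to verifying that the composite permutation entering the Weingarten function is $\mu\nu\xi$ in this slot labelling.

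The main obstacle is purely combinatorial: tracking which of the $2t$ Leibniz permutations is absorbed into $\mu$, which into the $\xi$-dependent delta factors, and making sure the signs and the identification of conjugacy classes of $\mu\nu\xi$ are correct under the slot relabelling. I would handle this by first doing the $t=1$ case in full detail, where the formula reduces to the classical fact that $\int \det[u_{\vec{p}[\eta]}]\overline{\det[u_{\vec{q}[\eta]}]}\,\mathrm{d}u$ is a ratio of binomials when $\vec{p}=\vec{q}$ and involves a single permutation sum otherwise, and then observing that the $t$-fold case is the same computation run in parallel across the $t$ row-blocks with the cross-block mixing supplied entirely by the Weingarten permutation $\xi$ and the inter-copy permutations $\nu_j$. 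Once the general structure-factor formula is in hand, specializing to $t=2$ and to the diagonal terms $\vec{p}_1 = \vec{q}_1$, $\vec{p}_2 = \vec{q}_2$ yields \ref{lem:twirling_expression}; the explicit value $f(k) = \binom{n+1}{\eta}^{-1}\binom{n}{\eta}^{-1}\frac{\eta+1}{\eta+1-k}$ then follows by evaluating the resulting sum of Weingarten functions as a function of $k = \match{\vec{p}_1}{\vec{p}_2}$, which is a finite hypergeometric sum of the type handled in \ref{sec:Hypergeometric-sums}.
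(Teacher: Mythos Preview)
Your approach for this lemma is essentially the paper's: expand each factor via \ref{eq:fermion_rotation} and the Leibniz formula, apply \ref{def:Weingarten_integral} to the resulting degree-$(t\eta,t\eta)$ monomial, and then use that one set of indices is $[\eta]^{\oplus t}$ to force one of the two $S_{t\eta}$ Weingarten permutations to lie in the stabilizer $S_\eta^{\oplus t}\cdot S_t^{\oplus\eta}$, yielding the $\mu\nu\xi$ decomposition. The paper does this directly for general $t$ rather than via your $t=1$ warm-up, but the argument and the key combinatorial observation are identical.

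One remark on your closing comment, which goes beyond this lemma: the paper does \emph{not} obtain the explicit diagonal value $f(k)=\binom{n+1}{\eta}^{-1}\binom{n}{\eta}^{-1}\tfrac{\eta+1}{\eta+1-k}$ by summing Weingarten functions as a hypergeometric series. Instead, \ref{lem:structurefactordiagonal} shows by a purely combinatorial cancellation argument on the permutation sums that $f(k)=(\eta!)^2\tfrac{\eta+1}{\eta-k+1}\,\Xi_{n,\eta}$ for a single unknown Weingarten sum $\Xi_{n,\eta}$, and then \ref{lem:weingartensum} fixes $\Xi_{n,\eta}$ from the unit-trace condition $\tr{}{\mathcal{T}_{2,\wedge^\eta\mathcal{U}_n}}=1$, never evaluating any Weingarten function directly. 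This is rather slicker than a direct hypergeometric evaluation, so if you pursue the $t=2$ diagonal specialization you should look there rather than \ref{sec:Hypergeometric-sums}.
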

\begin{proof}
The twirling operator from \ref{def:twirling_operator} is $\mathcal{T}_{t,\wedge^{\eta}\mathcal{U}_{n}}\doteq\int_{u\sim\mathcal{U}_{n}}\left(U(u)\ket{\vec{\eta}}\bra{\vec{\eta}}U^{\dagger}(u)\right)^{\otimes t}\mathrm{d}u$.
By expanding the fermion rotation using \ref{eq:fermion_rotation},
\begin{align}
\mathcal{T}_{t,\wedge^{\eta}\mathcal{U}_{n}} & =\sum_{\vec{p}_{\theta}.\vec{q}_{\theta}\in\mathcal{S}_{n,\eta}}\underbrace{\int_{u\sim\mathcal{U}_{n}}\prod_{\theta\in[t]}\det\left[u_{\vec{q}_{\theta}\vec{\eta}}\right]\det\left[u_{\vec{\eta}\vec{p}_{\theta}}^{\dagger}\right]\mathrm{d}u}_{f(\vec{p},\vec{q})}\bigotimes_{\theta\in[t]}\ket{\eV{\vec{p}_{\theta}}}\bra{\eV{\vec{q}_{\theta}}}.
\end{align}
Note that transposing a matrix does not change its determinant. We
now expand the determinants into Weingarten integrals \ref{def:Weingarten_integral}
using the Leibniz formula
\begin{align}
\det\left[u_{\vec{\eta}\vec{x}}^{\dagger}\right] & =\sum_{\tau\in S_{\eta}}(-1)^{\tau}\prod_{i\in[\eta]}\left(u_{\tau(\vec{\eta})\vec{x}}^{\dagger}\right)_{ii}=\sum_{\tau\in S_{\eta}}(-1)^{\tau}\prod_{i\in[\eta]}u_{\tau(i)x_{i}}^{\dagger}.\label{eq:leibnizdeterminant}\\
\det\left[u_{\vec{x}\vec{\eta}}\right] & =\sum_{\sigma\in S_{\eta}}(-1)^{\sigma}\prod_{i\in[\eta]}\left(u_{\sigma(\vec{x})\vec{\eta}}\right)_{ii}=\sum_{\sigma\in S_{\eta}}(-1)^{\sigma}\prod_{i\in[\eta]}u_{x_{\sigma(i)}i},
\end{align}
where $(-1)^{\sigma}=\mathrm{sgn}(\sigma)$ is parity of the permutation
$\sigma$. Above, we use the notation $\left(\sigma(\vec{x})\right)_{j}=x_{\sigma(j)}=j$
for any vector $\vec{x}\in\mathbb{Z^{\eta}}$. In particular, $\left(\sigma(\vec{\eta})\right)_{j}=\eta_{\sigma(j)}=\sigma(j)$
since $\vec{\eta}=\left(1,2,\cdots,\eta\right)$. The following identity
will be useful
\begin{align}
\det\left[\Delta_{\vec{x},\vec{y}}\right] & =\sum_{\sigma\in S_{\eta}}(-1)^{\sigma}\delta_{\vec{x},\sigma(\vec{y})}=\sum_{\sigma\in S_{\eta}}(-1)^{\sigma}\prod_{i\in[\eta]}\delta_{x_{i},y_{\sigma(i)}}=\begin{cases}
(-1)^{\sigma}, & \vec{y}=\sigma(\vec{x}),\\
0, & \text{otherwise}.
\end{cases}
\end{align}
Hence the structure factor
\begin{align}
f(\vec{p},\vec{q}) & \doteq\int_{u\sim\mathcal{U}_{n}}\sum_{\sigma_{\theta},\tau_{\theta}\in S_{\eta}}\mathrm{sgn}(\underbrace{\oplus_{j\in[t]}\sigma_{j}}_{\sigma}\underbrace{\oplus_{j\in[t]}\tau_{j}}_{\tau})\prod_{\theta\in[t]}\prod_{i\in[\eta]}u_{q_{\theta,\sigma_{\theta}(i)},i}\prod_{i\in[\eta]}u_{\tau_{\theta}(i),p_{\theta,i}}^{\dagger}\mathrm{d}u\nonumber \\
 & =\sum_{\pi,\xi\in S_{t\eta}}\sum_{\sigma_{\theta},\tau_{\theta}\in S_{\eta}}\mathrm{sgn}(\sigma\tau)\delta_{\sigma(\vec{q}),\xi\left(\vec{p}\right)}\delta_{\vec{\eta}^{\oplus t},\pi\tau\left(\vec{\eta}^{\oplus t}\right)}\wg{\xi\pi^{-1}}\nonumber \\
 & =\sum_{\pi,\xi\in S_{t\eta}}\sum_{\tau_{\theta}\in S_{\eta}}\mathrm{sgn}(\tau)\prod_{\theta\in[t]}\det\left[\Delta_{\vec{q}_{\theta},\vec{p}_{\theta}^{\xi}}\right]\delta_{\pi\left(\vec{\eta}^{\oplus t}\right),\tau\left(\vec{\eta}^{\oplus t}\right)}\wg{\xi\pi}\nonumber \\
 & =\sum_{\pi,\xi\in S_{t\eta}}\prod_{\theta\in[t]}\det\left[\Delta_{\vec{q}_{\theta},\vec{p}_{\theta}^{\xi}}\right]\det\left[\Delta_{\vec{\eta},\vec{\eta}_{\theta}^{\pi}}\right]\wg{\xi\pi},
\end{align}
Above, we use the notation $\bigoplus_{\theta\in[t]}\vec{\eta}_{\theta}^{\pi}\doteq\pi\left(\vec{\eta}^{\oplus t}\right)$
and $\bigoplus_{\theta\in[t]}\vec{p}_{\theta}^{\xi}\doteq\xi\left(\vec{p}\right)$. 

We may simplify further as $\det\left[\Delta_{\vec{\eta},\vec{\eta}_{\theta}^{\pi}}\right]$
is non-zero only when $\pi\left(\vec{\eta}^{\oplus t}\right)=\bigoplus_{\theta\in[t]}\,\mu_{\theta}(\vec{\eta})$
is a permutation that preserves the same set of terms $\left\{ k\right\} _{k\in[\eta]}$
for each $j$. The number of unique valid $\pi$ is thus $\eta!\binom{t}{1}^{\eta}\eta!\binom{t-1}{1}^{\eta}\cdots\eta!\binom{1}{1}^{\eta}=\left(\eta!\right)^{t}\left(t!\right)^{\eta}$.
Hence valid $\pi$ decompose into $\pi=\mu\nu$, where $\nu\in S_{t}^{\oplus\eta}$
which swaps elements at the same position between different copies
of $\vec{\eta}$, and $\mu\in S_{\eta}^{\oplus t}$ permutes within
each $\vec{\eta}$. More precisely,
\begin{equation}
\mu\left(\bigoplus_{\theta\in[t]}\vec{x}_{\theta}\right)=\left(\begin{array}{ccc}
x_{1,\mu_{1}(1)} & \cdots & x_{t,\mu_{1}(1)}\\
\vdots & \ddots & \vdots\\
x_{1,\mu_{\eta}(\eta)} & \cdots & x_{t,\mu_{\eta}(\eta)}
\end{array}\right),\quad\nu\left(\bigoplus_{j\in[t]}\vec{x}_{j}\right)\doteq\left(\begin{array}{ccc}
x_{\nu_{1}(1),1} & \cdots & x_{\nu_{t}(t),1}\\
\vdots & \ddots & \vdots\\
x_{\nu_{1}(1),\eta} & \cdots & x_{\nu_{t}(t),\eta}
\end{array}\right).\label{eq:permutation_factorization}
\end{equation}
This decomposition exactly characterizes valid permutations without
any over-counting as $|S_{\eta}^{\oplus t}||S_{t}^{\oplus\eta}|=\left(t!\right)^{\eta}\left(\eta!\right)^{t}$.
Thus
\begin{align}
f(\vec{p},\vec{q}) & =\sum_{\nu\in S_{t}^{\oplus\eta}}\sum_{\mu\in S_{\eta}^{\oplus t}}\sum_{\xi\in S_{t\eta}}(-1)^{\mu}\prod_{\theta\in[t]}\det\left[\Delta_{\vec{q}_{\theta},\vec{p}_{\theta}^{\xi}}\right]\wg{\xi\mu\nu},
\end{align}
and we then apply the cyclic property $\wg{\xi\mu\nu}=\wg{\mu\nu\xi}$.
\end{proof}
We find for the structure factor $f(\vec{p},\vec{q})$ that the case
where $\vec{q}$ is some permutation of $\vec{p}$ occurs quite frequently.
We may further simplify the twirling operator in this case.
\begin{lem}[Twirling operator diagonal structure factor]
\label{lem:twirling_expression_diagonal_full_proof}The coefficient
of diagonal components of the twirling operator 
\begin{align}
\mathcal{T}_{t,\wedge^{\eta}\mathcal{U}_{n}} & =\sum_{\vec{p}\in\mathcal{S}_{n,\eta}^{\otimes t}}f(\vec{p})\bigotimes_{\theta\in[t]}\ket{\eV{\vec{p}_{\theta}}}\bra{\eV{\vec{p}_{\theta}}}+\sum_{\vec{p}\neq\vec{q}\in\mathcal{S}_{n,\eta}^{\otimes t}}f(\vec{p},\vec{q})\bigotimes_{\theta\in[t]}\ket{\eV{\vec{p}_{\theta}}}\bra{\eV{\vec{q}_{\theta}}},
\end{align}
is the diagonal structure factor
\begin{equation}
f(\vec{p})=\sum_{\nu\in S_{\vec{\eta},\cdots,\vec{\eta}}}\sum_{\gamma\in S_{\vec{p}_{1},\cdots,\vec{p}_{t}}}\sum_{\mu,\chi\in S_{\eta}^{\oplus t}}(-1)^{\mu\chi}\wg{\mu\nu\chi\gamma}.
\end{equation}
where $\nu\in S_{\vec{\eta},\cdots,\vec{\eta}}$ swaps matching elements
between $t$ copies of $\eta$, and $\gamma\in S_{\vec{p}_{1},\cdots,\vec{p}_{t}}$
swaps matching elements between the $t$ vectors $\vec{p}_{j}$.
\end{lem}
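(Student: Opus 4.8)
The approach is to specialize the general structure factor of \ref{lem:twirling_expression_full_proof} to $\vec{q}=\vec{p}$ and then cut the sum over $\xi\in S_{t\eta}$ down to the permutations that actually contribute. Setting $\vec{q}_{\theta}=\vec{p}_{\theta}$ for every $\theta$ in \ref{lem:twirling_expression_full_proof}, the only change is that each factor $\det[\Delta_{\vec{q}_{\theta},\vec{p}_{\theta}^{\xi}}]$ becomes $\det[\Delta_{\vec{p}_{\theta},\vec{p}_{\theta}^{\xi}}]$, so
\begin{equation}
f(\vec{p})=\sum_{\mu\in S_{\eta}^{\oplus t}}\sum_{\nu\in S_{t}^{\oplus\eta}}\sum_{\xi\in S_{t\eta}}(-1)^{\mu}\Big(\prod_{\theta\in[t]}\det[\Delta_{\vec{p}_{\theta},\vec{p}_{\theta}^{\xi}}]\Big)\wg{\mu\nu\xi}.
\end{equation}
Since each $\vec{p}_{\theta}\in\mathcal{S}_{n,\eta}$ has distinct entries, $\det[\Delta_{\vec{p}_{\theta},\vec{p}_{\theta}^{\xi}}]$ is $\pm1$ exactly when the $\theta$-th block $\vec{p}_{\theta}^{\xi}$ of $\xi(\vec{p})$ is a rearrangement of $\vec{p}_{\theta}$, and $0$ otherwise; hence the product over $\theta$ vanishes unless $\xi$ preserves the multiset of values inside every block.

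The key combinatorial step is to identify these admissible $\xi$ with the subgroup $S_{\eta}^{\oplus t}\cdot S_{\vec{p}_{1},\cdots,\vec{p}_{t}}$ of $S_{t\eta}$, where $S_{\vec{p}_{1},\cdots,\vec{p}_{t}}$ denotes the stabilizer of the concatenated tuple $\vec{p}=\bigoplus_{\theta}\vec{p}_{\theta}$, i.e. the subgroup permuting only positions carrying equal values — precisely the permutations that swap matching entries between the $\vec{p}_{j}$. Because all values within a single block are distinct, $S_{\eta}^{\oplus t}$ and $S_{\vec{p}_{1},\cdots,\vec{p}_{t}}$ meet trivially, so every admissible $\xi$ factors uniquely as $\xi=\chi\gamma$ with $\chi\in S_{\eta}^{\oplus t}$ and $\gamma\in S_{\vec{p}_{1},\cdots,\vec{p}_{t}}$: one peels off $\gamma$ as the permutation that returns each block of $\xi(\vec{p})$ to the sorted order of $\vec{p}_{\theta}$. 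Since $\gamma$ fixes $\vec{p}$ entrywise, for such $\xi$ the block $\vec{p}_{\theta}^{\xi}$ equals $\chi_{\theta}(\vec{p}_{\theta})$, so by the same determinant identity used in the proof of \ref{lem:twirling_expression_full_proof}, $\prod_{\theta}\det[\Delta_{\vec{p}_{\theta},\vec{p}_{\theta}^{\xi}}]=\prod_{\theta}(-1)^{\chi_{\theta}}=(-1)^{\chi}$.

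Substituting this factorization, replacing the $\xi$-sum by independent sums over $\chi\in S_{\eta}^{\oplus t}$ and $\gamma\in S_{\vec{p}_{1},\cdots,\vec{p}_{t}}$, using $(-1)^{\mu}(-1)^{\chi}=(-1)^{\mu\chi}$, and observing that $S_{t}^{\oplus\eta}=S_{\vec{\eta},\cdots,\vec{\eta}}$ because $\vec{\eta}=(1,\cdots,\eta)$ has distinct entries (its matching positions are just the coinciding local indices), one obtains
\begin{equation}
f(\vec{p})=\sum_{\nu\in S_{\vec{\eta},\cdots,\vec{\eta}}}\sum_{\gamma\in S_{\vec{p}_{1},\cdots,\vec{p}_{t}}}\sum_{\mu,\chi\in S_{\eta}^{\oplus t}}(-1)^{\mu\chi}\wg{\mu\nu\chi\gamma},
\end{equation}
which is the claimed formula; as a sanity check, both sides carry $(\eta!)^{2t}(t!)^{\eta}\,|S_{\vec{p}_{1},\cdots,\vec{p}_{t}}|$ summands. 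I expect the main obstacle to be bookkeeping rather than anything conceptual: one has to be careful with the convention for composing position-permutations from \ref{lem:twirling_expression_full_proof} so that the argument of $\wg{\cdot}$ emerges as exactly $\mu\nu\chi\gamma$ (possibly after one further relabeling combined with the cyclic class-invariance of the Weingarten function), and one has to track correctly the sign contributed by the determinant factors when $\gamma$ moves equal-valued positions across blocks. This last point is exactly where the diagonal structure factor is genuinely richer than the off-diagonal one, since $S_{\vec{p}_{1},\cdots,\vec{p}_{t}}$ is strictly larger than $S_{t}^{\oplus\eta}$ whenever several of the $\vec{p}_{\theta}$ share entries.
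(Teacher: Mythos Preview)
Your proposal is correct and follows essentially the same route as the paper: specialize the structure factor of \ref{lem:twirling_expression_full_proof} to $\vec{q}=\vec{p}$, observe that the determinant product forces $\xi$ to preserve the multiset of values in each block, and then factor the admissible $\xi$ uniquely as $\chi\gamma$ with $\chi\in S_{\eta}^{\oplus t}$ and $\gamma\in S_{\vec{p}_{1},\cdots,\vec{p}_{t}}$, picking up the sign $(-1)^{\chi}$ from the determinants. Your explicit justification that $S_{\eta}^{\oplus t}\cap S_{\vec{p}_{1},\cdots,\vec{p}_{t}}=\{e\}$ (hence uniqueness of the factorization) and the identification $S_{t}^{\oplus\eta}=S_{\vec{\eta},\cdots,\vec{\eta}}$ are a bit more spelled out than the paper's version, but the argument is the same.
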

\begin{proof}
Observe that the twirling operator has unit trace, $\tr{}{\mathcal{T}_{t,\wedge^{\eta}\mathcal{U}_{n}}}=1$.
Using the expression for it from \ref{lem:twirling_expression_full_proof},
\begin{align}
\tr{}{\mathcal{T}_{t,\wedge^{\eta}\mathcal{U}_{n}}} & =\sum_{\vec{p}_{\theta}\in\mathcal{S}_{n,\eta}}f(\vec{p},\vec{p}).
\end{align}
Let us define $f(\vec{p})\doteq f(\vec{p},\vec{p})$. Hence
\begin{equation}
f(\vec{p})=\sum_{\mu\in S_{\eta}^{\oplus t}}\sum_{\nu\in S_{t}^{\oplus\eta}}\sum_{\xi\in S_{t\eta}}(-1)^{\mu}\prod_{\theta\in[t]}\det\left[\Delta_{\vec{p}_{\theta},\vec{p}_{\theta}^{\xi}}\right]\wg{\mu\nu\xi}.
\end{equation}
Now observe that $\det\left[\Delta_{\vec{p}_{\theta},\vec{p}_{\theta}^{\xi}}\right]$
is non-zero only when $\bigoplus_{\theta\in[t]}\vec{p}_{\theta}=\xi\left(\bigoplus_{\theta\in[t]}\vec{p}_{\theta}\right)$
is a permutation that preserves the same set of terms in $\vec{p}_{\theta}$.
Similar to the derivation of \ref{eq:permutation_factorization},
valid $\xi$ decomposes into $\xi=\chi\gamma$, $\gamma\in S_{\vec{p}_{1},\cdots,\vec{p}_{t}}$
which swaps matching elements between different $\vec{p}_{j}$, and
$\chi\in S_{\eta}^{\oplus t}$ permutes within each $\vec{p}_{j}$.
Thus
\begin{align}
f(\vec{p}) & =\sum_{\nu\in S_{t}^{\oplus\eta}}\sum_{\gamma\in S_{\vec{p}_{1},\cdots,\vec{p}_{t}}}\sum_{\mu,\chi\in S_{\eta}^{\oplus t}}(-1)^{\mu\chi}\wg{\mu\nu\chi\gamma}.
\end{align}
\end{proof}
The diagonal structure factor in \ref{lem:twirling_expression_diagonal_full_proof}
contains many instances of Weingarten functions, each depending on
permissible swaps between the vectors $\vec{p}_{j}$. We now show
that for each $\eta$, there is only one unique sum.
\begin{lem}
\label{lem:diagonal_structure_factor}\label{lem:structurefactordiagonal}For
all $\eta\ge0,t=2$, $\vec{p}_{1},\vec{p}_{2}\in\mathcal{S}_{n,\eta}$,
\begin{align}
f(\vec{p}) & =g_{\eta}\left(\match{\vec{p}_{1}}{\vec{p}_{2}}\right)\Xi_{n,\eta},\quad\Xi_{n,\eta}=\sum_{\nu\in S_{t}^{\oplus\eta}}\sum_{\mu\in S_{\eta}^{\oplus t}}(-1)^{\mu}\wg{\mu\nu},\\
g_{\eta}(k) & =\left(\eta!\right)^{2}\frac{\eta+1}{\eta-k+1}.
\end{align}
\end{lem}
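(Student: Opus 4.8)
The plan is to recognize $f(\vec{p})$ as a diagonal matrix element of the twirling operator and exploit the $\mathrm{GL}_{n}$‑representation theory of $\mathcal{T}_{2,\wedge^{\eta}\mathcal{U}_{n}}$, using a short direct Weingarten computation only to fix the normalization $\Xi_{n,\eta}$. By \ref{lem:twirling_expression_diagonal_full_proof} with $t=2$ one has $f(\vec{p})=\bra{\vec{p}_{1}}\!\otimes\!\bra{\vec{p}_{2}}\,\mathcal{T}_{2,\wedge^{\eta}\mathcal{U}_{n}}\,\ket{\vec{p}_{1}}\!\otimes\!\ket{\vec{p}_{2}}$, and $\mathcal{T}_{2,\wedge^{\eta}\mathcal{U}_{n}}=\mathbb{E}_{u}\big[(U_{\eta}(u)\!\otimes\!U_{\eta}(u))\,(\ketbra{[\eta]}{[\eta]})^{\otimes2}\,(U_{\eta}(u)\!\otimes\!U_{\eta}(u))^{\dagger}\big]$. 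Decompose $\wedge^{\eta}\mathbb{C}^{n}\otimes\wedge^{\eta}\mathbb{C}^{n}=\bigoplus_{j=0}^{\eta}W_{j}$ into $\mathrm{GL}_{n}$‑irreducibles $W_{j}=\mathbb{S}_{(2^{\eta-j},1^{2j})}(\mathbb{C}^{n})$ (Littlewood--Richardson, multiplicity‑free). The vector $\ket{[\eta]}^{\otimes2}$ has weight $(2^{\eta},0^{n-\eta})$, which occurs with total multiplicity one in the tensor product: for $j\ge1$ the first column of $(2^{\eta-j},1^{2j})$ has $\eta+j$ cells, so it admits no column‑strict filling from an alphabet of size $\eta$, i.e. $K_{(2^{\eta-j},1^{2j}),(2^{\eta})}=0$. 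Hence $\ket{[\eta]}^{\otimes2}$ is the highest‑weight vector of $W_{0}=\mathbb{S}_{(2^{\eta})}(\mathbb{C}^{n})$, and Schur's lemma gives $\mathcal{T}_{2,\wedge^{\eta}\mathcal{U}_{n}}=(\dim W_{0})^{-1}\Pi_{W_{0}}$. Therefore $f(\vec{p})=(\dim W_{0})^{-1}\,\|\Pi_{W_{0}}(\ket{\vec{p}_{1}}\!\otimes\!\ket{\vec{p}_{2}})\|^{2}$, which manifestly depends on $\vec{p}$ only through $k=\match{\vec{p}_{1}}{\vec{p}_{2}}$.

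To pin down the normalization, I would evaluate $f(\vec{p})$ at $k=0$ directly from the Weingarten form of \ref{lem:twirling_expression_diagonal_full_proof}: there only $\gamma=e$ contributes, and by cyclicity of $\wg{\cdot}$ together with the substitution $\mu\mapsto\chi\mu$ (which leaves $(-1)^{\mu\chi}$ invariant) the $\chi$‑sum collapses to the flat factor $|S_{\eta}^{\oplus2}|=(\eta!)^{2}$, so $f(\vec{p})\big|_{k=0}=(\eta!)^{2}\sum_{\nu\in S_{2}^{\oplus\eta}}\sum_{\mu\in S_{\eta}^{\oplus2}}(-1)^{\mu}\wg{\mu\nu}=(\eta!)^{2}\,\Xi_{n,\eta}=g_{\eta}(0)\,\Xi_{n,\eta}$. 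The lemma then reduces to the single identity $f(\vec{p})\big|_{k}=\tfrac{\eta+1}{\eta+1-k}\,f(\vec{p})\big|_{k=0}$, equivalently that $\|\Pi_{W_{0}}(\ket{\vec{p}_{1}}\!\otimes\!\ket{\vec{p}_{2}})\|^{2}$ equals $\tfrac{\eta+1}{\eta+1-k}$ times its value for disjoint $\vec{p}_{1},\vec{p}_{2}$.

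The hardest step is this $k$‑dependence. I would compute $\|\Pi_{W_{0}}(\ket{\vec{p}_{1}}\!\otimes\!\ket{\vec{p}_{2}})\|^{2}$ by realizing $\Pi_{W_{0}}$ through the Young symmetrizer of the $\eta\times2$ rectangular tableau and retaining only the tensor legs compatible with the occupation pattern of $(\vec{p}_{1},\vec{p}_{2})$ --- or, equivalently, by organizing the Weingarten sum of \ref{lem:twirling_expression_diagonal_full_proof} by the cycle type of $\mu\nu\chi\gamma$ and summing over the $2^{k}$ elements $\gamma$ of $S_{\vec{p}_{1},\vec{p}_{2}}\cong(\mathbb{Z}/2)^{k}$ (the transpositions swapping, between the two copies of $[\eta]$, the positions of each shared index). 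The obstacle is precisely that, unlike the free index $\nu$ whose sum is absorbed by a coset shift, each additional row‑swap in $\gamma$ genuinely changes the cycle type --- this is why $f$ depends on $k$ at all --- so one must control how it acts against the antisymmetrizations built into $\Pi_{W_{0}}$. I expect the count to telescope into a one‑column recursion $f(\vec{p})\big|_{k}=\tfrac{\eta-k+2}{\eta-k+1}\,f(\vec{p})\big|_{k-1}$ obtained by peeling off one shared index; iterating from the $k=0$ anchor then yields $f(\vec{p})=g_{\eta}(k)\,\Xi_{n,\eta}$ with $g_{\eta}(k)=(\eta!)^{2}\tfrac{\eta+1}{\eta+1-k}$. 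As a consistency check, $\dim W_{0}=\binom{n}{\eta}\binom{n+1}{\eta}/(\eta+1)$ by the hook‑content formula, so $f(\vec{p})\big|_{k=\eta}=1/\dim W_{0}$ is recovered.
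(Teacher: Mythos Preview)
Your framing via the $\mathrm{GL}_n$ decomposition is genuinely different from the paper and, once completed, cleaner. The paper never mentions $W_0$ or Schur's lemma; it stays inside the Weingarten sum of \ref{lem:twirling_expression_diagonal_full_proof}, rewrites $f(\vec{p})=\sum_{\chi,\gamma}\Xi_{n,\eta}(\chi\gamma\chi^{-1})$ with $\Xi_{n,\eta}(\tau)\doteq\sum_{\mu,\nu}(-1)^{\mu}\wg{\mu\nu\tau}$, groups the conjugates $\chi\gamma\chi^{-1}$ by their number $j$ of $2$-cycles, and proves the recursion $\Xi_{n,\eta,j+1}=(\eta-j)\Xi_{n,\eta,j}$ by a parity-cancellation argument (for each ``off-diagonal'' transposition $g$ one exhibits an odd $\mu'$ with matching cycle type so the signed sum vanishes). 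Summing $\Xi_{n,\eta,j}=\frac{\eta!}{(\eta-j)!}\Xi_{n,\eta}$ against the multiplicities $|\Theta_{\eta,k,j}^{-1}|=\frac{k!(\eta-j)!^2}{(k-j)!}$ then yields $g_\eta(k)$. This is exactly the ``approach 2'' you allude to but do not execute; what the paper buys is a self-contained combinatorial proof with no representation theory, at the cost of the somewhat delicate cancellation step.

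The gap in your proposal is that the $k$-dependence is left at ``I expect the count to telescope.'' Within your own framework it can be closed without touching Weingarten functions at all. With $\mathcal{T}_2=\Pi_{W_0}/\dim W_0$, the vector $\ket{\vec{p}_1}\!\otimes\!\ket{\vec{p}_2}$ has weight $(2^k,1^{2(\eta-k)})$ on $\vec{p}_1\cup\vec{p}_2$; the symmetric group permuting the $2(\eta-k)$ singleton labels acts (through $U_\eta$, hence unitarily and commuting with $\Pi_{W_0}$) transitively on the $\binom{2(\eta-k)}{\eta-k}$ orthonormal basis vectors of that weight space, so
\[
\|\Pi_{W_0}(\ket{\vec{p}_1}\!\otimes\!\ket{\vec{p}_2})\|^2=\frac{K_{(2^\eta),(2^k,1^{2(\eta-k)})}}{\binom{2(\eta-k)}{\eta-k}}.
\]
In any SSYT of shape $2\times\eta$ with this content the first $k$ rows are forced to $(i,i)$, so the Kostka number equals the number of SYT of shape $2\times(\eta-k)$, i.e.\ the Catalan number $C_{\eta-k}=\tfrac{1}{\eta-k+1}\binom{2(\eta-k)}{\eta-k}$. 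Hence the norm-squared is $\tfrac{1}{\eta-k+1}$, and dividing by its $k=0$ value $\tfrac{1}{\eta+1}$ gives precisely $\tfrac{\eta+1}{\eta+1-k}$, which together with your $k=0$ anchor $f|_{k=0}=(\eta!)^2\Xi_{n,\eta}$ proves the lemma and, via $f|_{k=\eta}=1/\dim W_0$, recovers $\Xi_{n,\eta}=\big[(\eta!)^2\binom{n}{\eta}\binom{n+1}{\eta}\big]^{-1}$ as in \ref{lem:weingartensum}.
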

\begin{proof}
We make use of the fact that the Weingarten function depends only
on cycle structure of the permutation, which is uniquely determined
by its conjugacy class. Hence we have the identities such as 
\begin{equation}
\wg{abc}=\wg{cab}=\wg{c^{-1}b^{-1}a^{-1}}.
\end{equation}
From \ref{lem:twirling_expression_diagonal_full_proof}, the structure
factor 
\begin{align}
f(\vec{p}) & =\sum_{\nu\in S_{2}^{\oplus\eta}}\sum_{\gamma\in S_{\vec{p}_{1},\vec{p}_{2}}}\sum_{\mu,\chi\in S_{\eta}^{\oplus2}}(-1)^{\mu\chi}\wg{\mu\nu\chi\gamma}.
\end{align}
Let us insert an identity term $\mu\nu\chi\gamma=\mu\nu\chi\gamma\chi^{-1}\chi$
and substitute $\chi\mu\rightarrow\mu$ to obtain
\begin{align}
f(\vec{p}) & =\sum_{\chi\in S_{\eta}^{\oplus2}}\sum_{\gamma\in S_{\vec{p}_{1},\vec{p}_{2}}}\Xi_{n,\eta}\left(\chi\gamma\chi^{-1}\right),\\
\Xi_{n,\eta}\left(\tau\right) & \doteq\sum_{\mu\in S_{\eta}^{\oplus2}}\sum_{\nu\in S_{2}^{\oplus\eta}}(-1)^{\mu}\wg{\mu\nu\tau}.
\end{align}

Whereas $\gamma\in S_{\vec{p},\vec{q}}$ is a product of up to $k=\match{\vec{p}}{\vec{q}}$
$2$-cycles $\left(p_{j}q_{j}\right)$ that transposes matching elements
of $\vec{p},\vec{q}$, let the set $\Theta_{k}\ni\chi\gamma\chi^{-1}$
represents all possible transpositions $\left(p_{j}q_{k}\right)$
between up to any $k$ elements of $\vec{p},\vec{q}$. As the cycle
structure is invariant under conjugation, $\mathrm{cycles}\left(\chi\gamma\chi^{-1}\right)=\mathrm{cycles}\left(\gamma\right)$.
Let the set 
\begin{align}
\Theta_{\eta,j} & =\left\{ \chi\gamma\chi^{-1}:\chi\in S_{\eta}^{\oplus2},\gamma\in S_{2}^{\oplus\eta},\mathrm{2\text{-}cycles}\left(\gamma\right)=j\right\} ,\\
\left|\Theta_{\eta,j}\right|= & \binom{\eta}{j}\frac{(\eta)(\eta-1)\cdots(\eta-j+1)}{j!}=\binom{\eta}{j}\frac{\eta!}{(\eta-j)!}=j!\binom{\eta}{j}^{2}
\end{align}
be the distinct elements generated by any $\gamma$ with a cycle structure
of $(\underbrace{2,\cdots,2}_{j\;\text{times}},\cdots)$. Then 
\begin{equation}
\Theta_{k}=\bigcup_{j=0}^{k}\Theta_{\eta,j},\quad\left|\Theta_{k}\right|=\sum_{j=0}^{k}j!\binom{\eta}{j}^{2}.
\end{equation}
As there are fewer elements in $\left|\Theta_{k}\right|$ than there
are permutations in $\left|S_{\eta}^{\oplus2}\right|\left|S_{2}^{\oplus k}\right|$,
the map from $\left(\chi,\gamma\right)\rightarrow\tau$ is injective
with multiplicity
\begin{align}
\Theta_{\eta,k}^{-1}(\tau) & =\left\{ \left(\mu,\nu\right):\tau=\mu\nu\mu^{-1},\mu\in S_{\eta}^{\oplus2},\nu\in S_{2}^{\oplus k}\right\} ,\\
\left|\Theta_{\eta,k}^{-1}(\tau)\right| & =\left|\Theta_{\eta,k,j}^{-1}\right|=\binom{k}{j}\frac{(\eta!)^{2}}{j!\binom{\eta}{j}^{2}}=\frac{k!(\eta-j)!^{2}}{(k-j)!},\;j=2\text{-}\mathrm{cycles}(\tau)
\end{align}
The elements of $\left(\mu,\cdot\right)\in\Theta_{k}^{-1}(\tau)$
are permutations within each vector $\vec{p},\vec{q}$

This allows us to express the structure factor Weingarten sum over
distinct elements in $\Theta_{\eta}$
\begin{align}
\Xi_{n,\eta,j} & \doteq\sum_{\tau\in\Theta_{\eta,j}}\Xi_{n,\eta}\left(\tau\right),\\
\Rightarrow f(k) & =\sum_{j=0}^{k}\left|\Theta_{k,j}^{-1}\right|\Xi_{n,\eta,j}.
\end{align}
We now evaluate $\Xi_{n,\eta,j+1}$ in terms of $\Xi_{n,\eta,j}$.
Observe that the case $j=0$ corresponds to $\tau\in\text{\ensuremath{\Theta_{\eta,0}}=\ensuremath{\left\{  e\right\} } }.$
Hence
\begin{equation}
\Xi_{n,\eta,0}=\Xi_{n,\eta}\left(e\right)=\Xi_{n,\eta}.
\end{equation}
For any $j>0$, observe that any element of $\tau_{j+1}\in\Theta_{\eta,j+1}$
is the product of a transposition and an element $\tau_{j}\in\Theta_{\eta,j}$.
Let $G_{\tau_{j}}=\left\{ (jk):(jk)\notin\tau_{j}\right\} $ be the
set of transpositions between any element of $\vec{p}$ and $\vec{q}$,
excluding those contained in $\tau$. There are $\left|G_{\tau_{j}}\right|=(\eta-j)^{2}$
such transpositions. Hence
\begin{align}
\Xi_{n,\eta,j+1} & =\sum_{\tau\in\Theta_{\eta,j+1}}\Xi_{n,\eta}\left(\tau\right)=\sum_{\tau\in\Theta_{\eta,j}}\sum_{g\in G_{\tau}}\Xi_{n,\eta}\left(g\tau\right)
\end{align}

When $j=1$, consider all transpositions $g\in G_{e}$ in the sum
\begin{equation}
\Xi_{n,\eta,1}=\sum_{g\in G_{e}}\Xi_{n,\eta}\left(g\right)=\sum_{g\in G_{e}}\sum_{\mu\in S_{\eta}^{\oplus2}}\sum_{\nu\in S_{2}^{\oplus\eta}}(-1)^{\mu}\wg{\mu\nu g}.
\end{equation}
 Observe that only the $\eta$ elements $g\in S_{2}^{\oplus\eta}\cap G$
leave the sum unchanged as seen by a change of variables $\nu g\rightarrow\nu\in S_{2}^{\oplus\eta}$.
For all other transpositions in $G_{e}/S_{2}^{\oplus\eta}$, consider
the transposition $(l_{j}r_{k})$ representing swapping elements $p_{1,j}\leftrightarrow p_{2,k}$
where $j\neq k$. For every $\nu\in S_{2}^{\oplus\eta}$, with $k$
$2$-cycles, e.g. for $k=1$, $g\nu=(l_{j}r_{k})(l_{r}r_{r})$ consider
the $\nu$ with one more or less $2$-cycles where the added cycle
shares an index with $g$. E.g., $(l_{j}r_{k})(l_{r}r_{r})(l_{k}r_{k})=(l_{k}l_{j}r_{k})(l_{r}r_{r})$
or $(l_{j},r_{k})(l_{r},r_{r})(l_{j},r_{j})=(l_{j}r_{j}r_{k})(l_{r},r_{r})$.
There is always an odd permutation $\mu'\in S_{\eta}^{\oplus2}$ that
converts the $3$-cycle back into a $2$-cycle. For instance, $(r_{j}r_{k})(l_{j}r_{j}r_{k})=(l_{j}r_{k})$.
Thus the sum $\sum_{\mu\in S_{\eta}^{\oplus2}}(-1)^{\mu}\wg{\mu g\nu}=\sum_{\mu\in S_{\eta}^{\oplus2}}(-1)^{\mu}\wg{\mu\mu'g\nu}=-\sum_{\mu\in S_{\eta}^{\oplus2}}(-1)^{\mu}\wg{\mu g\nu}=0$,
and 
\begin{equation}
\Xi_{n,\eta,1}=\eta\Xi_{n,\eta,0}=\eta\Xi.
\end{equation}
For any $j>0$, a similar argument holds -- only $\eta-j$ transpositions
in $G_{\tau_{j}}$ do not cancel. All other elements $g\nu$ have
a matching $\mu'g\nu'$ with exactly the same cycle structure where
$\mu'$ is odd. Hence 
\begin{equation}
\Xi_{n,\eta,j+1}=\left(\eta-j\right)\Xi_{n,\eta,j}=\frac{\eta!}{(\eta-j)!}\Xi_{n,\eta}=j!\binom{\eta}{j}\Xi_{n,\eta}.
\end{equation}

Now substituting our result for $\Xi_{n,\eta,j}$ into the structure
factor,
\begin{equation}
f(k)=\sum_{j=0}^{k}\frac{k!(\eta-j)!^{2}}{(k-j)!}\frac{\eta!}{(\eta-j)!}\Xi_{n,\eta}=\left(\eta!\right)^{2}\sum_{j=0}^{k}\frac{k!(\eta-j)!}{\eta!(k-j)!}\Xi_{n,\eta}=\left(\eta!\right)^{2}\frac{\eta+1}{\eta-k+1}\Xi_{n,\eta}.
\end{equation}
Above, we use the fact $A(\eta,k)=\sum_{j=0}^{k}\frac{k!(\eta-j)!}{\eta!(k-j)!}=\frac{\eta+1}{\eta-k+1}$.
which may be proven by induction. Assuming $A(\eta,k)=\frac{\eta+1}{\eta-k+1}$
is true. Then $A(\eta,0)=\sum_{j=0}^{0}\frac{0!(\eta-j)!}{\eta!(0-j)!}=1$
is true. Observe that
\begin{align}
A(\eta,k) & =\sum_{j=0}^{k}\frac{k!(\eta-j)!}{\eta!(k-j)!}=\frac{\eta+1}{k+1}\sum_{j=1}^{k+1}\frac{(k+1)!(\eta+1-j)!}{(\eta+1)!(k+1-j)!}.\\
\frac{k+1}{\eta+1}\left(A(\eta,k)+\frac{\eta+1}{k+1}\right) & =A(\eta+1,k+1)
\end{align}
Now $A(\eta,k)=\frac{\eta+1}{\eta-k+1}$ is true implies that 
\begin{equation}
\frac{k+1}{\eta+1}\left(\frac{\eta+1}{\eta-k+1}+\frac{\eta+1}{k+1}\right)=\frac{\eta+2}{\eta-k+2},
\end{equation}
and $A(\eta+1,k+1)$ is true. By iterating from $A(\eta,0)$ for all
$\eta$, hence $A(\eta,k)$ is true for all $k\le\eta$.
\end{proof}
With the partially evaluated twirling operator, we may evaluate some
useful sums of Weingarten functions.
\begin{lem}
\label{lem:weingartensum} The sum of Weingarten functions 
\begin{align}
\Xi_{n,\eta} & =\sum_{\nu\in S_{2}^{\oplus\eta}}\sum_{\mu\in S_{\eta}^{\oplus2}}(-1)^{\mu}\wg{\mu\nu}=\frac{1}{\left(\eta!\right)^{2}\binom{n}{\eta}\binom{n+1}{\eta}}.
\end{align}
\end{lem}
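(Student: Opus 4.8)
The plan is to sidestep any direct evaluation of a Weingarten sum and instead pin down $\Xi_{n,\eta}$ from the normalization of the twirling operator. Since $\mathcal{T}_{2,\wedge^{\eta}\Uc{n}}$ is a Haar average of rank-one projectors $\left(U_{\eta}(u)\Pi_{[\eta]}U_{\eta}^{\dagger}(u)\right)^{\otimes 2}$, it has unit trace, $\tr{}{\mathcal{T}_{2,\wedge^{\eta}\Uc{n}}}=1$ — the observation already used at the start of the proof of \ref{lem:twirling_expression_diagonal_full_proof}. Taking the trace of the expression for $\mathcal{T}_{2,\wedge^{\eta}\Uc{n}}$ in \ref{lem:twirling_expression} annihilates every genuinely off-diagonal term $\ket{\vec p_{1}}\bra{\vec q_{1}}\otimes\ket{\vec p_{2}}\bra{\vec q_{2}}$ with $\vec p\neq\vec q$, leaving only the diagonal structure factor, so that $1=\sum_{\vec p_{1},\vec p_{2}\in\mathcal{S}_{n,\eta}}f(\vec p)$.

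First I would substitute the closed form $f(\vec p)=g_{\eta}\!\left(\match{\vec p_{1}}{\vec p_{2}}\right)\Xi_{n,\eta}$ with $g_{\eta}(k)=\left(\eta!\right)^{2}\frac{\eta+1}{\eta-k+1}$ from \ref{lem:structurefactordiagonal}. Because the summand depends on $(\vec p_{1},\vec p_{2})$ only through the overlap, I would organize the double sum by overlap size: the number of ordered pairs with $\match{\vec p_{1}}{\vec p_{2}}=k$ is $\binom{n}{\eta}\binom{\eta}{k}\binom{n-\eta}{\eta-k}$ (pick $\vec p_{1}$, pick which $k$ of its elements are shared, then complete $\vec p_{2}$ from the $n-\eta$ unused modes). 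This yields
\[
1=\Xi_{n,\eta}\left(\eta!\right)^{2}(\eta+1)\binom{n}{\eta}\sum_{k=0}^{\eta}\frac{\binom{\eta}{k}\binom{n-\eta}{\eta-k}}{\eta-k+1},
\]
so the lemma reduces to the combinatorial identity $(\eta+1)\sum_{k=0}^{\eta}\binom{\eta}{k}\binom{n-\eta}{\eta-k}/(\eta-k+1)=\binom{n+1}{\eta}$, after which solving for $\Xi_{n,\eta}$ gives the stated value.

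That identity is the one genuinely computational step, and it is routine: reindex $j=\eta-k$, apply $\frac{1}{j+1}\binom{\eta}{j}=\frac{1}{\eta+1}\binom{\eta+1}{j+1}$ to cancel the $(\eta+1)$ prefactor, shift $i=j+1$, and recognize $\sum_{i}\binom{\eta+1}{i}\binom{n-\eta}{i-1}=\binom{n+1}{\eta}$ as an instance of Vandermonde's convolution (rewriting $\binom{n-\eta}{i-1}=\binom{n-\eta}{n-\eta+1-i}$). The only care needed is bookkeeping of the summation limits, since several binomial coefficients vanish outside their nominal ranges; no new ideas beyond \ref{lem:structurefactordiagonal} enter. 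I expect the conceptual crux to be the first paragraph — noticing that unit trace, together with the already-established shape $g_{\eta}(k)\Xi_{n,\eta}$ of the diagonal structure factor, over-determines $\Xi_{n,\eta}$, so that the raw Weingarten sum never has to be confronted head-on.
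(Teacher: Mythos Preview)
Your proposal is correct and follows essentially the same route as the paper: use $\tr{}{\mathcal{T}_{2,\wedge^{\eta}\mathcal{U}_{n}}}=1$, insert $f(\vec p)=g_{\eta}(\match{\vec p_{1}}{\vec p_{2}})\Xi_{n,\eta}$ from \ref{lem:structurefactordiagonal}, count pairs by overlap as $\binom{n}{\eta}\binom{\eta}{k}\binom{n-\eta}{\eta-k}$, and reduce to a Vandermonde-type identity. The only cosmetic difference is that you absorb the $1/(j+1)$ into $\binom{\eta}{j}$ to get $\frac{1}{\eta+1}\binom{\eta+1}{j+1}$, whereas the paper absorbs it into $\binom{n-\eta}{k}$ to get $\frac{1}{n-\eta+1}\binom{n-\eta+1}{k+1}$; both finish with Chu--Vandermonde and yield $\binom{n+1}{\eta}$.
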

\begin{proof}
We use the fact that the twirling operator has unit trace $\tr{}{\mathcal{T}_{2,\wedge^{\eta}\mathcal{U}_{n}}}$.
From its expression in \ref{lem:twirling_expression_diagonal_full_proof}
combined with the form for the structure factor $f(\vec{p})=g_{\eta}(\match{\vec{p}_{1}}{\vec{p}_{2}})\Xi_{n,\eta}$,
where $g_{\eta}(k)=\left(\eta!\right)^{2}\frac{\eta+1}{\eta-k+1}$
in \ref{lem:diagonal_structure_factor},
\begin{equation}
\tr{}{\mathcal{T}_{2,\wedge^{\eta}\mathcal{U}_{n}}}=\text{\ensuremath{\sum_{\vec{p}\in\mathcal{S}_{n,\eta}^{\otimes2}}}}g_{\eta}\left(\match{\vec{p}_{1}}{\vec{p}_{2}}\right)\Xi_{n,\eta}=\Xi_{n,\eta}\sum_{k=0}^{\eta}g_{\eta}\left(k\right)\text{\ensuremath{\sum_{\vec{p}_{1}\in\mathcal{S}_{n,\eta}}}}\sum_{\mathrm{Sim}\left(\vec{p}_{1},\vec{p}_{2}\right)=k}1.
\end{equation}
The combinatorial factor 
\begin{equation}
\text{\ensuremath{\sum_{\vec{p}_{1}\in\mathcal{S}_{n,\eta}}}}\sum_{\left|\vec{p}_{1}\cap\vec{p}_{2}\right|=k}1=\binom{n}{\eta}\binom{\eta}{k}\binom{n-\eta}{\eta-k}.
\end{equation}
Hence
\begin{align}
\tr{}{\mathcal{T}_{2,\wedge^{\eta}\mathcal{U}_{n}}} & =\Xi_{n,\eta}\binom{n}{\eta}\left(\eta!\right)^{2}\sum_{k=0}^{\eta}\frac{\eta+1}{\eta-k+1}\binom{\eta}{k}\binom{n-\eta}{\eta-k}=1.\label{eq:twirling_2_trace}
\end{align}
We now evaluate the sum

\begin{align}
A & =\sum_{k=0}^{\eta}\frac{\eta+1}{\eta-k+1}\binom{\eta}{k}\binom{n-\eta}{\eta-k}=\sum_{k=0}^{\eta}\frac{\eta+1}{k+1}\binom{\eta}{\eta-k}\binom{n-\eta}{k}.
\end{align}
Using the recurrence relation $\binom{n}{k}=\frac{n+1-k}{k}\binom{n}{k-1}=\frac{n+1}{k}\binom{n}{k-1}-\binom{n}{k-1}$
implies $\frac{n+1}{k}\binom{n}{k-1}=\binom{n}{k}+\binom{n}{k-1}=\binom{n+1}{k}$.
Hence the term $\frac{1}{k+1}\binom{n-\eta}{k}=\frac{1}{n-\eta+1}\binom{n-\eta+1}{k+1}$,
and
\begin{equation}
A=\frac{\eta+1}{n-\eta+1}\sum_{k=0}^{\eta}\binom{n-\eta+1}{k+1}\binom{\eta}{\eta-k}=\frac{\eta+1}{n-\eta+1}\sum_{k=0}^{\eta}\binom{n-\eta+1}{\eta-k+1}\binom{\eta}{k}.
\end{equation}
We perform the sum using the Chu-Vandermonde identity $\sum_{k=0}^{\eta}\binom{n-m}{\eta-k}\binom{m}{k}=\binom{n}{\eta}$,
to obtain
\begin{equation}
A=\frac{\eta+1}{n-\eta+1}\binom{n+1}{\eta+1}=\binom{n+1}{\eta}.
\end{equation}
Substituting $A$ back into \ref{eq:twirling_2_trace}, we obtain
$\Xi_{n,\eta}=\frac{1}{(\eta!)^{2}}\binom{n+1}{\eta}^{-1}\binom{n}{\eta}^{-1}$.
\end{proof}

\section{\label{sec:Hypergeometric-sums}Hypergeometric sums}

In this section we evaluate the hypergeometric sums associated with
$\tr{}{\tilde{n}_{d}^{2}}$ in \ref{eq:trace_squared_sum} of \ref{thm:inverse_measurement_channel}
and the estimation matrix entries $E_{\eta,k,s}$ in \ref{eq:estimation_matrix_entries}
of \ref{thm:single_shot_estimate_efficient}.
\begin{lem}
\label{lem:trace_symmetric_difference_squared}For all integers $n\ge\eta\ge0$,
$d\in[0,\min(n,n-\eta)]$,
\begin{align}
\tr{\eta}{\tilde{n}_{d}^{2}} & =\frac{\sum_{s=0}^{\min(\eta,n-\eta)}\binom{\eta}{\eta-s}\binom{n-\eta}{s}\left(\sum_{j=\max(0,s+d-\eta)}^{\min(d,s)}(-1)^{j}(\eta-d+j)!(n-\eta-j)!\binom{s}{j}\binom{\eta-s}{d-j}\right)^{2}}{(\eta-d)!^{2}(n-\eta-d)!^{2}}\nonumber \\
 & =\frac{\eta!}{d!}\frac{(n-d+1)!(n-\eta)!}{(n-2d+1)(n-\eta-d)!^{2}(\eta-d)!^{2}}.
\end{align}
\end{lem}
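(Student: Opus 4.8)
The plan is to establish the two equalities in the statement in turn. The first, that $\tr{\eta}{\tilde{n}_d^2}$ equals the displayed double sum, is obtained by diagonalizing $\tilde{n}_d$. Starting from the expansion \ref{eq:symmetric_difference_elementary_polynomials} of $\tilde{n}_d$ into elementary symmetric polynomials of the number operators on $[\eta]$ and on $[n]\backslash[\eta]$, and using that these operators are commuting idempotents, $\tilde{n}_d$ is diagonal in the occupation basis. On a basis state $\ket{\vec{z}}\in\mathcal{S}_{n,\eta}$ that places $s$ of its $\eta$ particles outside $[\eta]$ we have $e_r(\{\hat{n}\}_{[\eta]})\ket{\vec{z}}=\binom{\eta-s}{r}\ket{\vec{z}}$ and $e_j(\{\hat{n}\}_{[n]\backslash[\eta]})\ket{\vec{z}}=\binom{s}{j}\ket{\vec{z}}$, so $\tilde{n}_d\ket{\vec{z}}=\lambda_s\ket{\vec{z}}$ with $\lambda_s=(\eta-d)!^{-1}(n-\eta-d)!^{-1}\sum_j(-1)^j(\eta-d+j)!(n-\eta-j)!\binom{s}{j}\binom{\eta-s}{d-j}$, the binomial supports pinning $j$ to $\max(0,s+d-\eta)\le j\le\min(d,s)$. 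Since exactly $\binom{\eta}{\eta-s}\binom{n-\eta}{s}$ basis states have a given value of $s$, $\tr{\eta}{\tilde{n}_d^2}=\sum_s\binom{\eta}{\eta-s}\binom{n-\eta}{s}\lambda_s^2$, which is the first expression.

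The second equality is the substantive part and, I expect, the main obstacle. The key structural fact I would exploit is that $\lambda_s$, being assembled from $\binom{\eta-s}{\cdot}$ and $\binom{s}{\cdot}$, is a polynomial in $s$ of degree at most $d$; more usefully, re-deriving $\tilde{n}_d$ by summing the defining product $\prod_{i=1}^d(\hat{n}_{x_i}-\hat{n}_{y_i})$ over $d$-subsets $\{x_i\}\subseteq[\eta]$ and over $d$-tuples $\{y_i\}\subseteq[n]\backslash[\eta]$ — expanding the product, averaging out the ordering of the $y$'s, and collapsing each $\hat{n}$-subset sum with the identity $\sum_{\vec{q}\in\mathcal{S}_{m,b}}e_r(\{X\}_{\vec{q}})=\binom{m-r}{b-r}e_r(\{X\}_{[m]})$ already used in \ref{sec:fermion_shadows} — yields the compact form $\lambda_s=\sum_{m=0}^d(-1)^{d-m}m!(d-m)!\binom{\eta-m}{d-m}\binom{n-\eta-d+m}{m}\binom{\eta-s}{m}\binom{s}{d-m}$. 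Squaring, $\lambda_s^2$ is a finite combination of products $\binom{\eta-s}{a}\binom{s}{b}$, and the outer sum of each such term against $\binom{\eta}{\eta-s}\binom{n-\eta}{s}$ is evaluated by two applications of Chu--Vandermonde, namely $\sum_s\binom{\eta}{s}\binom{n-\eta}{s}\binom{\eta-s}{a}\binom{s}{b}=\binom{\eta}{a}\binom{n-\eta}{b}\binom{n-a-b}{\eta-a-b}$. What then remains is to carry out the resulting hypergeometric multi-sum and show it telescopes to $\eta!(n-d+1)!(n-\eta)!\big/\big(d!(n-2d+1)(n-\eta-d)!^2(\eta-d)!^2\big)$.

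To organize that last step I would first contract the two same-argument pairs of binomials in $\lambda_s^2$ — the pair in $s$ and the pair in $\eta-s$ — using the standard convolution expansion of $\binom{x}{m}\binom{x}{m'}$ as a linear combination of $\binom{x}{\ell}$ with $\ell\le m+m'$, so that only single factors $\binom{\eta-s}{a}$ and $\binom{s}{b}$ remain before applying the Vandermonde evaluation above; the surviving sum is then a single terminating series, which I expect to close by a classical identity (Chu--Vandermonde or Pfaff--Saalschütz). The reindexing and collection of this multi-sum is the one genuinely delicate point; everything upstream is routine. As sanity checks, the cases $d=1$ (where $\lambda_s=\eta(n-\eta)-ns$ and the outer sum is elementary) and $d=\eta$ (where the binomial supports force $m=\eta-s$, so $\lambda_s=(-1)^s s!(n-\eta-s)!/(n-2\eta)!$ and the outer sum reduces via the hockey-stick identity) both reproduce the stated closed form, which in particular fixes the otherwise hard-to-guess factor $(n-2d+1)^{-1}$.
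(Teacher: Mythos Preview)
Your derivation of the first equality is correct and coincides with the paper's: diagonalize $\tilde{n}_d$ via \ref{eq:symmetric_difference_elementary_polynomials}, note the eigenvalue depends only on $s=|\vec{z}\cap([n]\backslash[\eta])|$, and count the $\binom{\eta}{\eta-s}\binom{n-\eta}{s}$ basis states with each $s$. Your ``compact form'' for $\lambda_s$ is just the paper's inner sum under the substitution $m=d-j$, so nothing new enters there.

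For the second equality your route and the paper's diverge completely. The paper does not attempt any closed-form hypergeometric evaluation; it uses Wilf--Zeilberger creative telescoping (via the \texttt{HolonomicFunctions} package) to produce certified first-order recurrences, first in $d$ at half-filling $n=2\eta$ and then in $n$ with $F_{2\eta,\eta,d}$ as initial data, and solves those. Your plan instead linearizes the products $\binom{\eta-s}{m}\binom{\eta-s}{m'}$ and $\binom{s}{d-m}\binom{s}{d-m'}$, sums out $s$ by Vandermonde, and hopes what remains is a single terminating series closable by Chu--Vandermonde or Pfaff--Saalsch\"utz. This is the genuine gap: after your linearization step each product contributes its own index, so post--$s$-sum you are left with a \emph{four}-fold sum over $(m,m',\ell_1,\ell_2)$ with summand involving $\binom{\eta}{\ell_1}\binom{n-\eta}{\ell_2}\binom{n-\ell_1-\ell_2}{\eta-\ell_1-\ell_2}$ times the two multinomial linearization kernels and the $m,m'$-dependent prefactors. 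That is not a balanced ${}_3F_2(1)$, and you give no mechanism by which three of the four sums collapse. The paper's authors in fact remark that even within the WZ framework the direct recurrence in $d$ for general $n$ eluded them; a by-hand hypergeometric closure is at least as hard. Your $d=1$ and $d=\eta$ checks are fine but do not probe the difficulty, since in both cases the inner $j$-sum degenerates to a single term and the obstruction never appears.
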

\begin{proof}
From \ref{eq:symmetric_difference_elementary_polynomials}, $\tilde{n}_{d}=\sum_{j=0}^{d}(-1)^{j}\frac{(\eta-d+j)!}{(\eta-d)!}\frac{(n-\eta-j)!}{(n-\eta-d)!}e_{d-j}(\hat{n}_{1},\cdots,\hat{n}_{\eta})e_{j}(\hat{n}_{\eta+1},\cdots,\hat{n}_{n})$.
Hence, the trace of its square
\begin{align}
\tr{\eta}{\tilde{n}_{d}^{2}} & =\frac{\tr{\eta}{\left(\sum_{j=0}^{d}(-1)^{j}(\eta-d+j)!(n-\eta-j)!e_{d-j}(\hat{n}_{1},\cdots,\hat{n}_{\eta})e_{j}(\hat{n}_{\eta+1},\cdots,\hat{n}_{n})\right)^{2}}}{(\eta-d)!^{2}(n-\eta-d)!^{2}}\nonumber \\
 & =\sum_{j,k=0}^{d}\frac{(-1)^{j+k}(\eta-d+j)!(n-\eta-j)!(\eta-d+k)!(n-\eta-k)!}{(\eta-d)!^{2}(n-\eta-d)!^{2}}t_{\eta,d,j,k},\label{eq:trace_of_symmetric_difference_squared}\\
t_{\eta,d,j,k} & =\tr{\eta}{e_{d-j}(\hat{n}_{1},\cdots,\hat{n}_{\eta})e_{j}(\hat{n}_{\eta+1},\cdots,\hat{n}_{n})e_{d-k}(\hat{n}_{1},\cdots,\hat{n}_{\eta})e_{k}(\hat{n}_{\eta+1},\cdots,\hat{n}_{n})}.
\end{align}
Let $\ket{\eta,s}$ be any state with $s$ fermions supported on the
number operators $\hat{n}_{\eta+1},\cdots,\hat{n}_{n}$, that is $\sum_{j\in[n]\backslash[\eta]}\hat{n}_{j}\ket{\eta,s}=s\ket{\eta,s}$
and $\sum_{j\in[\eta]}\hat{n}_{j}\ket{\eta,s}=\eta-s\ket{\eta,s}$.
Note that there are $\binom{\eta}{s}\binom{n-\eta}{s}$ such states.
Observe that
\begin{equation}
e_{j}(\hat{n}_{1},\cdots,\hat{n}_{\eta})\ket{\eta,s}=\binom{\eta-s}{j}\ket{\eta,s},\quad e_{k}(\hat{n}_{\eta+1},\cdots,\hat{n}_{n})\ket{\eta,s}=\binom{s}{k}\ket{\eta,s}.
\end{equation}
Hence the trace 
\begin{align}
t_{\eta,d,j,k} & =\sum_{s=0}^{\min(\eta,n-\eta)}\binom{\eta}{s}\binom{n-\eta}{s}\bra{\eta,s}e_{d-j}(\hat{n}_{1},\cdots,\hat{n}_{\eta})e_{j}(\hat{n}_{\eta+1},\cdots,\hat{n}_{n})\cdots\ket{\eta,s}\nonumber \\
 & =\sum_{s=0}^{\min(\eta,n-\eta)}\binom{\eta}{s}\binom{n-\eta}{s}\binom{s}{j}\binom{s}{k}\binom{\eta-s}{d-j}\binom{\eta-s}{d-k}.
\end{align}
Substituting into \ref{eq:trace_of_symmetric_difference_squared},
and noting that the summand is zero when $d-j>\eta-s$ due to the
term $\binom{\eta-s}{d-j}$ and similarly for $\binom{\eta-s}{d-k}$,
we obtain the sum 
\begin{align}
\tr{\eta}{\tilde{n}_{d}^{2}} & =\frac{\sum_{s=0}^{\min(\eta,n-\eta)}\sum_{j,k=\max(0,s+d-\eta)}^{\min(d,s)}F_{n,\eta,d}(s,j,k)}{(\eta-d)!^{2}(n-\eta-d)!^{2}},
\end{align}
where the summand
\begin{align}
F_{n,\eta,d}(s,j,k) & =(-1)^{j+k}(\eta-d+j)!(n-\eta-j)!(\eta-d+k)!(n-\eta-k)!\nonumber \\
 & \qquad\times\binom{\eta}{s}\binom{n-\eta}{s}\binom{s}{j}\binom{s}{k}\binom{\eta-s}{d-j}\binom{\eta-s}{d-k}.
\end{align}
By recognizing the double sum over $j,k$ as the square of a sum,
\begin{align}
\tr{\eta}{\tilde{n}_{d}^{2}} & =\frac{\eta!(n-\eta)!(\eta-s)!\sum_{s=0}^{\min(\eta,n-\eta)}\left(\sum_{j=\max(0,s+d-\eta)}^{\min(d,s)}f_{n,\eta,d}(s,j)\right)^{2}}{(n-\eta-s)!(\eta-d)!^{2}(n-\eta-d)!^{2}},\\
f_{n,\eta,d}(s,j) & =(-1)^{j}\frac{(\eta-d+j)!(n-\eta-j)!}{j!(s-j)!(d-j)!(\eta-s-d+j)!}.
\end{align}
The above sum holds for all non-negative integers satisfying $\eta\le n,d\le\min(\eta,n-\eta)$.
This proves the first equality in \ref{eq:trace_of_symmetric_difference_squared}. 

We find it convenient to define
\begin{align}
F_{n,\eta,d}(j,k) & =\sum_{s=0}^{\min(\eta,n-\eta)}F_{n,\eta,d}(s,j,k),\\
F_{n,\eta,d}(k) & =\sum_{j=\max(0,s+d-\eta)}^{\min(d,s)}F_{n,\eta,d}(j,k),\\
F_{n,\eta,d} & =\sum_{k=\max(0,s+d-\eta)}^{\min(d,s)}F_{n,\eta,d}(k).
\end{align}
For example, the sum simplifies when $d=0$ as only the summand is
non-zero only when $j=k=0$. Hence
\begin{align}
F_{n,\eta,0} & =\sum_{s=0}^{\min(\eta,n-\eta)}F_{n,\eta,d}(s,0,0)=(\eta)!^{2}(n-\eta)!^{2}\sum_{s}\binom{\eta}{s}\binom{n-\eta}{s}\nonumber \\
 & =(\eta)!^{2}(n-\eta)!\binom{n}{n-\eta}=(\eta)!^{2}(n-\eta)!n!,
\end{align}
where we apply the Chu-Vandermonde identity $\sum_{j=0}^{k}\binom{m}{j}\binom{n-m}{k-j}=\binom{n}{k}$
for any complex $m,n$ and integer $k\ge0$. 

Our proof strategy is to first find a linear recurrence in $d$ satisfied
by $F_{2\eta,\eta,d}$ with $F_{2\eta,\eta,0}$ as the initial condition,
and second to find a linear recurrence in $n$ satisfied by $F_{n,\eta,d}$
with $F_{2\eta,\eta,d}$ as the initial condition. A more direct proof
would find a linear recurrence in $d$ satisfied by $F_{n,\eta,d}$,
but we were unable to do so in reasonable time. Let the shift operator
on the variable $x$ be $S_{x}f(x)=f(x+1)$. From the definition of
the summand, it is straightforward to verify that it is annihilated
like $Q_{i}(s,j,k)F_{n,\eta,d}(s,j,k)=0$ by the difference operators
\begin{align}
Q(s,j,k)=\{ & (j-s-1)(-k+s+1)(s-\eta)S_{s}-(n-\eta-s)(d-\eta-j+s)(d-\eta-k+s),\nonumber \\
 & (k+1)(\eta+k-n)(-d+\eta+k-s+1)S_{k}-(d-k)(k-s)(d-\eta-k-1),\nonumber \\
 & (d-j+1)(d-k+1)(d-\eta-j)(d-\eta-k)S_{d}-(d-\eta-j+s)(d-\eta-k+s),\nonumber \\
 & (j+1)(\eta+j-n)(\eta-d+j-s+1)S_{j}-(d-j)(j-s)(d-\eta-j-1),\nonumber \\
 & (n-\eta-s+1)S_{n}-(n-\eta+1)(\eta+j-n-1)(\eta+k-n-1)\}.\label{eq:Wilf-Zeilberger_0}
\end{align}

The amazing method of creative telescoping by Wilf and Zeilberger
\citep{petkovvsek1996b} guarantees the existence of $s$-free operators
$Q_{i}(j,k)$ that do not depend on $s$ and some certificate $R_{i}(s,j,k)$
that annihilate this proper hypergeometric summand according to
\begin{equation}
\left[Q_{i}(j,k)+(S_{s}-1)R_{i}(s,j,k)\right]F_{n,\eta,d}(s,j,k)=0.\label{eq:Wilf-Zeilberger_recurrence}
\end{equation}
The $s$-free property of $Q$ allows us to perform a sum over $s\in[s_{0},s_{1}]$.
\begin{align}
\sum_{s=s_{0}}^{s_{1}}Q_{i}(j,k)F_{n,\eta,d}(s,j,k) & =\sum_{s=s_{0}}^{s_{1}}(S_{s}-1)R_{i}(s,j,k)F_{n,\eta,d}(s,j,k),\label{eq:Wilf-Zeilberger_1}\\
\Rightarrow Q_{i}(j,k)\sum_{s=s_{0}}^{s_{1}}F_{n,\eta,d}(s,j,k) & =R_{i}(s_{1},j,k)F_{n,\eta,d}(s_{1},j,k)-R_{i}(s_{0},j,k)F_{n,\eta,d}(s_{0},j,k).
\end{align}
Hence $Q$ defines the linear recurrence that the sum $\sum_{s=s_{0}}^{s_{1}}F_{n,\eta,d}(s,j,k)$
satisfies. A key insight in evaluating the right-hand side is that
the summand has compact support. Using the Euler's reflection formula
$(-z)!=\frac{\pi}{(z-1)!\sin(\pi z)}$ and canceling poles when $z$
approaches an integer, observe that
\begin{equation}
F_{n,\eta,d}(s,j,k)=\frac{(-1)^{j+k}\eta!(n-\eta)!(\eta-s)!(\eta-d+j)!(\eta-d+k)!(n-\eta-j)!(n-\eta-k)!}{j!k!(d-j)!(d-k)!(s-j)!(s-k)!(n-\eta-s)!(\eta-d+j-s)!(\eta-d+k-s)!},
\end{equation}
 is zero whenever the number of factorials with negative arguments
in the denominator is greater than that in the numerator. Hence we
may change the summation limits to, for instance, 
\begin{equation}
F_{n,\eta,d}(j,k)=\sum_{s=0}^{\min(\eta,n-\eta)}F_{n,\eta,d}(s,j,k)=\sum_{s=-1}^{\min(\eta,n-\eta)+1}F_{n,\eta,d}(s,j,k).
\end{equation}
This allows the telescoping sum $\sum_{s=-1}^{\min(\eta,n-\eta)}(S_{s}-1)R_{i}(s,j,k)F_{n,\eta,d}(s,j,k)=0$
due to the natural boundaries of $F_{n,\eta,d}(s,j,k)$. Thus $Q_{i}(j,k)$
defines a recurrence satisfied by $F_{n,\eta,d}(j,k)$. By recursing
the procedure and finding $(s,j)$-free $Q_{i}(k)$ and then $(s,j,k)$-free
$Q_{i}$, we then obtain the desired recurrence for $F_{n,\eta,d}$
as follows

\begin{align}
\sum_{k}\left[Q_{i}(k)+(S_{j}-1)R_{i}(j,k)\right]F_{n,\eta,d}(j,k) & =Q_{i}(j,k)F_{n,\eta,d}(j)=0,\\
\sum_{j}\left[Q_{i}+(S_{k}-1)R_{i}(k)\right]F_{n,\eta,d}(j) & =Q_{i}(j,k)F_{n,\eta,d}=0,
\end{align}

At each iteration $x$ of the recursion, the operators $R(x,\cdots)$
are also called certificates as the correctness of $Q(\cdots)$ may
be readily verified by reducing each equation $\left[Q_{i}(\cdots)+(S_{x}-1)R_{i}(x,\cdots)\right]$
with respect to the annihilators $Q_{i}(x,\cdots)$ of the preceding
iteration. For instance, reducing \ref{eq:Wilf-Zeilberger_recurrence}
with respect to the annihilators \ref{eq:Wilf-Zeilberger_0} may be
done by hand, through verification of later iterations should be done
by computer. 

We now state the recurrences $\left[\tilde{Q}_{i}(\cdots)+(S_{x}-1)\tilde{R}_{i}(s,\cdots)\right]F_{2\eta,\eta,d}(x,\cdots)=0$
with respect to $d,s,j,k$. These were all computed in Mathematica
by the HolonomicFunctions package \citep{kouschan2014}. The $s$-free
operators and certificates are
\begin{align}
\tilde{Q}(j,k)=\{ & (d-k)(d-\eta-k-1)(d-\eta-j+k)-(k+1)(k-\eta)(\eta-d-j+k+1)S_{k},\nonumber \\
 & (d-j)(d-\eta-j-1)(d-\eta+j-k)-(j+1)(j-\eta)(\eta-d+j-k+1)S_{j},\nonumber \\
 & (d-\eta)(d-\eta+j-k)(d-\eta-j+k)\nonumber \\
 & \quad-2(2d-2\eta+1)(d-j+1)(d-k+1)(d-\eta-j)(d-\eta-k)S_{d}\},\\
\tilde{R}(s,j,k)=\{ & \frac{(d-k)(s-j)(k-s)(d-\eta-k-1)}{d-\eta-k+s-1},\frac{(d-j)(j-s)(s-k)(d-\eta-j-1)}{d-\eta-j+s-1},\nonumber \\
 & (j-s)(s-k)(3\eta-3d+j+k-2s)\}.
\end{align}
The $(s,j)$-free operators and certificates are
\begin{align}
\tilde{Q}(k)=\big\{ & (2+k)(1+k-\eta)S_{k}^{2}+\left(-2+2(-2+d)k-2k^{2}+d(3-d+\eta)\right)S_{k}\nonumber \\
 & \quad+(-d+k)(1-d+k+\eta),\nonumber \\
 & 2(1+d)(-1-d+k)(-1+d-2\eta)(1+2d-2\eta)(-d+k+\eta)S_{d}\nonumber \\
 & \quad+(1+k)(k-\eta)(-1-3d+2k+2\eta)S_{k}-\eta\left(8d^{2}-9dk+d+4k^{2}+k-1\right)\nonumber \\
 & \quad+(2d-k)\left(-3dk+d(2d-1)+2k^{2}+k-1\right)+\eta^{2}(4d-2k+2)\big\},
\end{align}
\begin{align}
\tilde{R}(k)=\big\{ & \frac{j(-d+k)(-1+2d-2\eta)(-1+j-\eta)(d-j+k-\eta)(1-d+k+\eta)}{(1+k)(k-\eta)(1-d-j+k+\eta)(2-d-j+k+\eta)},\nonumber \\
 & \left((-1-d+j)(-1+d+j-k-\eta)(-d+j+\eta)\right)^{-1}j(-1+j-\eta)(-d+j-k+\eta)\nonumber \\
 & \quad\times\big(\left(-2+d(5+17d)+j-5dj+2(-2-7d+2j)k+2k^{2}\right)\eta\nonumber \\
 & \quad+(-2d+k)\left(-1+d+4d^{2}-2dj+(-1-3d+2j)k\right)\nonumber \\
 & \quad+(-3-11d+2j+4k)\eta^{2}+2\eta^{3}\big)\big\}.
\end{align}
The $(s,j,k)$-free operators and certificates are
\begin{align}
\tilde{Q}=\{ & (2d-2\eta-1)+(d+1)(d-2\eta-1)(2d-2\eta+1)S_{d}\},\\
\tilde{R}(k)=\big\{ & \frac{(1+3d-2k-2\eta)\left((1+k)(k-\eta)S_{k}+\left(-2dk+2k^{2}+d(-1+d-\eta)\right)\right)}{2(-1-d+k)(-d+k+\eta)}\}.
\end{align}

Hence, $\tilde{Q}$ defines the recurrence satisfied by
\begin{align}
F_{2\eta,\eta,d} & =\frac{(\eta-d+3/2)}{d(2\eta-d+2)(\eta-d+1/2)}F_{2\eta,\eta,d-1}\nonumber \\
 & =\frac{(\eta+1/2)!}{d!(\eta-d+1/2)!}\frac{(2\eta-d+1)!}{(2\eta+1)!}\frac{(\eta-d-1/2)!}{(\eta-1/2)!}F_{2\eta,\eta,0}\nonumber \\
 & =\frac{(2\eta-d+1)!}{d!(2\eta-2d+1)(2\eta)!}F_{2\eta,\eta,0}\nonumber \\
 & =\frac{\eta!^{2}(2\eta-d+1)!}{d!(2\eta-2d+1)}.
\end{align}
We now state the recurrences $\left[Q_{i}(\cdots)+(S_{x}-1)R_{i}(s,\cdots)\right]F_{n,\eta,d}(x,\cdots)=0$
with respect to $n,s,j,k$ and their certificates. The $s$-free operators
and certificates are
\begin{align}
 & Q(j,k)=\{\nonumber \\
 & (\eta-n-1)(d-\eta-k-1)\left(d^{2}-dj-\eta(3d+k-2(n+1))+dn+jk-k^{2}+kn-n^{2}-n\right)\nonumber \\
 & \quad+(k+1)(n-\eta+1)(\eta+k-n)(\eta-d-j+k+1)S_{k}+(n-2\eta)(\eta-d+k+1)S_{n},\nonumber \\
 & (\eta-n-1)(d-\eta-j-1)\left(d^{2}+d(-3\eta-k+n)-j^{2}+j(-\eta+k+n)-(n+1)(n-2\eta)\right)\nonumber \\
 & \quad+(j+1)(n-\eta+1)(\eta+j-n)(\eta-d+j-k+1)S_{j}+(n-2\eta)(\eta-d+j+1)S_{n},\nonumber \\
 & (n-\eta+2)\left(d^{2}-d(4\eta+j+k-2n-2)-\eta^{2}-\eta(j+k-4n-6)+(n+1)(j+k-2n-3)\right)S_{n}\nonumber \\
 & \quad-(2d-n-1)(n-\eta+1)(n-\eta+2)(\eta+j-n-1)(\eta+k-n-1)+(n-2\eta+1)S_{n}^{2}\},
\end{align}
\begin{align}
R(s,j,k)=\{ & \frac{(j-s)(k-s)(-\eta+n+1)(s-1-\eta)(-d+\eta+k+1)(d-2\eta-k+n)}{(-\eta+n-s+1)(d-\eta-k+s-1)},\nonumber \\
 & \frac{(j-s)(s-k)(-\eta+n+1)(s-1-\eta)(-d+\eta+j+1)(-d+2\eta+j-n)}{(\eta-n+s-1)(-d+\eta+j-s+1)},\nonumber \\
\text{} & \frac{(j-s)(k-s)(\eta-n-2)(\eta-n-1)(s-1-\eta)(\eta+j-n-1)(\eta+k-n-1)}{(\eta-n+s-2)(\eta-n+s-1)}\}.
\end{align}
The $(s,j)$-free operators and certificates are

\begin{align}
Q(k)=\{ & (-\eta+n+1)\left(d^{2}+d(\eta-2n-3)+k^{2}+k(\eta-n)+(n+1)(-\eta+n+1)\right)\nonumber \\
 & \quad+(d+\eta-n-1)S_{n}-((k+1)(-\eta+n+1)(\eta+k-n))S_{k},\nonumber \\
 & (-\eta+n+2)\left(d^{2}+d(3\eta+2k-4n-7)+(n+2)(-2\eta-k+2n+3)\right)S_{n}\nonumber \\
 & \quad+(d-n-2)(2d-n-1)(n-\eta+1)(n-\eta+2)(\eta+k-n-1)\nonumber \\
 & \quad+(d+\eta-n-2)S_{n}^{2}\},
\end{align}
\begin{align}
R(j,k)=\{ & \frac{j(2d-n-1)(-\eta+n+1)(\eta+j-n-1)}{d-\eta+j-k-1}+\frac{(j(d+\eta+j-k-n-1))}{d-\eta+j-k-1}S_{n},\nonumber \\
 & (n-2\eta+1)^{-1}\big[j(n-2d+1)(-\eta+n+1)(-\eta+n+2)(\eta+j-n-1)(-\eta-k+n+1)\nonumber \\
 & \quad+\left(d^{2}-d(2\eta+j+k-n-1)+\eta^{2}-\eta(k-j+2n+2)-(n+1)(j-n-1)\right)\big]\nonumber \\
 & \quad\times j(\eta-n-2)S_{n}\big\}.
\end{align}
The $(s,j,k)$-free operators and certificates are

\begin{align}
Q & =\{(d-n-2)(2d-n-1)(-\eta+n+1)+(2d-n-2)S_{n}\},\quad R(k)=kS_{n}.
\end{align}
Hence, $Q$ defines the recurrence satisfied by 
\begin{align}
F_{n,\eta,d} & =\frac{(n-d+1)(n-2d)(n-\eta)}{(n-2d+1)}F_{n-1,\eta,d}\nonumber \\
 & =\frac{(n-d+1)!}{(2\eta-d+1)!}\frac{(n-2d)!}{(2\eta-2d)!}\frac{(n-\eta)!}{\eta!}\frac{(2\eta-2d+1)!}{(n-2d+1)!}F_{2\eta,\eta,d}\nonumber \\
 & =\frac{(n-d+1)!}{(2\eta-d+1)!}\frac{(n-\eta)!}{\eta!}\frac{(2\eta-2d+1)}{(n-2d+1)}F_{2\eta,\eta,d}\nonumber \\
 & =\frac{(n-d+1)!}{(2\eta-d+1)!}\frac{(n-\eta)!}{\eta!}\frac{(2\eta-2d+1)}{(n-2d+1)}\frac{\eta!^{2}(2\eta-d+1)!}{d!(2\eta-2d+1)}\nonumber \\
 & =\frac{\eta!(n-\eta)!}{d!}\frac{(n-d+1)!}{(n-2d+1)}.
\end{align}
We complete the proof by dividing $\tr{\eta}{\tilde{n}_{d}^{2}}=\frac{\sum_{s,j,k}F_{n,\eta,d,s,j,k}}{(\eta-d)!^{2}(n-\eta-d)!^{2}}.$
\end{proof}

We now apply the same technique to evaluate the estimation matrix
entries $E_{\eta,k,s}$ from \ref{eq:estimation_matrix_entries}.
In the following, the sum $t_{n,\eta,k,s}=E_{\eta,k,s}$ following
a change of variables $d'\rightarrow d-d'$.
\begin{lem}
\label{lem:estimation_matrix_values}The sum 
\begin{equation}
t_{n,\eta,k,s}=\sum_{d=0}^{\eta}\sum_{d'=0}^{d}\sum_{x''=0}^{k-s}\sum_{y''=0}^{s}F_{n,\eta,k,s}(d,d',x'',y'')=(-1)^{s}\binom{k}{s}^{-1}\binom{\eta-k+s}{s}\binom{n-\eta+k-s}{k-s},
\end{equation}
where the summand
\begin{align}
F_{n,\eta,k,s}(d,d',x'',y'') & =a_{d}\binom{n+1}{d}(-1)^{d-d'}\frac{(\eta-d')!}{(\eta-d)!}\frac{(n-\eta-d+d')!}{(n-\eta-d)!}\binom{k-s}{x''}\binom{\eta-k+s}{d'-x''}\binom{s}{y''}\nonumber \\
 & \qquad\times\binom{n-\eta-s}{d-d'-y''}\binom{n-\left(d+k-x''-y''\right)}{n-\eta}.
\end{align}
\end{lem}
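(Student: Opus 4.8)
The plan is to follow the creative-telescoping strategy already used in the proof of \ref{lem:trace_symmetric_difference_squared}. The summand $F_{n,\eta,k,s}(d,d',x'',y'')$ is, up to the prefactor $a_d\binom{n+1}{d}$, a product of binomial coefficients and factorials, hence a proper hypergeometric term in all four summation indices and in the parameters $n,\eta,k,s$. Zeilberger's algorithm \citep{petkovvsek1996b} therefore produces, for each summation variable in turn, an operator annihilating the corresponding partial sum together with a rational certificate; in practice these are computed with a package such as \texttt{HolonomicFunctions} \citep{kouschan2014}. I would first simplify the prefactor: the explicit factorials in $a_d=\frac{(n-2d+1)(n-d-\eta)!(\eta-d)!}{(n-d+1)!}$ cancel the $\frac{1}{(\eta-d)!}$ and $\frac{1}{(n-\eta-d)!}$ appearing in the summand, which makes the subsequent $d$-elimination tractable.

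Next I would eliminate the summation variables one at a time, in the order $x''$, then $y''$, then $d'$, then $d$. At each stage creative telescoping gives an operator $Q$ free of the variable being summed and a certificate $R$ with $[\,Q+(S-1)R\,]F=0$; summing over that variable collapses the telescoping term because $F$ has compact support. As in \ref{lem:trace_symmetric_difference_squared}, one sees this by writing all binomials as ratios of factorials, applying Euler's reflection formula, and noting that $F$ vanishes identically once any binomial argument leaves its natural range, so the summation limits may be extended by one on each side and the telescoping sum is genuinely empty. After four rounds this yields an operator free of $d,d',x'',y''$, i.e.\ a recurrence in $n,\eta,k,s$ satisfied by $t_{n,\eta,k,s}$. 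One then checks that the proposed closed form $(-1)^{s}\binom{k}{s}^{-1}\binom{\eta-k+s}{s}\binom{n-\eta+k-s}{k-s}$ is annihilated by this recurrence and matches it at a base case, for instance $s=0$, where $y''\equiv 0$ and the $x''$-sum reduces to $\binom{n-\eta+k}{k}$ by Chu--Vandermonde, or $k-s=0$, where $x''\equiv 0$ and the claimed value is $(-1)^{k}\binom{\eta}{k}$. Correctness of each intermediate $Q$ is verified by reducing $Q+(S-1)R$ against the annihilators of the previous stage, a routine computer-algebra check.

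The main obstacle is bookkeeping rather than anything conceptual: with four nested sums (one more than in \ref{lem:trace_symmetric_difference_squared}) the elimination order matters, since eliminating $x''$ and $y''$ before $d'$ and $d$ keeps the intermediate recurrence operators small enough to compute and present, whereas other orders blow up. The one place genuine care is needed is the compact-support argument: at every stage one must confirm that extending the summation bounds introduces only terms that are identically zero (because enough factorials in the denominator acquire negative integer arguments), so that no boundary contribution survives the telescoping. Once that is checked uniformly, the rest is mechanical.
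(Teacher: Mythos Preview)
Your overall strategy matches the paper's: Zeilberger creative telescoping with certificates (computed in \texttt{HolonomicFunctions}), successive elimination of the four summation indices using the compact support of $F$ to kill boundary terms, and then solving a first-order recurrence for $t_{n,\eta,k,s}$. The paper's elimination order is $y'',\,d',\,x'',\,d$ rather than your $x'',\,y'',\,d',\,d$, but that is a matter of taste; both orders keep the intermediate operators manageable.

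The one real gap is your treatment of the base case. You assert that at $s=0$ the $x''$-sum collapses by Chu--Vandermonde to $\binom{n-\eta+k}{k}$, but this is not so: the summand still carries the factor $\binom{n-(d+k-x'')}{n-\eta}$, which depends on $x''$, so the inner sum is not of Vandermonde type, and after it you would still face a nontrivial double sum in $d,d'$. The same objection applies to your alternate base $k=s$. The paper handles this by a two-stage argument: the only \emph{trivially} computable value is $t_{n,\eta,0,0}=1$, so it first runs a separate three-variable telescoping (over $d',x'',d$ with $s=0$, $y''=0$) to produce a first-order recurrence in $k$, yielding $t_{n,\eta,k,0}=\binom{n-\eta+k}{k}$; only then does it run the four-variable telescoping to produce a first-order recurrence in $s$ and step from $(k,0)$ to $(k,s)$. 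Your plan needs either this extra $k$-recurrence step or some other genuine argument to pin down the initial value; Chu--Vandermonde alone will not supply it.
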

\begin{proof}
As $F_{n,\eta,k,s}(d,d',x'',y'')$ is zero outside the domain of summation
in $\sum_{d=0}^{\eta}\sum_{d'=0}^{d}\sum_{x''=0}^{k-s}\sum_{y''=0}^{s}\cdots$,
we may replace the summation limits with a sum over all integers
\begin{equation}
t_{n,\eta,k,s}\doteq\sum_{d}\sum_{d'}\sum_{x''}\sum_{y''}F_{s}(d,d',x'',y'').
\end{equation}
We find it convenient to define
\begin{equation}
t_{n,\eta,k,s}=\sum_{d,c,a,b'}F_{n,\eta,k,s}(d,c,a,b).
\end{equation}
The sum simplifies when $k=s=0$ to 
\begin{equation}
t_{n,\eta,0,0}=t_{n,\eta,0,0}(0,0,0,0)=1.
\end{equation}
Our proof strategy is to first find a linear recurrence in $k$ satisfied
by $t_{n,\eta,k,0}$ with $t_{n,\eta,0,0}$ as the initial condition,
and second to find a linear recurrence in $s$ satisfied by $t_{n,\eta,k,s}$
with $t_{n,\eta,k,0}$ as the initial condition. As with the proof
of \ref{lem:trace_symmetric_difference_squared}, we exhibit a sequence
of $y''$-free, $x''$-free, $d'$-free, and finally $d$-free operators
$Q$ that annihilate the summand $QF_{n,\eta,k,s}(d,d',x'',y'')$
together with certificates $R$ that verify their correctness. 

When $s=0$, $t_{n,\eta,k,0}=\sum_{d,d',x''}F_{n,\eta,k,s}(d,d',x'',0)$
is a triple sum. We now state the recurrences $\left[\tilde{Q}_{i}(\cdots)+(S_{x}-1)\tilde{R}_{i}(x,\cdots)\right]F_{n,\eta,k,s}(x,\cdots,0)=0$
with respect to $k,d,d',x''$ and their certificates. The $d'$-free
operators and certificates are
\begin{align}
Q(d,a)=\big\{ & (-1+d-k)(a-d-k+n)S_{k}+(1+k)(a-d-k+\eta),\nonumber \\
 & (1+a)(1+a-d-k+\eta)S_{a}-(a-d)(1+a-d-k+n),\nonumber \\
 & \quad-(-1+a-d)(1+d)(-1+2d-n)(a-d-k+n)(d-\eta)S_{d}\nonumber \\
 & \quad+(d-k)(1+2d-n)(1-d+n)^{2}(a-d-k+\eta)\big\},\\
R(d,c,a)=\big\{ & \frac{(a-c)(1+k)(-1+c-\eta)(a-d-k+\eta)}{(-1+a-k)(k-\eta)},\nonumber \\
 & \frac{(a-c)(1+a-d-k+n)(-1+c-\eta)}{1+a-c-k+\eta},\nonumber \\
 & \frac{(a-c)(1+2d-n)(1-d+n)^{2}(-1+c-\eta)(-a+d+k-\eta)}{-1+c-d}\big\}.
\end{align}
The $(d',x'')$-free operators and certificates are
\begin{align}
Q(d)=\big\{ & (-1+d-k)(d+k-n)S_{k}+(1+k)(k-\eta),\nonumber \\
 & (1+d)^{2}(1-2d+n)(d+k-n)(d-\eta)S_{d}+(d-k)(1+2d-n)(1-d+n)^{2}(d-n+\eta)\Big\},\\
R(d,a)=\big\{ & \frac{a(1+k)(a-d-k+\eta)}{-a+d+k-n},\nonumber \\
 & \frac{a(d-k)(1+2d-n)(1-d+n)^{2}(-1+a-2d-k+n)(a-d-k+\eta)}{(-1+a-d)(a-d-k+n)}\big\}.
\end{align}
The $(d,d',x'')$-free operators and certificates are
\begin{align}
Q=\big\{ & (-1-k)S_{k}+(1+k+n-\eta)\big\},\\
R(d)=\big\{ & \frac{d^{2}(-1+d-\eta)}{(-1+d-k)(-1+2d-n)}\big\}.
\end{align}
Hence, we solve the first order recurrence defined by $Qt_{n,\eta,k,0}$
to obtain 
\begin{align}
t_{n,\eta,k,0} & =\frac{n-\eta+k}{k}t_{n,\eta,k-1,0}=\frac{(n-\eta+k)!}{k!(n-\eta)}t_{n,\eta,0,0}=\frac{(n-\eta+k)!}{k!(n-\eta)}.
\end{align}

For the case $s>0$, we now state the recurrences $\left[\tilde{Q}_{i}(\cdots)+(S_{x}-1)\tilde{R}_{i}(x,\cdots)\right]F_{n,\eta,k,s}(x,\cdots,0)=0$
with respect to $s,d,d',x'',y''$ and their certificates. The $y''$-free
operators and certificates are
\begin{align}
Q(d,c,a)=\big\{ & (a-d-k+n)(1-a+k-s)(k-s-\eta)S_{k}\nonumber \\
 & \quad+(1+k-s)(a-c-k+\eta)(a-d-k+s+\eta),\nonumber \\
 & (k-s)(-1-a+d+k-s-\eta)S_{s}+(a-k+s)(-1+k-s-\eta),\nonumber \\
 & (-1+c-d)(1+d)(-1+2d-n)(a-d-k+n)S_{d}\nonumber \\
 & \quad-(1+2d-n)(1-d+n)^{2}(a-d-k+s+\eta),\nonumber \\
 & (-1+a-c)(c-\eta)S_{c}+(c-d)(-a+c+k-\eta),\nonumber \\
 & (1+a)(1+a-c-k+\eta)(1+a-d-k+s,+\eta)S_{a}\nonumber \\
 & \quad-(a-c)(1+a-d-k+n)(a-k+s)\big\},\\
R(d,c,a,b)=\big\{ & -\frac{b(k-s+1)(a+b-d+\eta-k)(b+c-d-\eta+n-s)}{a+b-d-k+n},\nonumber \\
 & \frac{b(a-k+s)(\eta-k+s+1)(a+b-d+\eta-k)(b+c-d-\eta+n-s)}{(b-s-1)(\eta-n+s)(a-c+\eta-k+s+1)},\nonumber \\
 & \left((b+c-d-1)(c-d-\eta+n)(a+b-d-k+n)\right)^{-1}b(2d-n+1)(-d+n+1)^{2}\nonumber \\
 & \quad\times(a+b-d+\eta-k)(a+b+c-2d-k+n-1)(-b-c+d+\eta-n+s),\nonumber \\
 & b(a+b-d+\eta-k),\nonumber \\
 & \frac{b(c-a)(a-k+s)(b+c-d-\eta+n-s)}{a-c+\eta-k+s+1}\big\}.
\end{align}
The $(d',y'')$-free operators and certificates are
\begin{align}
Q(d,a)=\big\{ & (1-d+k)(a-d-k+n)(1-a+k-s)(k-s-\eta)S_{k}\nonumber \\
 & \quad+(a-1-k)(1+k-s)(k-\eta)(a-d-k+s+\eta),\nonumber \\
 & (-1+a-d)(1+d)(-1+2d-n)(a-d-k+n)(d-\eta)S_{d}\nonumber \\
 & \quad-(d-k)(1+2d-n)(1-d+n)^{2}(a-d-k+s+\eta),\nonumber \\
 & (1+a)(k-a)(1+a-d-k+s+\eta)S_{a}+(a-d)(1+a-d-k+n)(a-k+s),\nonumber \\
 & (s-k)(-1-a+d+k-s-\eta)S_{s}-(a-k+s)(-1+k-s-\eta)\big\},\\
R(d,c,a)=\big\{ & (a-c)(1+k-s)(-1+c-\eta)(a-d-k+s+\eta)\nonumber \\
 & \frac{(a-c)(1+2d-n)(1-d+n)^{2}(-1+c-\eta)(a-d-k+s+\eta)}{-1+c-d},\nonumber \\
 & \frac{(a-c)(1+a-d-k+n)(a-k+s)(1-c+\eta)}{1+a-c-k+\eta},0\big\}.
\end{align}
The $(d',x'',y'')$-free operators and certificates are
\begin{align}
Q(d)=\big\{ & -(-1+d-n)(2d-n)(1+2d-n)(k-s)(n-s-\eta)S_{s}\nonumber \\
 & \quad+(1+d)^{2}(-1+2d-n)(d+k-n)(d-\eta)S_{d}\nonumber \\
 & \quad+\big(d^{3}+d^{2}(k-1-n-2s-\eta)+k(n+ns-\eta)-ns(1+s+\eta)\nonumber \\
 & \quad+d(2s+\eta+(n+2s)(s+\eta)-k(1+2s+\eta))\big)(-1+d-n)(1+2d-n),\nonumber \\
 & (1+d)(2+d)^{2}(1-2d+n)(2d-n)(1+d+k-n)(1+d-\eta)S_{d}^{2}\nonumber \\
 & \quad+\left(n(n(-1+k-s)+2s)+d^{2}(2k+n-4s-2\eta)-d(-1+n)(2k+n-4s-2\eta)+2n\eta-k(2+n)\eta\right)\nonumber \\
 & \quad\times(1+d)(n-d)(-1+2d-n)(3+2d-n)S_{d}\nonumber \\
 & \quad+(d-k)(d-n)(2+2d-n)(3+2d-n)(1-d+n)^{2}(d-n+\eta)\big\},\\
R(d,c)=\big\{ & \left((-1+a-d)(a-d-k+n)\right)^{-1}a(-1+a-k)(-1+d-n)(1+2d-n)\big[-d^{3}+k^{2}-kn+k^{2}n+n^{2}\nonumber \\
 & \quad-kn^{2}-ks-kns+n^{2}s-\left(k+kn-n^{2}\right)\eta+d^{2}(1-2k+3s+3\eta)+a\big(d^{2}-k(1+n)\nonumber \\
 & \quad+d(-1+k-2s-2\eta)+n(1+s+\eta)\big)+d\left(-k^{2}+s+\eta-3n(1+s+\eta)+k(2+3n+s+\eta)\right)\big],\nonumber \\
 & \left((-2+a-d)(-1+a-d)(-1+a-d-k+n)(a-d-k+n)(d-\eta)\right)^{-1}a(-1+a-k)(d-k)(d-n)\nonumber \\
 & \quad\times(3+2d-n)(1-d+n)^{2}(a-d-k+s+\eta)\big[4d^{4}-2(-2+a)(-1+a-k)\eta\nonumber \\
 & \quad-d\big(a^{2}(-2+n)+3k^{2}n+(-1+n)(4+3n-8s+5ns)-k(4+n(-8+5n+3s))\nonumber \\
 & \quad+a(6+k(2-4n)-4s+n(-4+n+4s))\big)-2d\left(9+a^{2}-a(7+k-4n)+k(3-2n)+n(-11+3n)\right)\eta\nonumber \\
 & \quad+n^{3}(1-k+s+\eta)-2d^{3}(-6+3a-3k+2n+3s+4\eta)+(-2+a)n(-2s+(-5+a-k)\eta)\nonumber \\
 & \quad+n^{2}((3-a+k)(-1+k-s)-(5-2a+k)\eta)+d^{2}\big(2a^{2}+2k^{2}-k(-10+9n+2s+4\eta)\nonumber \\
 & \quad+a(-12-4k+5n+4s+8\eta)-2(-6+7s+11\eta)+n(-5+n+9s+12\eta)\big)\big]\big\}.
\end{align}
The $(d,d',x'',y'')$-free operators and certificates are
\begin{align}
Q=\big\{ & -(n-\eta+k-s)S_{s}-(\eta+s-k+1)\big\},\\
R(d)=\big\{ & -\frac{d(1+d)^{2}(d+k-n)(d-\eta)(-1+d+k-s-\eta)}{(-1+d-n)(2d-n)(1+2d-n)(k-s)(n-s-\eta)}S_{d}\\
 & -\left((-1+2d-n)(2d-n)(k-s)(n-s-\eta)\right)^{-1}d\nonumber \\
 & \times\big(d^{4}+d^{3}(-2+2k-n-3s-2\ \eta)+(1+n)(1-k+s+\eta)(ns+k(-n+\eta))\nonumber \\
 & +d^{2}\left(1+k^{2}+n+5s+3\eta+(s+\eta)(2n+4s+\eta)-k(3+n+5s+3\eta)\right)\nonumber \\
 & -d\left(k^{2}(1+2n-\eta)+(1+s+\eta)(2s+4ns+\eta+n\eta)+k(-1-3\ s+\eta(s+\eta)-3n(1+2s+\eta))\right)\big)\big\}.\nonumber 
\end{align}
Hence, we solve the first order recurrence defined by $Qt_{n,\eta,k,s}$
to obtain 
\begin{align}
t_{n,\eta,k,s} & =-\frac{(\eta+s-k)}{(n-\eta+k-s+1)}t_{n,\eta,k,s-1}\nonumber \\
 & =(-1)^{s}\frac{(\eta+s-k)!}{(\eta-k)!}\frac{(n-\eta+k-s)!}{(n-\eta+k)!}t_{n,\eta,k,0}\nonumber \\
 & =(-1)^{s}\frac{(\eta+s-k)!}{(\eta-k)!}\frac{(n-\eta+k-s)!}{(n-\eta+k)!}\frac{(n-\eta+k)!}{k!(n-\eta)!}\nonumber \\
 & =(-1)^{s}\binom{k}{s}^{-1}\binom{\eta+s-k}{s}\binom{n-\eta+k-s}{k-s}.
\end{align}
\end{proof}

\end{document}